\begin{document}

\theoremstyle{theorem}
\newtheorem{proposition}{Proposition}

\newenvironment{myclaim}
{
	\begin{flushleft}
	\textsf{\textbf{Claim:}}
}
{
	\end{flushleft}

}


\newcommand{\bra}[1]{\left\langle{#1}\right\vert}
\newcommand{\bit}[1]{{#1}}
\newcommand{\qw}[1][-1]{\ar @{-} [0,#1]}
\newcommand{\qwx}[1][-1]{\ar @{-} [#1,0]}
\newcommand{\cw}[1][-1]{\ar @{=} [0,#1]}
\newcommand{\cwx}[1][-1]{\ar @{=} [#1,0]}
\newcommand{\gate}[1]{*{\xy *+<.6em>{#1};p\save+LU;+RU **\dir{-}\restore\save+RU;+RD **\dir{-}\restore\save+RD;+LD **\dir{-}\restore\POS+LD;+LU **\dir{-}\endxy} \qw}
\newcommand{\meter}{\gate{\xy *!<0em,1.1em>h\cir<1.1em>{ur_dr},!U-<0em,.4em>;p+<.5em,.9em> **h\dir{-} \POS <-.6em,.4em> *{},<.6em,-.4em> *{} \endxy}}
\newcommand{\measure}[1]{*+[F-:<.9em>]{#1} \qw}
\newcommand{\measuretab}[1]{*{\xy *+<.6em>{#1};p\save+LU;+RU **\dir{-}\restore\save+RU;+RD **\dir{-}\restore\save+RD;+LD **\dir{-}\restore\save+LD;+LC-<.5em,0em> **\dir{-} \restore\POS+LU;+LC-<.5em,0em> **\dir{-} \endxy} \qw}
\newcommand{\measureD}[1]{*{\xy*+=+<.5em>{\vphantom{\rule{0em}{.1em}#1}}*\cir{r_l};p\save*!R{#1} \restore\save+UC;+UC-<.5em,0em>*!R{\hphantom{#1}}+L **\dir{-} \restore\save+DC;+DC-<.5em,0em>*!R{\hphantom{#1}}+L **\dir{-} \restore\POS+UC-<.5em,0em>*!R{\hphantom{#1}}+L;+DC-<.5em,0em>*!R{\hphantom{#1}}+L **\dir{-} \endxy} \qw}
\newcommand{\multimeasure}[2]{*+<1em,.9em>{\hphantom{#2}} \qw \POS[0,0].[#1,0];p !C *{#2},p \drop\frm<.9em>{-}}
\newcommand{\multimeasureD}[2]{*+<1em,.9em>{\hphantom{#2}}\save[0,0].[#1,0];p\save !C *{#2},p+LU+<0em,0em>;+RU+<-.8em,0em> **\dir{-}\restore\save +LD;+LU **\dir{-}\restore\save +LD;+RD-<.8em,0em> **\dir{-} \restore\save +RD+<0em,.8em>;+RU-<0em,.8em> **\dir{-} \restore \POS !UR*!UR{\cir<.9em>{r_d}};!DR*!DR{\cir<.9em>{d_l}}\restore \qw}
\newcommand{\control}{*!<0em,.025em>-=-{\bullet}}
\newcommand{\controld}{*!<0em,.025em>-=-{\blacktriangledown}}
\newcommand{\controlu}{*!<0em,.025em>-=-{\blacktriangle}}
\newcommand{\controlo}{*-<.21em,.21em>{\xy *=<.59em>!<0em,-.02em>[o][F]{}\POS!C\endxy}}
\newcommand{\ctrl}[1]{\control \qwx[#1] \qw}
\newcommand{\ctrld}[1]{\controld \qwx[#1] \qw}
\newcommand{\ctrlu}[1]{\controlu \qwx[#1] \qw}
\newcommand{\ctrlo}[1]{\controlo \qwx[#1] \qw}
\newcommand{\targ}{*!<0em,.019em>=<.79em,.68em>{\xy {<0em,0em>*{} \ar @{ - } +<.4em,0em> \ar @{ - } -<.4em,0em> \ar @{ - } +<0em,.36em> \ar @{ - } -<0em,.36em>},<0em,-.019em>*+<.8em>\frm{o}\endxy} \qw}
\newcommand{\qswap}{*=<0em>{\times} \qw}
\newcommand{\multigate}[2]{*+<1em,.9em>{\hphantom{#2}} \qw \POS[0,0].[#1,0];p !C *{#2},p \save+LU;+RU **\dir{-}\restore\save+RU;+RD **\dir{-}\restore\save+RD;+LD **\dir{-}\restore\save+LD;+LU **\dir{-}\restore}
\newcommand{\ghost}[1]{*+<1em,.9em>{\hphantom{#1}} \qw}
\newcommand{\push}[1]{*{#1}}
\newcommand{\gategroup}[6]{\POS"#1,#2"."#3,#2"."#1,#4"."#3,#4"!C*+<#5>\frm{#6}}
\newcommand{\rstick}[1]{*!L!<-.5em,0em>=<0em>{#1}}
\newcommand{\lstick}[1]{*!R!<.5em,0em>=<0em>{#1}}
\newcommand{\ustick}[1]{*!D!<0em,-.5em>=<0em>{#1}}
\newcommand{\dstick}[1]{*!U!<0em,.5em>=<0em>{#1}}
\newcommand{\Qcircuit}[1][0em]{\xymatrix @*[o] @*=<#1>}
\newcommand{\node}[2][]{{\begin{array}{c} \ _{#1}\  \\ {#2} \\ \ \end{array}}\drop\frm{o} }
\newcommand{\link}[2]{\ar @{-} [#1,#2]}
\newcommand{\pureghost}[1]{*+<1em,.9em>{\hphantom{#1}}}
   \renewcommand{\Qcircuit}[1][0em]{\xymatrix @*=<#1>}
   
\newcommand{\defref}[1]{Definition~\ref{def:#1}}
\newcommand{\theoref}[1]{Theorem~\ref{theo:#1}}
\newcommand{\propref}[1]{Proposition~\ref{prop:#1}}
\newcommand{\corref}[1]{Corollary~\ref{cor:#1}}
\newcommand{\lemref}[1]{Lemma~\ref{lem:#1}}
\newcommand{\exref}[1]{Example~\ref{ex:#1}}
\newcommand{\reref}[1]{Remark~\ref{re:#1}}
\newcommand{\eref}[1]{(\ref{eq:#1})}
\newcommand{\secref}[1]{Section~\ref{sec:#1}}
\newcommand{\figref}[1]{Fig.~\ref{fig:#1}}
\newcommand{\algoref}[1]{Algorithm~\ref{algo:#1}}

\newcommand{\proves}{\vdash}
\newcommand{\HRule}{\rule{\linewidth}{0.5mm}}
\newcommand{\calf}[1]{\mathcal{#1}}
\newcommand{\set}[1]{\{#1\}}
\newcommand{\bset}[1]{\bigl\{#1\bigr\}}
\newcommand{\It}[1]{\mathsf{#1}}
\newcommand{\MIt}[1]{\mathit{#1}}
\newcommand{\seq}[1]{\langle #1\rangle}
\newcommand{\colq}[1]{\begin{array}{c}#1\end{array}}
\newcommand*\circled[1]{\tikz[baseline=(char.base)]{
            \node[shape=circle,draw,inner sep=2pt] (char) {#1};}}

\newcommand{\fxpset}{\Phi_{\mathsf{fxp}}}
\newcommand{\smset}{\Phi_{\mathsf{sm}}}

\newcommand{\lfmm}{\delta_{\It{LFMM}}}
\newcommand{\ccv}{\delta_{\It{CCV}}}
\newcommand{\dnum}{\delta_{\It{NUM}}}
\newcommand{\conn}{\delta_{\It{CONN}}}
\newcommand{\lfmp}{\delta_{\It{LFMP}}}
\newcommand{\mfv}{\delta_{\It{MFV}}}

\newcommand{\mred}{\le_{\mathsf{m}}}
\newcommand{\ored}{\le_{\mathsf{o}}}

\newcommand{\todo}[1]{ \textcolor{red}{TODO: #1}}
\newcommand{\co}[1]{ \textcolor{red}{#1}}

\newcommand{\M}{\calf{M}}
\newcommand{\Pair}{\mathsf{Pair}}
 \newcommand{\lra}{\leftrightarrow}

\newcommand{\VPV}{\textit{VPV}}
\newcommand{\PV}{\textit{PV}}

\newcommand{\FP}{\mathsf{FP}}
\newcommand{\PH}{\mathsf{PH}}
\newcommand{\CC}{\mathsf{CC}}
\newcommand{\CCall}{\mathsf{CC}_{\mathsf{all}}}
\newcommand{\CCSubr}{\mathsf{CC}^{\mathsf{Subr}}}
\newcommand{\CCstar}{\mathsf{CC}^\ast}
\newcommand{\Class}{\mathsf{C}}
\newcommand{\FCC}{\mathsf{FCC}}
\newcommand{\FCCSubr}{\mathsf{FCC}^{\mathsf{Subr}}}
\newcommand{\FCCstar}{\mathsf{FCC}^\ast}
\newcommand{\FC}{\mathsf{FC}}
\newcommand{\AC}{\mathsf{AC}}
\newcommand{\DET}{\mathsf{DET}}
\newcommand{\RDET}{\mathsf{RDET}}
\newcommand{\NC}{\mathsf{NC}}
\newcommand{\FNC}{\mathsf{FNC}}
\newcommand{\FNL}{\mathsf{FNL}}
\newcommand{\FRNC}{\mathsf{FRNC}}
\newcommand{\NL}{\mathsf{NL}}
\renewcommand{\L}{\mathsf{L}}
\newcommand{\RsL}{\mathsf{R}\#\mathsf{L}}
\newcommand{\TC}{\mathsf{TC}}
\newcommand{\SL}{\mathsf{SL}}
\newcommand{\ZPLP}{\mathsf{ZPLP}}
\newcommand{\ZPL}{\mathsf{ZPL}}
\newcommand{\RP}{\mathsf{RP}}
\newcommand{\RL}{\mathsf{RL}}
\newcommand{\coRP}{\mathsf{coRP}}
\newcommand{\coRL}{\mathsf{coRL}}
\newcommand{\BPL}{\mathsf{BPL}}
\newcommand{\RNC}{\mathsf{RNC}}
\newcommand{\coRNC}{\mathsf{coRNC}}
\newcommand{\RAC}{\mathsf{RAC}}
\renewcommand{\P}{\mathsf{P}}
\newcommand{\Ppoly}{\mathsf{P}/\mathsf{poly}}

\newcommand{\VCC}{\;\mathsf{VCC}^\ast}
\newcommand{\VAC}{\mathsf{VAC}}
\newcommand{\VNC}{\mathsf{VNC}}
\newcommand{\VNL}{\mathsf{VNL}}
\newcommand{\VL}{\mathsf{VL}}
\newcommand{\VP}{\mathsf{VP}}
\newcommand{\VTC}{\mathsf{VTC}}
\newcommand{\VSL}{\mathsf{VSL}}
\newcommand{\VsL}{\mathsf{V\#L}}
\newcommand{\LFP}{\calf{L}_{\FP}}
\newcommand{\VC}{\mathsf{VC}}
\newcommand{\V}{\mathsf{V}}
\newcommand{\CH}{\it{CH}}

\newcommand{\LMAX}{\mathit{LMAX}}
\newcommand{\LMIN}{\mathit{LMIN}}
\newcommand{\LIND}{\mathit{LIND}}
\newcommand{\PIND}{\mathit{PIND}}
\newcommand{\IND}{\textit{IND}}
\newcommand{\PHP}{\mathsf{PHP}}
\newcommand{\NUMO}{\textit{NUMONES}}
\newcommand{\MFV}{\textit{MFV}}
\newcommand{\MCV}{\textit{MCV}}
\newcommand{\CONN}{\textit{CONN}}
\newcommand{\CCVA}{\textit{CCV}}
\newcommand{\CCVB}{\It{CCV}}

\newcommand{\SMP}{ \textmd{\textsc{Sm}}}
\newcommand{\MOSM}{ \textmd{\textsc{MoSm}}}
\newcommand{\WOSM}{ \textmd{\textsc{WoSm}}}
\newcommand{\BSVP}{ \textmd{\textsc{Bsvp}}}
\newcommand{\LFMM}{ \textmd{\textsc{Lfmm}}}
\newcommand{\TLFMM}{ \textmd{\textsc{3Lfmm}}}
\newcommand{\CCV}{ \textmd{\textsc{Ccv}}}
\newcommand{\MCVP}{ \textmd{\textsc{Mcvp}}}
\newcommand{\XNS}{ \textmd{\textsc{Xns}}}
\newcommand{\TCV}{ \textmd{\textsc{Three-valued Ccv}}}

\newcommand{\CCVN}{ \textmd{\textsc{Ccv}$\neg$}}
\newcommand{\VLFMM}{ \textmd{\textsc{vLfmm}}}
\newcommand{\TVLFMM}{ \textmd{\textsc{3vLfmm}}}

\numberwithin{equation}{section}

\title{Complexity Classes and Theories for the Comparator Circuit Value
Problem}
\author[1]{Stephen A. Cook}
\author[1]{Dai Tri Man L\^e}
\author[1]{Yuli Ye}
\authorrunning{S.A. Cook, D.T.M. L\^e, and Y. Ye}
\affil[1]{Department of Computer Science, University of Toronto}



\maketitle
\begin{center}
\fbox{Date: \textit{\today}}  
\end{center}

\begin{abstract} 
Subramanian defined the complexity class $\CC$ as the set of problems
log-space reducible
to the comparator circuit value problem.  He proved that several
other problems are complete for $\CC$, including the stable
marriage problem, and finding the lexicographical first maximal
matching in a bipartite graph.  We suggest alternative definitions
of $\CC$ based on different reducibilities and introduce
a two-sorted theory $\VCC$ based on one of them.  We sharpen and simplify 
Subramanian's completeness proofs for the above two problems and
show how to formalize them in $\VCC$.
\end{abstract}

\tableofcontents

\section{Introduction}\label{s:intro}
Comparator networks were originally introduced as a method of sorting
numbers (as in Batcher's even-odd merge sort \cite{Bat68}), but they are still
interesting when the numbers are restricted to the Boolean values
$\{0,1\}$.  A comparator gate has two inputs $p,q$ and two outputs
$p',q'$, where $p' = \min\{p,q\}$ and $q'=\max\{p,q\}$.  In the
Boolean case (which is the one we consider) $p' = p\wedge q$ and
$q' = p \vee q$.  A comparator circuit (i.e. network) is presented as
a set of $m$ horizontal lines in which the $m$ inputs are presented at the
left ends of the lines and the $m$ outputs are presented at the
right ends of the lines, and in between there is a sequence of
comparator gates, each represented as a vertical arrow connecting
some wire $w_{i}$ with some wire $w_{j}$ as shown in \figref{f0}.  These arrows
divide each wire into segments, each of which gets a Boolean value.
The values of wires $w_i$ and $w_{j}$ after the arrow are the comparator
outputs of the values of wires $w_{i}$ and $w_{j}$ right before the arrow, with the tip of the arrow
representing the maximum.

\begin{wrapfigure}{r}{0.45\textwidth}
  \begin{center}
    $\footnotesize \Qcircuit @C=1.5em @R=0.5em {
\push{\bit{1}}&\push{w_{0}}	& \ctrl{2}	&\push{\,0\,}\qw& \qw 	&\qw
& \ctrl{3}&\push{\,0\,}\qw 	& \qw&   \rstick{\bit{0}} \qw \\
\push{\bit{1}}&\push{w_{1}}	&\qw		&\qw		&\ctrl{2} 	&\push{\,0\,}\qw
&\qw&\qw  	&\ctrlu{1}&   \rstick{\bit{1}} \qw\\
\push{\bit{1}}&\push{w_{2}}	&\qw		&\qw		&\qw  	&\qw
&\qw&\qw  	&\qw &   \rstick{\bit{1}} \qw\\
\push{\bit{0}}&\push{w_{3}}	& \ctrld{-1} &\push{1}\qw& \qw  &\qw
&\qw& \qw 	&\ctrl{-1}&    \rstick{\bit{0}} \qw \\
\push{\bit{0}}&\push{w_{4}}	& \qw 	&\qw		& \ctrld{-1} &\push{\,1\,}\qw
&\qw& \qw & \qw&    \rstick{\bit{1}} \qw \\
\push{\bit{0}}&\push{w_{5}}	& \qw 	&\qw		& \qw 	&\qw
& \ctrld{-3} &\push{\,0\,}\qw& \qw&    \rstick{\bit{0}} \qw 
}$
  \end{center}
  \vspace{-3mm}
  \caption{}
  \label{fig:f0}
\end{wrapfigure}

The comparator circuit value problem ($\CCV$) is:  given a comparator
circuit with specified Boolean inputs, determine the output value of a
designated wire.  To turn this into a complexity class it seems
natural to use a reducibility notion that is weak but fairly robust. 
Thus we define $\CC$ to consist of those problems (uniform) $\AC^0$
many-one-reducible
to $\CCV$.  However Subramanian \cite{Sub94} studied the complexity of
$\CCV$ using a stronger notion of reducibility.  Thus his class,
which we denote $\CCSubr$, consists of those problems log-space
(many-one)-reducible to $\CCV$.  It turns out that a generalization
of  many-one $\AC^0$-reducibility which we will call
{\em oracle $\AC^0$-reducibility} (called simply $\AC^0$ reducibility
in \cite{CN10}), is also useful.  Standard complexity
classes such as $\AC^0$,  $\L$ (log space), $\NL$ (nondeterministic
log space), $\NC$, and $\P$ are all closed under this $\AC^0$  oracle
reducibility.
We denote the closure of $\CCV$ under this reducibility by $\CCstar$.

We will show that
\begin{equation}\label{e:CCs}
    \NL \subseteq \CC\subseteq \CCSubr \subseteq \CCstar \subseteq \P
\end{equation}
The last inclusion is obvious because $\CCV$ is a special case
of the monotone circuit value problem, which is clearly in $\P$.
The inclusion $\CC\subseteq \CCSubr$ follows because $\AC^0$
functions are also log-space functions.  The inclusion
$\CCSubr \subseteq \CCstar$ follows from the first inclusion,
which in turn is a strengthening of a result in \cite{MS92}
(attributed to Feder) showing that $\NL \subseteq \CCSubr$.
Of course all three comparator classes coincide if it turns out
that $\CC$ is closed under oracle $\AC^0$-reductions, but we
do not know how to show this.

Note that comparator circuits
are more restricted than monotone Boolean circuits because each
comparator output has fan-out one.  This leads to the 
open question of whether $\CCstar \subsetneq \P$.   A second
open question is whether the complexity classes $\CCstar$
and $\NC$ are incomparable.  (Here $\NC$ is the class of problems
computed by uniform circuit families of polynomial size and polylog
depth, and satisfies $\NL\subseteq \NC \subseteq \P$.)
The answers could be different
if we replaced $\CCstar$ by $\CCSubr$ or $\CC$, although
$\CC \subseteq \NC$ iff $\CCstar \subseteq \NC$ because $\NC$
is closed under oracle $\AC^0$reductions.

The above classes associated with $\CCV$ are also interesting because
they have several disparate complete problems.  As shown in
\cite{MS92,Sub94} both the lexicographical first maximal
matching problem ($\LFMM$) and the stable marriage problem ($\SMP$)
are complete for $\CCSubr$ under log-space reductions\footnote{The  
second author outlined a proof that $\LFMM$ is complete under $\NC^1$ 
reductions in unpublished notes from 1983.}.
The $\SMP$ problem is especially interesting:  Introduced by Gale
and Shapley in 1962 \cite{GS62}, it has since been used to pair medical
interns with hospital residencies jobs in the USA. 
$\SMP$ can be stated as follows:  Given
$n$ men and $n$ women, each with a complete ranking according to
preference of all $n$ members of the opposite sex, find a complete
matching of the men and women such that there are no two people of
opposite sex who would both rather have each other than their current
partners.  Gale and Shapley proved that such a `stable' matching always
exists, although it may not be unique.  Subramanian \cite{Sub94}
showed that
$\SMP$ treated as a search problem (i.e. find any stable marriage)
is complete for $\CC$ under log-space reducibility. 

Strangely the $\CC$ classes have received very little attention
since Subramanian's papers \cite{Sub90,Sub94}.  The present paper
contributes to their complexity theory by sharpening these
early results and simplifying their proofs. 
For example we prove that the three problems $\CCV$, $\LFMM$, and
$\SMP$ are inter-reducible under many-one $\AC^0$-reductions as
opposed to log-space reductions.  Also we introduce a three-valued
logic version of $\CCV$ to facilitate its reduction to $\SMP$.
Our paper contributes to the proof complexity of the classes
by introducing a two-sorted formal theory $\VCC$
which captures the class $\CCstar$ and which can formalize the proofs
of the above results. 


Our theory $\VCC$ is a two-sorted theory developed in the
way described in \cite[Chapter 9]{CN10}.  In general this
method associates a theory $\VC$ with a suitable complexity class
$\Class$ in such a way that a function is in $\FC$, the function
class associated with $\Class$, iff it is provably total
in $\VC$.  (A string-valued function is in $\FC$ iff it is
polynomially bounded and its bit-graph is in $\Class$.) 
This poses a problem for us
because the provably-total functions in a theory are always closed under
composition, but it is quite possible that neither of the function classes
$\FCC$ and $\FCCSubr$ is closed under composition.  That is why
we define the class $\CCstar$ to consist of the problems
$\AC^0$-oracle-reducible (see Definition \ref{d:ACred} below)
to $\CCV$, rather than the problems $\AC^0$ many-one reducible
to $\CCV$, which comprise $\CC$.
It is easy to see that the functions in $\FCCstar$ are closed
under composition, and these are the functions that are
provably total in our theory $\VCC$.

The above paragraph illustrates one way that studying proof complexity can
contribute to main-stream complexity theory, namely by mandating
the introduction of
the more robust version $\CCstar$ of $\CC$ and $\CCSubr$.
Another way is by using the simple two-sorted syntax of our theories
to demonstrate $\AC^0$ reductions.  Thus Theorem 1 below states
that a simple syntactic class of formulas represents precisely
the $\AC^0$ relations.  In general it is much easier to write down
an appropriate such formula than to describe a uniform circuit family or
alternating Turing machine program.

Once we describe our theory $\VCC$ in Sections \ref{s:two-sorted}
and \ref{s:two-theories}, the technical part of our proofs
involve high-level descriptions of comparator circuits and
algorithms.   We do not say much about formalizing the proofs
in $\VCC$ since in most cases this part is relatively straightforward.
However there are some cases (Theorems \ref{theo:NL2CCV} and 
\ref{theo:fxp}) where the proofs require a counting argument,
and for these we use the fact (Corollary \ref{cor:VTC}) that the
counting theory $\VTC^0$ is contained in $\VCC$.

\section{Preliminaries}\label{s:prelim}

\subsection{Two-sorted vocabularies} \label{s:two-sorted}
We use two-sorted vocabularies for our theories as
described by Cook and Nguyen \cite{CN10}.
Two-sorted languages have variables $x,y,z,\ldots$ ranging over 
$\mathbb{N}$ and variables $X,Y,Z,\ldots$ ranging over finite subsets
of $\mathbb{N}$, interpreted 
as bit strings. Two sorted vocabulary $\mathcal{L}_{A}^{2}$ includes the usual symbols $0,1,+,
\cdot,=,\le$ for arithmetic over $\mathbb{N}$, the length function $|X|$ for strings ($|X|$ is zero if $X$ is empty, otherwise $1+\max(X)$), the set 
membership relation $\in$, and string equality $=_{2}$ (subscript 2 is usually omitted). 
We will use the notation $X(t)$ for $t\in X$, and think of $X(t)$ as the $t^{\text{th}}$ bit in the string $X$.

The number terms in the base language $\mathcal{L}_{A}^{2}$  are built
from the constants $0,1$, variables $x,y,z,
\ldots$ and length terms $|X|$ using $+$ and $\cdot$. The only string terms are string variables, but 
when we extend $\mathcal{L}_{A}^{2}$ by adding string-valued functions, other string terms will be 
built as usual. The atomic formulas are $t=u$, $X=Y$, $t\le u$, $t\in X$ for any number terms $x,y$ 
and string variables $X,Y$. Formulas are built from atomic formulas using $\wedge,\vee,\neg$ and 
$\exists x$, $\exists X$, $\forall x$, $\forall X$. Bounded number quantifiers are defined as usual, 
and bounded string quantifier $\exists X\le t, \varphi$ stands for $\exists X (|X|\le t\, \wedge \varphi)$ 
and $\forall X\le t, \varphi$ stands for $\forall X (|X|\le t \rightarrow \varphi)$, where $X$ does not 
appear in term $t$.

The class $\Sigma_{0}^{B}$ consists of all $\mathcal{L}_{A}^{2}$-formulas with  no string 
quantifiers and only bounded number quantifiers.  The class $\Sigma_{1}^{B}$ consists of 
formulas  of the form 
$\exists \vec{X}< \vec{t}\,\varphi$, where $\varphi\in \Sigma_{0}^{B}$ and the prefix of the bounded 
quantifiers might be empty. These classes are extended to $\Sigma_{i}^{B}$ (and $\Pi_{i}^{B}$)
for all $i\ge 0$, in the usual way.
More generally we write $\Sigma_{i}^{B}(\mathcal{L})$ to denote the
class of  $\Sigma_{i}^{B}$-formulas which may have function and
predicate symbols from $\mathcal{L}\cup \mathcal{L}_{A}^{2}$.

Two-sorted complexity classes contain relations $R(\vec{x},\vec{X})$, where $\vec{x}$ are number 
arguments and $\vec{X}$ are string arguments.  (In the sequel we refer
to a relation $R(\vec{x},\vec{X})$ as a {\em decision problem}: given
$(\vec{x},\vec{X})$ determine whether $R(\vec{x},\vec{X})$ holds.)
In defining complexity classes using machines or 
circuits, the number arguments are represented in unary notation and the string arguments are 
represented in binary. The string arguments are the main inputs, and the number arguments are 
auxiliary inputs that can be used to index the bits of strings.
Using these input conventions, we define the two-sorted version of
$\AC^{0}$ to be the class of relations $R(\vec{x},\vec{X})$ 
such that some alternating Turing machine accepts $R$ in time $O(\log n)$ with a constant 
number of alternations. Then the descriptive complexity characterization of $\AC^{0}$ gives rise to the following theorem \cite[Chapter 4]{CN10}.
\begin{theorem}\label{theo:szb}
A relation $R(\vec{x},\vec{X})$ is in $\AC^{0}$ iff it is represented by some $\Sigma_
{0}^{B}$-formula $\varphi(\vec{x},\vec{X})$.
\end{theorem}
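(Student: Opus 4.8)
This is a biconditional, so the plan is to prove the two directions separately; the direction from $\Sigma_0^B$ to $\AC^0$ is routine, while the converse is the substantive one and rests on the descriptive-complexity characterization of $\AC^0$.

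For the direction $(\Leftarrow)$, suppose $R$ is represented by some $\varphi \in \Sigma_0^B$. I would argue by structural induction on $\varphi$ that $R \in \AC^0$, exhibiting at each stage a constant-alternation $O(\log n)$-time alternating Turing machine. The atomic formulas form the base case: every number term of $\mathcal{L}_A^2$ is polynomially bounded in the input length, so (with number arguments given in unary) its value is a string of length $n^{O(1)}$, and on numbers of this size the operations $+,\cdot$ and the tests $=,\le$ are computable in $\AC^0$; the membership test $t \in X$ is a single bit lookup, and $|X|$ is extracted by a bounded search. The connectives $\wedge,\vee,\neg$ preserve $\AC^0$, and a bounded number quantifier $\exists x \le t$ (resp.\ $\forall x \le t$) becomes an existential (resp.\ universal) branching over the $t+1 = n^{O(1)}$ many values of $x$, each guessed in $O(\log n)$ time. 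Since a $\Sigma_0^B$-formula has only a constant number of quantifier alternations, the resulting machine has $O(1)$ alternations and runs in time $O(\log n)$, so $R \in \AC^0$.

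For the direction $(\Rightarrow)$, let $M$ accept $R$ in time $c\log n$ with a constant number $k$ of alternations. The cleanest route, which I would take, is through the equivalence of this ATM model with DLOGTIME-uniform families of constant-depth, polynomial-size, unbounded-fan-in circuits. Given such a family $\{C_n\}$ of depth $d = O(1)$, I would write a $\Sigma_0^B$-formula asserting that the output gate of $C_n$ evaluates to $1$ on input $(\vec{x},X)$ by peeling off the $d$ layers one at a time: each layer contributes a single block of bounded number quantifiers over gate indices (gate numbers are bounded by the polynomial size, hence range over legitimate $O(\log n)$-bit number values), alternating $\exists$ for $\vee$-gates with $\forall$ for $\wedge$-gates, so that $d = O(1)$ layers yield $O(1)$ alternations. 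The base case is a gate reading an input bit, expressible by $X(i)$ or a constant, and the wiring predicate ``gate $g$ feeds into gate $h$ in $C_n$'' is exactly what DLOGTIME-uniformity supplies; substituting its $\Sigma_0^B$-definition for the connection relation keeps the whole formula in $\Sigma_0^B$.

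I expect the main obstacle to be the faithful handling of uniformity rather than the layer-by-layer unwinding. One must verify both that the ATM model defined above coincides with DLOGTIME-uniform $\AC^0$ circuits and that DLOGTIME-uniformity can be replayed as a $\Sigma_0^B$-definable direct-connection language. Equivalently, if one prefers to simulate $M$ directly, the delicate step is to capture a phase of up to $c\log n$ steps of $M$ using only a \emph{fixed} number of bounded number quantifiers, since the full transcript of such a phase is too long to store in a single polynomially bounded number; the constant depth of the circuit formulation is precisely what sidesteps this. Either way, this is the content furnished by the descriptive-complexity characterization of $\AC^0$ recorded in \cite[Chapter 4]{CN10}, on which I would rely for this step.
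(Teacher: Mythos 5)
Your proposal is correct and matches the paper's treatment: the paper gives no proof of this theorem at all, importing it directly from the descriptive-complexity characterization of $\AC^0$ in \cite[Chapter 4]{CN10}, which is the same result your argument ultimately rests on. Your sketch---the routine structural induction for the $(\Leftarrow)$ direction, and for $(\Rightarrow)$ the layer-by-layer unwinding of a DLOGTIME-uniform constant-depth circuit family, with the ATM-to-circuit equivalence and the $\Sigma_0^B$-definability of the uniformity condition correctly identified as the crux and deferred to that reference---is precisely the standard content behind the citation.
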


Given a class of relations $\Class$, we associate a class $\FC$ of
string-valued functions $F (\vec{x},\vec{X})$ and number functions
$f(\vec{x},\vec{X})$ with $\Class$ as follows. We require
these functions to be $p$-bounded, i.e., $|F(\vec{x},\vec{X})|$
and $f(\vec{x},\vec{X})$ are bounded by
a polynomial in $\vec{x}$ and $|\vec{X}|$. Then we define $\FC$
to consist of all $p$-bounded number
functions whose graphs are in $\Class$ and all $p$-bounded string
functions whose bit graphs are in $\Class$.  (Here the {\em bit graph}
of $F(\vec{x}, \vec{X})$ is the relation $B_F(i,\vec{x},\vec{X})$
which holds iff the $i$th bit of $F(\vec{x},\vec{X})$ is 1.)

Most of the computational problems we consider here can be nicely
expressed as decision problems (i.e. relations), but the stable
marriage problem $\SMP$ is an exception, because in general a given
instance has more than one solution (i.e. there is more than one stable
marriage).
Thus $\SMP$ is properly described as a search problem.  Following
\cite[Section 8.5]{CN10}  we define a two-sorted
{\em search problem} $Q_R$ to be a multivalued function with graph
$R(\vec{x},\vec{X},Z)$, so
\begin{equation}\label{eq:search}
   Q_R(\vec{x},\vec{X}) = \bset{Z\mid R(\vec{x},\vec{X},Z)}
\end{equation}
The search problem is {\em total} if the set $Q_R(\vec{x},\vec{X})$
is non-empty for all $\vec{x},\vec{X}$.  The search problem is a
{\em function problem} if $|Q_R(\vec{x},\vec{X})|=1$  for all
$\vec{x},\vec{X}$.
A function $F(\vec{x},\vec{X})$ {\em solves} $Q_R$ if
$$
    F(\vec{x},\vec{X})\in Q_R(\vec{x},\vec{X})
$$
for all $\vec{x},\vec{X}$.

Here we will be concerned only with total search problems.

\begin{definition}\label{d:manyOne}
Let $\Class$ be a complexity class.
A relation $R_1(\vec{x},\vec{X})$ is $\Class$-many-one-reducible to
a relation $R_2(\vec{y},\vec{Y})$ (written
$R_1\mred^\Class R_2) $ if there are functions
$\vec{f},\vec{F}$ in $\FC$ such that
\[
R_1(\vec{x},\vec{X}) \lra
R_2(\vec{f}(\vec{x},\vec{X}),\vec{F}(\vec{x},\vec{X})).
\]
A search problem $Q_{R_1}(\vec{x},\vec{X})$ is $\Class$-many-one-reducible to
a search problem $Q_{R_2}(\vec{y},\vec{Y})$ if there are functions
$G,\vec{f},\vec{F}$ in $\FC$ such that
\[
\mbox{$G(\vec{x},\vec{X},Z) \in Q_{R_1}(\vec{x},\vec{X})$ for all
$Z\in Q_{R_2}(\vec{f}(\vec{x},\vec{X}),\vec{F}(\vec{x},\vec{X}))$}.
\]
\end{definition}

Here we are mainly interested in the cases that $\Class$ is
either $\AC^0$ or $\L$ (log space).  We also need a generalization
of many-one $\AC^0$-reducibility called simply $\AC^0$-reducibility
in \cite[Definition 9.1.1]{CN10}, which we will call 
oracle $\AC^0$-reducibility.  Roughly speaking a function or
relation is $\AC^0$-oracle-reducible to a collection ${\mathcal L}$ of
functions and relations if it can be computed by a uniform polynomial
size constant depth family of circuits which have unbounded fan-in
gates computing functions and relations from ${\mathcal L}$ (i.e.
`oracle gates'), in addition to Boolean gates.  Formally:

\begin{definition}\label{d:ACred}
A string function $F$ is $\AC^0$-oracle-reducible
to a collection  ${\mathcal L}$ of relations and functions
(written $F \ored^{\AC^0}{\mathcal L}$)
if there is a sequence of string functions
$F_1,\ldots,F_n=F$ such that each $F_i$ is $p$-bounded and its
bit graph is represented by a $\Sigma^B_0(\mathcal{L},F_1,\ldots,F_{i-1})$-formula.  

A number function $f$ is
$\AC^0$-oracle-reducible to $\mathcal{L}$ if $f = |F|$ for
some string function
$F$ which is $\AC^0$-reducible to $\mathcal{L}$.  A relation $R$
is $\AC^0$-oracle-reducible to $\mathcal{L}$ if its characteristic function
is $\AC^0$-oracle-reducible to $\mathcal{L}$.
\end{definition}

We note that standard small complexity classes including
$\AC^{0}$, $\TC^0$, $\NC^{1}$, $\NL$ and $\P$
(as well as their corresponding
function classes) are closed under oracle $\AC^0$-reductions.


\subsection{Two-sorted theories}\label{s:two-theories}

The theory $\V^{0}$ for $\AC^{0}$ is the basis for developing theories
for small complexity classes 
within $\P$ in \cite{CN10}.  $\V^{0}$ has the vocabulary
$\mathcal{L}_{A}^{2}$ and is axiomatized by the set of
\textit{2-BASIC} axioms as given in \figref{2basic}, which express basic properties of symbols 
in $\mathcal{L}_{A}^{2}$, together with the \emph{comprehension} axiom schema 
\begin{align*}
\Sigma_{0}^{B}\textit{-COMP}: &&\exists X\le y\, \forall z<y \bigl(X(z) 
\leftrightarrow \varphi(z)\bigr),
\end{align*}
where $\varphi \in \Sigma_{0}^{B}(\mathcal{L}_{A}^{2})$ and $X$ does not occur free in $\varphi$. 
Note that the axioms B1--B12 of \textit{2-BASIC} are based on the \textit{1-BASIC} axioms of theory $I\Delta_{0}$; the axioms L1 and L2 characterize $|X|$.

Although  $\V^{0}$ has no explicit induction axiom, nevertheless, using $\Sigma_{0}^{B}\textit{-COMP}$ and the fact that $|X|$ produces the maximum element of the finite set $X$,
the following schemes are provable in $\V^{0}$ for every formula $\varphi \in \Sigma_{0}^{B}(\mathcal{L}_{A}^{2})$
\begin{align*}
\Sigma_{0}^{B}\textit{-IND}: &&  \bigl[\varphi(0)\wedge \forall x \bigl(\varphi(x)\rightarrow \varphi (x+1)\bigr)\bigr]\rightarrow \forall x \varphi(x),\\
\Sigma_{0}^{B}\textit{-MIN}: && \varphi(y)\rightarrow \exists x \bigl(\varphi(x)\wedge \neg 
\exists z<x\, \varphi(z)\bigr),\\
\Sigma_{0}^{B}\textit{-MAX}:&&\varphi(0)\rightarrow \exists x \le y \bigl[\varphi(x)\wedge \neg \exists z\le y \bigl(z>x \wedge  \varphi(z)\bigr)\bigr].
\end{align*}

\begin{figure}
\HRule
\begin{center}
\begin{minipage}{.45\textwidth}
\begin{align*}
&\textbf{B1. }x+1 \not= 0						\\
&\textbf{B2. }x+1=y+1 \rightarrow x= y			\\
&\textbf{B3. }x+0=x							\\
&\textbf{B4. }x+(y+1)=(x+y)+1					\\
&\textbf{B5. }x\cdot 0 =0						\\
&\textbf{B6. }x\cdot (y+1)  = (x\cdot y)+x			\\
&\textbf{B7. }(x\le y \wedge y\le x)\rightarrow x = y\\
\end{align*}
\end{minipage}
\begin{minipage}{.45\textwidth}
\begin{align*}
&\textbf{B8. }x\le x + y\\
&\textbf{B9. }0\le x\\
&\textbf{B10. }x\le y \vee y\le x\\
&\textbf{B11. }x\le y \leftrightarrow x < y +1\\
&\textbf{B12. }x\not= 0 \rightarrow \exists y\le x \left(y+1=x\right)\\
&\textbf{L1. }X(y)\rightarrow y<|X|				\\
&\textbf{L2. }y+1 = |X| \rightarrow X(y)\\
\end{align*}
\end{minipage}
$\textbf{SE. }\bigl(|X|=|Y| \wedge \forall i <|X|\left(X(i)=Y(i)\right)\bigr) \rightarrow X = Y$
\end{center}
\HRule
\caption{The \textit{2-BASIC} axioms}
\label{fig:2basic}
\end{figure}

In general, we say that a string function $F(\vec{x},\vec{X})$ is
$\Sigma^B_1$-definable (or provably total) in a two-sorted theory
$\mathcal{T}$ if there is a $\Sigma^B_1$ formula
$\varphi(\vec{x},\vec{X},Y)$ representing the graph $Y=F(\vec{x},\vec{X})$
of $F$ such that
$\mathcal{T} \vdash \forall\vec{x}\,\forall\vec{X}\exists!Y\varphi(\vec{x},\vec{X},Y)$.  Similarly for a number function $f(\vec{x},\vec{X})$.

It was shown in \cite[Chapter 5]{CN10} that $\V^{0}$ is finitely
axiomatizable, and  a function is provably total in $\V^0$ iff it
is in $\mathsf{FAC}^{0}$.

In  \cite[Chapter 9]{CN10}, Cook and Nguyen develop a general method
for associating a theory $\VC$ with certain 
complexity classes $\mathsf{C}\subseteq \P$, where $\VC$ extends
$\V^{0}$ with an additional axiom 
asserting the existence of a solution to a complete problem for
$\mathsf{C}$.  In order for this method to work, the class $\mathsf{C}$
must be closed under $\AC^0$-oracle-reducibility
(Definition \ref{d:ACred}).
The method shows how to define a universal
conservative extension $\overline{\VC}$ of $\VC$,  where $\overline{\VC}$
has string function symbols for precisely the string functions of
$\mathsf{FC}$,
and terms for precisely the number functions of $\mathsf{FC}$.  Further, 
$\overline{\VC}$ proves the $\Sigma_{0}^{B}(\mathcal{L})\textit{-IND}$
and $\Sigma_{0}^{B}(\mathcal{L})\textit{-MIN}$ schemes, where
$\mathcal{L}$ is the vocabulary of $\overline{VC}$.  It follows from
the Herbrand Theorem that the provably total functions of both $\VC$
and $\overline{\VC}$ are precisely those in $\mathsf{FC}$.  

Using this framework Cook and Nguyen define specific theories for
several complexity classes and give examples of theorems formalizable in
each theory.  The theories of
interest to us in this paper are $\VTC^{0}$, $\VNC^{1}$, $\VNL$ and $\VP$ for the complexity classes $\TC^{0}$, $\NC^{1}$, $\NL$ and $\P$ respectively. All of these theories have vocabulary $\mathcal{L}_{A}^{2}$. Let $\seq{x,y}$ denote the \emph{pairing function}, which is the $\mathcal{L}_{A}^{2}$ term $(x+y)(x+y+1)+2y$. The theory $\VTC^{0}$  is axiomatized by the axioms of $\V^{0}$ and the axiom:
\begin{align}
\NUMO: && \exists Z \le 1 + \seq{n,n}, \dnum(n,X,Z),	\label{eq:numones}
\end{align}
where the formula $\dnum(n,X,Z)$ asserts that $Z$ is a matrix consisting of $n$ rows such that for every $y\le n$, the $y^{\rm th}$ row  of $Z$ encodes the number of ones in the prefix of length $y$ of the binary string $X$. Thus,  the $n^{\rm th}$ row of $Z$ essentially ``counts'' the number of ones in  $X$. Because of this counting ability, $\VTC^{0}$ can prove \emph{the pigeonhole principle} $\PHP(n,F)$ saying that if $F$ maps a set of $n+1$ elements to a set of $n$ elements, then $F$ is not an injection.

The theory $\VNC^{1}$  is axiomatized by the axioms of $\V^{0}$ and the axiom:
\begin{align}
\MFV: && \exists Y \le 2n+1, \mfv(n,F,I,Y),	\label{eq:mfv}
\end{align}
where $F$ and $I$ encode a monotone Boolean formula with $n$ literals and its input respectively, and the formula $\mfv(n,G,I,Y)$ holds iff $Y$ correctly encodes the evaluation of the formula encoded in $F$ on input $I$.  Recall that the \emph{monotone Boolean formula value} problem is complete for $\NC^{1}$.

The theory $\VP$  is axiomatized by the axioms of $\V^{0}$ and the axiom $\MCV$, which is defined very similarly to $\MFV$, but the \emph{monotone circuit value} problem is used instead.

The theory $\VNL$  is axiomatized by the axioms of $\V^{0}$ and the axiom:
\begin{align}
\CONN: && \exists U\le \seq{n,n}+1, \conn(n,E,U),	\label{eq:conn}
\end{align}
where $E$ encodes the edge relation of a directed graph $G$ with $n$
vertices $v_{0},\ldots,v_{n-1}$, and the formula  $\conn(n,E,U)$ is
intended to mean that $U$ is a matrix of $n$ rows, where each row has
a Boolean value for each vertex in $G$,  and $U(d,i)$ holds iff vertex
$v_i$ has distance at most $d$ from $v_0$.  More directly, the
formula $\conn(n,E,U)$ asserts
\begin{align}\label{eq:connMean}
&&\begin{array}{c}
\mbox{$U(0,i)$ holds iff $i=0$, and $U(d+1,i)$ holds iff
either $U(d,i)$ holds or} \\
\mbox{there is $j$ such that $U(d,j)$ holds and there is an edge in $G$
from $v_j$ to $v_i$.} 
\end{array}
\end{align}

Similar to what is currently known about complexity classes, it was shown in \cite[Chapter 9]{CN10} that
the following inclusions hold:
\begin{equation}\label{eq:theories}
\V^{0}\subsetneq \VTC^{0}\subseteq \VNC^{1} \subseteq \VNL \subseteq \VP.
\end{equation}

\subsection{$\CCV$ and its complexity classes}
A \emph{comparator gate} is a function $C:\set{0,1}^{2}\rightarrow \set{0,1}^{2}$, that takes an input  pair $(p,q)$ and outputs a pair $(p\wedge q,p \vee q)$. Intuitively, the first output in the pair is the smaller bit among the two input bits $p,q$, and the second output is the larger bit. \smallskip\\
\begin{minipage}{.65\textwidth}
\hspace{.5cm} We will use the graphical notation on the right to denote a comparator gate, where $x$ and $y$ denote the names of the wires, and the direction of the arrow denotes the direction to which we move the larger bit as shown in the picture. \smallskip
\end{minipage}
\begin{minipage}{.3\textwidth}
\vspace{-5mm}
\begin{center}\small 
$\Qcircuit @C=2.5em @R=1em {
\push{\bit{p}}&\push{x\,} & \ctrl{1}	& \rstick{~~~~~\bit{p\wedge q}}\qw\\
\push{\bit{q}}&\push{y\,} & \ctrld{-1} & \rstick{~~~~~\bit{p\vee q}}\qw  
}$
\end{center}
\end{minipage}

A \emph{comparator circuit} is a directed acyclic  graph consisting of: \emph{input nodes}  with in-degree zero and out-degree one,  \emph{output nodes} with in-degree one and out-degree zero, and \emph{internal nodes} with  in-degree two and out-degree two, where each internal node is labelled with a comparator gate. We also require each output computed by a comparator gate has fan-out exactly one. Under this definition, each comparator circuit can be seen as consisting of the wires that  carry the bit values  and are arranged in parallel, and each comparator gate connects exactly two wires as previously shown  in \figref{f0}.

The \emph{comparator circuit value} problem ($\CCV$) is the decision
problem:  Given a comparator circuit, an input assignment, and a
designated wire, decide whether the circuit outputs one on that wire.

Formally $\CCV$ is a two-sorted relation $R(\vec{x},\vec{X})$ like
those discussed in Section \ref{s:prelim}.  An instance of the
decision problem is encoded in some reasonable way by the variables
$(\vec{x},\vec{X})$.  The exact encoding is not important.  
An example encoding is $(n,m,i,I,G)$ where $n$, $m$ and $i$ are number
variables and $I$ and $G$ are string variables.  Here $n$ is
the number of wires in the comparator circuit, $m$ is the number of comparator gates,
$i$ is the designated wire, $I$ specifies the sequence of input values to the
wires, and $G$ specifies the sequence of $m$ comparator gates in the
circuit.

\begin{definition}\label{d:CC}
The complexity class $\CC$ (resp. $\CCSubr$, $\CCstar$) is the class of
decision problems (i.e. relations) that are $\AC^0$ many-one-reducible
(resp. log space-reducible, $\AC^0$ oracle-reducible)
to $\CCV$. 
A decision problem $R$ is $\CC$-complete (resp. $\CCSubr$-complete,
$\CCstar$-complete) if the respective class is the closure of $R$
under the corresponding reducibility.  We say that $R$ is
$\CCall$-complete if it is complete in all three senses.
\end{definition}


The next result is a straightforward consequence of (\ref{e:CCs})
and the definitions involved.

\begin{lemma}\label{l:CCall}
If a decision problem is $\CC$-complete then it is $\CCall$-complete.
\end{lemma}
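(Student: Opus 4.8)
The plan is to extract two facts from the hypothesis and then transport them across the reducibility hierarchy of \eqref{e:CCs}, using only the elementary relationships among the three reducibilities. By Definition~\ref{d:CC}, $R$ being $\CC$-complete means that $\CC$ equals the closure $\{S : S \mred^{\AC^0} R\}$. Reading this equality in both directions gives the two facts I need: (i) $R \in \CC$ (take $S=R$ with the identity reduction), hence $R \mred^{\AC^0} \CCV$; and (ii) $\CCV \in \CC$, hence $\CCV \mred^{\AC^0} R$. Fact (i) is the membership half and fact (ii) the hardness half, and everything else will follow by composing reductions.

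First I would establish $\CCSubr$-completeness, i.e. show $\CCSubr = \{S : S \mred^{\L} R\}$. For membership, $R \in \CC \subseteq \CCSubr$ directly from \eqref{e:CCs}. For the inclusion $\CCSubr \subseteq \{S : S \mred^{\L} R\}$ (hardness), take any $S \in \CCSubr$, so $S \mred^{\L} \CCV$; since every $\AC^0$ function is a log-space function, fact (ii) upgrades to $\CCV \mred^{\L} R$, and composing the two log-space reductions yields $S \mred^{\L} R$. The reverse inclusion $\{S : S \mred^{\L} R\} \subseteq \CCSubr$ follows because $R \in \CCSubr$ gives $R \mred^{\L} \CCV$, so any $S \mred^{\L} R$ composes to $S \mred^{\L} \CCV$, placing $S$ in $\CCSubr$.

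Then I would treat $\CCstar$-completeness identically, replacing $\mred^{\L}$ by $\ored^{\AC^0}$. Membership is again $R \in \CC \subseteq \CCstar$ from \eqref{e:CCs}. For hardness, a many-one $\AC^0$ reduction is the one-function special case of an $\AC^0$ oracle reduction in Definition~\ref{d:ACred}, so fact (ii) gives $\CCV \ored^{\AC^0} R$; for any $S \in \CCstar$ we have $S \ored^{\AC^0} \CCV$, whence $S \ored^{\AC^0} R$. The reverse inclusion is symmetric, using $R \ored^{\AC^0} \CCV$. Both directions here rest on the transitivity (closure under composition) of $\ored^{\AC^0}$, which is precisely the robustness property that motivated introducing $\CCstar$ in the first place.

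The only non-bookkeeping points, and the closest thing to an obstacle, are the three elementary facts doing the real work: the inclusions $\CC \subseteq \CCSubr \subseteq \CCstar$ of \eqref{e:CCs}; that $\AC^0$ many-one reducibility refines into both log-space many-one reducibility (because $\mathsf{FAC}^0 \subseteq \mathsf{FL}$) and $\AC^0$ oracle reducibility (as a special case of Definition~\ref{d:ACred}); and that each of $\mred^{\L}$ and $\ored^{\AC^0}$ is transitive. None of these carries genuine difficulty, so the argument is essentially a diagram chase along the reducibility chain, and the lemma is indeed a ``straightforward consequence'' of \eqref{e:CCs} and the definitions, as announced.
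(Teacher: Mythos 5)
Your proposal is correct, and it is exactly the argument the paper intends: the paper gives no explicit proof, declaring the lemma ``a straightforward consequence of (\ref{e:CCs}) and the definitions involved,'' and your diagram chase (extracting $R \mred^{\AC^0} \CCV$ and $\CCV \mred^{\AC^0} R$ from $\CC$-completeness, then transporting both along the weaker reducibilities via transitivity) is precisely that straightforward consequence, spelled out.
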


In the above definition of comparator circuit, each comparator gate can point in either direction, upward or downward (see \figref{f0}). However, it is not hard to show the following. 

\begin{proposition}\label{prop:p1}
The $\CCV$ problem with the restriction that all comparator gates point in the same direction is $\CC$-complete.
\end{proposition}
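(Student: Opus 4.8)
The statement has two halves. Membership in $\CC$ is immediate: every single-direction instance is in particular a $\CCV$ instance, so the identity map is a many-one $\AC^0$-reduction of the restricted problem to $\CCV$. All the content is in the reverse direction, namely giving a many-one $\AC^0$-reduction $\CCV \mred^{\AC^0} R$, where $R$ is $\CCV$ restricted to circuits whose gates all point the same way. Since reflecting a circuit (reverse the wire order, flip every arrow, and re-index the designated wire) is itself an $\AC^0$ operation and turns an all-upward circuit into an all-downward one, I may fix the target direction to be downward, i.e. every gate puts $\max$ on the lower of its two wires.

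The tool I would use is the De Morgan duality of a comparator: on complemented inputs a comparator interchanges $\min$ and $\max$, since a downward gate acting on $(\bar p,\bar q)$ outputs $(\bar p\wedge \bar q,\ \bar p\vee \bar q)=(\overline{p\vee q},\ \overline{p\wedge q})$. This suggests a dual-rail construction. Given an $n$-wire, $m$-gate circuit $C$, I would build a circuit $C'$ on $2n$ wires in which each original wire $w_i$ is represented by a \emph{value wire} carrying $v_i$ together with a \emph{complement wire} carrying $\bar v_i$, with the complement wires placed in reversed order in a block below the value wires. Each comparator of $C$ is then simulated by a pair of downward comparators, one on the two value wires and one on the two complement wires; by the identity above this pair evaluates the original gate while preserving the invariant that the lower rail always holds the complement of the upper, and the reversal of the complement block is exactly what lets the complement-rail comparator also point downward. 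The answer is read off the value wire of the designated output wire.

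The step I expect to be the main obstacle is handling gates of the "wrong" orientation while staying inside $\AC^0$. With the layout above a downward gate of $C$ is captured exactly by a pair of downward comparators in fixed wire positions; but an upward gate must force the larger value onto the \emph{upper} wire, and this cannot be done by downward comparators on those two wires, since downward comparators only ever push the larger value downward (equivalently, interchanging two wire values is not realizable by comparators at all). The textbook remedy is to put the network into standard form by sweeping the gate list and, at each upward gate, interchanging the two wires for the remainder of the circuit; but the accumulated interchange is a product of transpositions, so locating the final position of the output wire encodes a parity and is not $\AC^0$. The crux of the proof is therefore to discharge the upward gates \emph{locally}, using the complement rail: an upward gate on $(p,q)$ has the same action as a downward gate read through complementation, $(p\vee q,\ p\wedge q)=(\overline{\bar p\wedge\bar q},\ \overline{\bar p\vee\bar q})$, so with both rails present each original gate can be turned into a fixed pattern of downward comparators whose wire positions are given by a $\Sigma^B_0$ formula, and the output wire is likewise $\Sigma^B_0$-locatable, with no global permutation to compute. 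Once the upward case is discharged in this way, the remaining bookkeeping — encoding $C'$, its input assignment, and the designated wire as $\AC^0$ functions of the description of $C$ — is routine, and \theoref{szb} certifies that these maps are genuine $\AC^0$ reductions.
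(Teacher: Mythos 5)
The easy half (membership via the identity map) and your reflection normalization are fine; the gap is exactly at the step you call the crux, and it is fatal as stated. The De Morgan identity $(p\vee q,\,p\wedge q)=(\overline{\bar p\wedge\bar q},\,\overline{\bar p\vee\bar q})$ shows the desired outputs are \emph{expressible} from the complement rail, but it gives no way to \emph{place} them on the wires that must carry them. First, with your layout duality is direction-preserving, not direction-reversing: for a downward original gate the required complement-rail update (on the pair $(\bar v_j,\bar v_i)$, which sits in that top-to-bottom order) is again a downward comparator, but for an \emph{upward} original gate the required updates on \emph{both} rails are upward comparators, so the second rail never discharges orientation. (This is why the paper deploys double-rail logic only in the $\CCVN \mred^{\AC^{0}} \CCV$ reduction of Section~\ref{sec:cvn2c}, where gates of both orientations are available.) Second, an in-place simulation is outright impossible: a downward comparator never increases the number of ones on any top segment of the wires, yet an upward gate on input $(p,q)=(0,1)$ must take the gadget wires $(v_i,v_j,\bar v_j,\bar v_i)$ from $(0,1,0,1)$ to $(1,0,1,0)$, raising the count on the topmost wire from $0$ to $1$. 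Adding zero-initialized auxiliary wires cannot help, since in this input case every wire at or above $v_i$ that the gadget may touch starts at $0$; and borrowing a one from the rail of some other logical wire above would both corrupt that wire and make the result depend on data unrelated to $p$ and $q$. The only remaining reading of your construction --- fire downward comparators on both rails and accept that the outputs land on the partner wires --- makes logical wires $i$ and $j$ exchange physical representatives, which is precisely the product-of-transpositions bookkeeping that you yourself correctly rejected as non-$\AC^{0}$.

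What is missing is the idea of letting the computation flow \emph{downward into fresh wires} rather than simulating each gate in place. The paper's proof introduces, for every layer (gate) of the circuit, a fresh zero-initialized copy of all the wires, placed below the previous copies; an upward gate from $y$ to $x$ is simulated by three downward gates into the new copies, $\seq{x_{0},x_{1}}$, $\seq{y_{0},x_{1}}$, $\seq{y_{0},y_{1}}$ (move $p$ down onto $x_{1}$, fold in $q$ so that $x_{1}=p\vee q$ and $y_{0}=p\wedge q$, then move the latter down onto $y_{1}$), while every untouched wire is forwarded to its copy by a single downward gate. Because each layer occupies a whole fresh block, the physical position of logical wire $i$ after layer $t$ is simply offset $i$ in block $t$ --- a $\Sigma^{B}_{0}$-definable function of $(i,t)$ --- so no permutation has to be composed, the designated output wire is located trivially, and the whole map is a many-one $\AC^{0}$ reduction; no complement rail is needed at all. (One can alternatively repair an in-place scheme by giving each gate a private constant-$1$ feed wire placed above it and a dump wire below, in the spirit of the wires $\iota_{i}$ used in the proof of Theorem~\ref{theo:NL2CCV}; but some injection of fresh wires carrying known constants is unavoidable --- the complement rail alone can never supply the one that has to appear on the upper wire.)
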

\begin{proof}Suppose we have a gate with the arrow pointing upward like following:
\begin{center}\small 
$\Qcircuit @C=2em @R=1em {
\push{x\,} & \ctrlu{1}	& \qw\\
\push{y\,} & \ctrl{-1} & \qw  
}$
\end{center}
We will build a circuit that outputs the same values as those of $x$ and $y$, but all the gates will now point downward.
\begin{center}\small
$\Qcircuit @C=2em @R=0.7em {
		&\push{x_{0}} & \ctrl{2}		& \qw	& \qw 		& \qw\\
		&\push{y_{0}} &\qw		& \ctrl{1} 	& \ctrl{2}  		& \qw\\
\push{0}	&\push{x_{1}} & \ctrld{-1} 	&\ctrld{-1}	&\qw			& \qw\\
\push{0}	&\push{y_{1}} & \qw		& \qw 	&\ctrld{-1}		& \qw \\
}$
\end{center}
It is not hard to see that the wires $x_{1}$ and $y_{1}$ in this new comparator circuit will output the same values with the wires $x$ and $y$ respectively in the original circuit. For the general case, we can simply make copies of all wires for each layer of the comparator circuit, where each copy of a wire will be used to carry value of a wire at a single layer of the circuit. Then apply the above construction to simulate the effect of each gates. Note that additional comparator gates are also needed to forward the values of the wires from one layer to another,
in the same way that
we use the gate $\seq{y_{0},y_{1}}$ to forward the value carried in wire $y_{0}$ to wire $y_{1}$ in the above construction. 
\end{proof}

\subsection{The stable marriage problem}
An instance of the stable marriage problem ($\SMP$) is given by a
number $n$ (specifying the number of men and the number of women),
together with a preference list for each man and each woman
specifying a total ordering on all people of the opposite sex.
The goal of $\SMP$ is to produce a perfect matching between men and women,
i.e., a bijection from the set of men to the set of women, such that the
following \emph{stability} condition is satisfied: there are no two
people of the opposite sex who like each other more than their current
partners.  Such a stable solution always exists, but it may not be unique.
Thus $\SMP$ is a search problem (\ref{eq:search}),
rather than a decision problem.

However there is always a unique {\em man-optimal} and a unique
{\em woman-optimal} solution.   In the man-optimal solution each
man is matched with a woman whom he likes at least as well as any
woman that he is matched with in any stable solution.  Dually
for the woman-optimal solution.  Thus we define the
{\em man-optimal stable marriage decision problem} ($\MOSM$) as follows:
given an instance of the stable marriage problem together with
a designated man-woman pair, determine whether that pair is 
married in the man-optimal stable marriage.
We define the {\em woman-optimal stable marriage decision problem}
($\WOSM$) analogously.


We show here that the search version and the decision versions are
computationally equivalent, and each is complete for $\CC$ with
respect to the appropriate reducibility in Definition \ref{d:manyOne}.
\secref{yuli} shows how to reduce the
lexicographical first maximal matching problem (which is complete for
$\CC$) to the $\SMP$ search problem, and Section~\ref{s:SMP-CCV}
shows how to reduce both the $\MOSM$ and $\WOSM$ problems to $\CCV$. 

\subsection{Notation}
We write the notation ``($T \proves$)'' in front of the statement of a theorem to indicate that the  statement is formulated and proved within the theory $T$. 

We often use two-dimensional matrices to encode binary relations, e.g. the
edge relation of a graph, matching, etc. In this paper, it is more convenient for our purpose to index the entries of matrices starting from 0 instead of 1. In other words, if $X_{n\times n}$ is a two-dimensional matrix, then entries of $X$ consist of all $X(i,j)$ for $0\le i,j <n$, and $X(0,0)$ is the top leftmost entry of $X$.

\section{The new theory $\VCC$} \label{s:VCC}
We encode a comparator circuit as a sequence of pairs $\seq{ i,j }$, where each pair $\seq{ i,j }$ encodes a comparator gate that swaps the values of the wires $i$ and $j$ iff the value on wire $i$ is greater than the value of wire $j$. We also allow ``dummy'' gates of the form $\seq{i,i}$, which do nothing. We want to define a formula $\ccv(m,n,X,Y,Z)$, where 
\begin{itemize}
 \item $X$ encodes a comparator circuit with $m$ wires and $n$ gates as sequence of $n$ pairs $\seq{i,j}$ with $i,j<m$, and we write $(X)^{i}$ to denote the $i^{\rm th}$ comparator gate of the circuit encoded by $X$.
 \item $Y(i)$ encodes the input value for the $i^{\rm th}$ wire as a truth value,
and
 \item $Z$  is an $(n+1)\times m$ matrix, where $Z(i,j)$ is the value of wire $j$ at layer $i$, where each layer is simply a sequence of values carried by all the wires right after a comparator gate. 
\end{itemize}
Although $X$ encodes a circuit with only $n$ gates, the matrix $Z$ actually encodes $n+1$ layers since we use the first layer to encode the input values of the comparator circuit. The formula  $\ccv(m,n,X,Y,Z)$ holds iff $Z$ encodes the correct values of the layers computed by the comparator circuit encoded by $X$ with input $Y$, and thus is defined as  the following $\Sigma_0^B$-formula:
\begin{align}  
&	\forall i<m  \bigl(Y(i)\lra  Z(0,i)\bigr) \wedge
  \forall i<n\forall x<m \forall y < m,   \notag\\
&\indent   (X)^{i}=\seq{x,y} \rightarrow
       \left[\begin{array}{ll}
			&Z(i+1,x)\lra \bigl(Z(i,x)\wedge Z(i,y)\bigr)\\
		\wedge	&Z(i+1,y)\lra \bigl(Z(i,x)\vee Z(i,y)\bigr)\\
		\wedge  &\forall j<m
				\bigl[(j\not=x\wedge j\not=y)
			\rightarrow \bigl( Z(i+1,j)\lra Z(i,j)\bigr)\bigr]
		\end{array}\right] \label{eq:ccv}
\end{align}

\begin{definition}The theory $\VCC$ has vocabulary $\mathcal{L}_A^2$ and is axiomatized by the axioms of $\V^0$ and the following axiom
\begin{align}
&\CCVA:&\exists Z\le \seq{m,n+1}+1,\ \ccv(m,n,X,Y,Z), \label{eq:ccva}
\end{align}
where $\ccv(m,n,X,Y,Z)$ is defined as in \eref{ccv}.
\end{definition}

There is a technical lemma required to show that $\VCC$ fits the
framework described in \cite[Chapter~9]{CN10}.
Define $F_{\CCVB}(m,n,X,Y)$ to be the $Z$ satisfying 
$\ccv(m,n.X,Y,Z)$ (with each $Z(i)$ set false when it is not determined).  
We need to show that the {\em aggregate}
$F^\ast_{\CCVB}$ of $F_{\CCVB}$ is $\Sigma^B_1$-definable in $\VCC$,
where (roughly speaking) $F^\ast_{\CCVB}$ is the string function that
gathers the values of $F_{\CCVB}$ for a polynomially long sequence of
arguments.  The nature of $\CC$ circuits makes this easy:  The
sequence of outputs for a sequence of circuits can be obtained
from a single circuit which computes them all in parallel:  the
lines of the composite circuit comprise the union of the lines of
each component circuit.

Thus the framework of \cite[Chapter 9]{CN10} does apply to $\VCC$,
and in particular the theory $\overline{\VCC}$ is a universal
conservative extension of $\VCC$ whose function symbols are precisely
those in the function class $\FCC$.  

It is hard to work with $\VCC$ up to this point since we have not shown whether $\VCC$ can prove the definability of basic counting functions (as in $\VTC^{0}$). However, we have the following theorem.
\begin{theorem}[$\VNC^{1} \subseteq \VCC$] \label{theo:NC2CC}
The theory $\VCC$ proves the axiom $\MFV$  defined in \eref{mfv}.
\end{theorem}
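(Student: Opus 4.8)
The plan is to prove the axiom $\MFV$ inside $\VCC$ by reducing the monotone formula value problem to $\CCV$ and then invoking the axiom $\CCVA$. The key observation is that a single comparator gate computes both $\wedge$ and $\vee$ of its two inputs at once: on inputs $(p,q)$ it outputs $(p\wedge q,\,p\vee q)$. Moreover a monotone Boolean formula is a tree, so every subformula feeds into exactly one parent gate, which is precisely the fan-out-one restriction built into comparator circuits. Thus I expect a direct, layer-by-layer simulation with no need to balance the formula.

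First I would give an $\AC^0$ (indeed $\Sigma^B_0$-definable, hence a function of $\overline{\VCC}$) construction of a comparator circuit from the instance $(n,F,I)$. Assign one wire to each of the $n$ leaves, initialized by $I$, and order the internal gates of $F$ topologically (for instance level-by-level from the leaves when $F$ is given as a balanced tree) so that every gate occurs after both of its children. Realize gate $k$, computing $c_1\,\mathrm{OP}\,c_2$, by a single comparator gate $\seq{w_{c_1},w_{c_2}}$ applied to the two wires currently holding the values of its children. By the encoding of \eref{ccv}, the first wire receives $Z(k,w_{c_1})\wedge Z(k,w_{c_2})$ and the second $Z(k,w_{c_1})\vee Z(k,w_{c_2})$; for an $\wedge$-gate I designate the first wire as the gate's output wire $w_k$, for an $\vee$-gate the second, and the remaining (``junk'') wire is never used again and simply runs to an output node. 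Since each gate consumes two live wires and produces one, the circuit has $n$ wires and $n-1$ gates (padded with dummy gates $\seq{i,i}$ if convenient), hence polynomial size, and satisfies the fan-out-one condition.

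Next I would invoke $\CCVA$ to obtain the layer matrix $Z$ with $\ccv(m,n,X,Y,Z)$ for this circuit and input, and define the witness $Y$ for $\MFV$ by reading off gate values: the value of gate $k$ is $Z(k{+}1,w_k)$ and the value of leaf $i$ is $I(i)$. To verify $\mfv(n,F,I,Y)$ it suffices to check its local consistency at each gate, namely $Y[k]=Y[c_1]\,\mathrm{OP}\,Y[c_2]$. From the $\ccv$ consistency of $Z$ I get $Z(k{+}1,w_k)=Z(k,w_{c_1})\,\mathrm{OP}\,Z(k,w_{c_2})$, so the remaining point is the value-preservation (``forwarding'') claim that a child's value stays on its wire until its parent fires: $Z(c{+}1,w_c)=Z(k,w_c)$ whenever gate $k$ is the parent of gate $c$. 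This holds because, by the topological ordering, no gate strictly between layers $c{+}1$ and $k$ touches the wire $w_c$, so by the untouched-wire clause of \eref{ccv} the value is copied forward unchanged; I would prove it by $\Sigma^B_0$-IND on the layer index, which is available since $\VCC$ extends $\V^0$.

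The main obstacle I anticipate is precisely this bookkeeping of wires across layers: arranging the wire-to-gate assignment and the topological numbering so that (i) the two input wires of gate $k$ genuinely hold its children's values at layer $k$, (ii) no wire carrying a not-yet-consumed value is disturbed, and (iii) the whole assignment is $\Sigma^B_0$-definable from the encoding of $F$. Once these are in place the correctness of the simulation is a clean combination of the local $\ccv$ equations with the preservation lemma, but making the preservation lemma go through with only $\Sigma^B_0$-induction, and exhibiting the gate ordering explicitly and uniformly, is where the real care is needed.
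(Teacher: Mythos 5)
Your proposal is correct and follows essentially the same route as the paper: the paper's (much terser) proof rests on exactly your key observation that a comparator gate simultaneously computes $\wedge$ and $\vee$ with fan-out one, so a monotone formula --- whose gates all have fan-out one --- is directly a special case of $\CCV$, to which the axiom $\CCVA$ then applies. Your additional bookkeeping (wire assignment, topological ordering, the forwarding lemma by $\Sigma^B_0$-induction) is just an explicit elaboration of the details the paper leaves implicit.
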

\begin{proof} Observe that each comparator gate can produce simultaneously an AND gate and an OR gate with the only restriction that each of these gates must have fan-out one. However, since all AND and OR gates of a monotone Boolean formula also have fan-out one, each instance of the  Boolean formula value problem is a special case of $\CCV$.
\end{proof}

Since $\VTC^0 \subseteq \VNC^1$ (see (\ref{eq:theories})) the next
result is an immediate consequence of this theorem.

\begin{corollary}[$\VTC^{0}\subseteq \VCC$]\label{cor:VTC}
The theory $\VCC$ proves the axiom $\NUMO$  defined in \eref{numones}. 
\end{corollary}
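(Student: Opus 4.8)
The plan is to derive this directly from \theoref{NC2CC} together with the inclusion $\VTC^0 \subseteq \VNC^1$ recorded in \eref{theories}; no new combinatorial content is required. Recall that $\VTC^0$ is axiomatized by the axioms of $\V^0$ together with $\NUMO$, while $\VNC^1$ is axiomatized by the axioms of $\V^0$ together with $\MFV$. The inclusion $\VTC^0 \subseteq \VNC^1$ means precisely that every theorem of $\VTC^0$ is a theorem of $\VNC^1$; in particular the axiom $\NUMO$, being trivially a theorem of $\VTC^0$, is provable in $\VNC^1$.

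First I would observe that $\VCC$ proves every axiom of $\VNC^1$: it contains the axioms of $\V^0$ by definition, and it proves $\MFV$ by \theoref{NC2CC}. Hence $\VCC$ proves every theorem of $\VNC^1$, and in particular it proves $\NUMO$, which is exactly the assertion of the corollary.

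The only subtlety worth flagging is that the inclusion $\VTC^0 \subseteq \VNC^1$ invoked here is itself a nontrivial result established in \cite[Chapter 9]{CN10}, resting on the fact that $\VNC^1$ can carry out the counting needed to prove $\NUMO$; we simply import it. Granting that result and \theoref{NC2CC}, the corollary is immediate, so there is no genuine obstacle beyond correctly assembling the chain $\VTC^0 \subseteq \VNC^1 \subseteq \VCC$ at the level of provability of the relevant axiom.
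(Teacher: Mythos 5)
Your proof is correct and follows exactly the route the paper takes: the paper also obtains the corollary as an immediate consequence of \theoref{NC2CC} combined with the inclusion $\VTC^{0}\subseteq\VNC^{1}$ from \eref{theories}, imported from \cite[Chapter 9]{CN10}. Your spelled-out justification (that $\VCC$ proves every axiom, hence every theorem, of $\VNC^{1}$, and that $\NUMO$ is a theorem of $\VNC^{1}$) is just the same argument made explicit.
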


This corollary is important since it allows us to
use the counting ability of $\VTC^{0}$ freely in $\VCC$ proofs.
In particular using counting and induction on the layers of a comparator
circuit, we can prove in $\VTC^{0}$ the following fundamental property of
comparator circuits.
\begin{corollary}\label{cor:c1}
Given a comparator circuit computation, the theory $\VTC^{0}$ (and hence $\VCC$)  proves that all layers of the computation have the same number of ones and zeros.
\end{corollary}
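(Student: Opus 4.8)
The plan is to prove Corollary~\ref{cor:c1} by induction on the layer index, using the counting ability made available by Corollary~\ref{cor:VTC}. The key observation is purely local: a comparator gate $\seq{x,y}$ applied to a pair of bits $(p,q)$ produces the pair $(p\wedge q,\, p\vee q)$, and for Boolean values we always have $(p\wedge q)+(p\vee q)=p+q$. In other words, a comparator gate permutes the multiset of values it touches—if the inputs are $\{0,1\}$ in some order it outputs $\{0,1\}$, and if they are $\{0,0\}$ or $\{1,1\}$ it leaves them unchanged—so it preserves the total number of ones among the two wires it connects, while leaving every other wire untouched. Hence the number of ones in a layer is invariant from one layer to the next.

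First I would fix a comparator circuit computation $Z$ satisfying $\ccv(m,n,X,Y,Z)$ as in \eref{ccv}, and define, using $\NUMO$ (available by Corollary~\ref{cor:VTC}), a string function $\mathit{ones}(i)$ giving the number of ones in row $i$ of $Z$, i.e. the number of $j<m$ with $Z(i,j)$ true. Next I would prove the single-step invariant: for each $i<n$, letting $(X)^i=\seq{x,y}$, the defining clauses of \eref{ccv} tell us that $Z(i+1,x)\lra(Z(i,x)\wedge Z(i,y))$, that $Z(i+1,y)\lra(Z(i,x)\vee Z(i,y))$, and that $Z(i+1,j)\lra Z(i,j)$ for all $j$ different from $x$ and $y$. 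A $\Sigma_0^B$ case analysis on the four possibilities for $(Z(i,x),Z(i,y))$ shows that the multiset $\{Z(i+1,x),Z(i+1,y)\}$ equals $\{Z(i,x),Z(i,y)\}$, so the number of ones contributed by wires $x$ and $y$ is unchanged and the remaining wires are copied verbatim; thus $\mathit{ones}(i+1)=\mathit{ones}(i)$. Formalizing this counting equality is exactly where $\VTC^0$ is needed, since it requires comparing the count over the set $\{x,y\}$ against the count over its complement and adding—an argument that uses the provable additivity of the counting function. Finally I would apply $\Sigma_0^B$-induction on $i$ (valid in $\V^0\subseteq\VCC$, or more directly in $\VTC^0$ over the counting predicate) to conclude $\mathit{ones}(i)=\mathit{ones}(0)$ for all $i\le n$; since row $0$ is the input $Y$, every layer has the same number of ones, and hence the same number of zeros, as $\mathit{ones}(i)+\mathit{zeros}(i)=m$.

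The main obstacle is the careful bookkeeping of the counting argument inside $\VTC^0$: one must split the wire count into the two wires touched by the gate and the $m-2$ wires that are not, verify that the count restricted to each part is preserved, and reassemble the total. The delicate point is handling the possibility that the gate is a dummy gate $\seq{x,x}$ (permitted by the encoding) and, more generally, that $x$ and $y$ are two distinct indices that must be excluded exactly once from the ``unchanged'' count; getting the provable additivity of $\mathit{ones}$ over this partition right, rather than the elementary per-gate case analysis, is where the real work lies. Everything else is a routine induction once the single-step invariant $\mathit{ones}(i+1)=\mathit{ones}(i)$ is established within the theory.
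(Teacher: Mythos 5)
Your proof is correct and takes essentially the same approach as the paper, which establishes Corollary~\ref{cor:c1} only by the remark that it follows ``using counting and induction on the layers of a comparator circuit'' in $\VTC^{0}$ --- exactly your per-gate conservation argument ($p\wedge q$ plus $p\vee q$ equals $p$ plus $q$) combined with $\Sigma_{0}^{B}$-induction on the layer index. Your write-up merely supplies the details (the four-case local analysis, the additivity of the counting function over the partition into touched and untouched wires, and the dummy-gate case) that the paper leaves implicit.
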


\begin{theorem}[$\VCC \subseteq \VP$] The theory $\VP$ proves the axiom $\CCVA$  defined in \eref{ccva}.
\end{theorem}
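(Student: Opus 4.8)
Since $\VCC$ is axiomatized by the axioms of $\V^0$ together with $\CCVA$, and $\VP$ extends $\V^0$, it suffices to derive $\CCVA$ inside $\VP$. The plan is to exploit the observation already used in \theoref{NC2CC}: a comparator-circuit computation is a special case of a monotone-circuit computation. A single comparator gate $\seq{x,y}$ simultaneously realizes the AND producing $Z(i,x)\wedge Z(i,y)$ and the OR producing $Z(i,x)\vee Z(i,y)$, while every wire $j$ untouched by the gate merely forwards its value, which is an identity gate realizable monotonically as $v\wedge v$. So I would construct, by a $\Sigma_0^B$-definable (hence $\V^0$-provably-total) function, a monotone circuit whose gate values are exactly the entries of the desired matrix $Z$, apply the $\MCV$ axiom of $\VP$ to evaluate it, and read $Z$ off the evaluation.

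Concretely, the circuit has $(n+1)\times m$ gates arranged in $n+1$ layers of $m$ gates each. The gates of layer $0$ are input gates carrying the values $Y(0),\ldots,Y(m-1)$. For $i\ge 1$, the gate at position $(i,j)$ is determined by the $(i-1)^{\rm st}$ comparator gate $(X)^{i-1}=\seq{x,y}$: it is an AND of gates $(i-1,x)$ and $(i-1,y)$ when $j=x$ (with $x\ne y$), an OR of the same two gates when $j=y$, and an identity copy of gate $(i-1,j)$ otherwise (this covers dummy gates $\seq{x,x}$ as well). Since each layer-$i$ gate may feed two gates of layer $i+1$, the resulting object is genuinely a circuit (a DAG), not a formula; unfolding it into a tree would blow up its size. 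This is precisely why the argument needs the monotone circuit value axiom $\MCV$ of $\VP$ rather than the monotone formula value axiom $\MFV$ of $\VNC^{1}$.

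With this circuit encoded as a string by a provably-total $\AC^0$ function of $(m,n,X,Y)$, I would invoke $\MCV$ in $\VP$ to obtain a string $W$ encoding its evaluation. Defining $Z(i,j)$ to be the value recorded in $W$ at gate $(i,j)$ gives a string bounded by $\seq{m,n+1}+1$ (a routine bound check), and the correctness property guaranteed by the defining formula of $\MCV$ lets me verify each conjunct of $\ccv(m,n,X,Y,Z)$ in \eref{ccv}: the clause $Y(i)\lra Z(0,i)$ comes from the input layer, and the three gate clauses come directly from how the AND, OR, and copy gates at layer $i+1$ were defined.

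The main obstacle is purely one of bookkeeping: translating the layered circuit above into the exact string encoding expected by the $\MCV$ axiom (whose precise format the paper fixes only by analogy with $\MFV$), checking that both this translation and the extraction of $Z$ are $\Sigma_0^B$-definable, and confirming that the correctness of $W$ transfers conjunct-by-conjunct to $\ccv$. None of this requires the counting machinery of \corref{VTC}; the conceptual content lies entirely in the first paragraph, and everything else is careful indexing.
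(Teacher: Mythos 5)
Your proposal is correct and takes essentially the same route as the paper: the paper's entire proof is the single observation that $\CCV$ is a special case of the monotone circuit value problem, which is exactly the observation your first paragraph elaborates. The rest of your write-up (the layered monotone circuit with fan-out two, the appeal to the $\MCV$ axiom, and the $\Sigma_0^B$ bookkeeping) is just the routine expansion of the details that the paper leaves implicit.
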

\begin{proof} This easily follows since $\CCV$ is a special case of the \emph{monotone circuit value} problem.
\end{proof}

\section{Lexicographical first maximal matching problem is $\CC$-complete \label{sec:lfcom}}
Let $G=(V,W,E)$ be a bipartite graph, where $V=\set{v_i}_{i=0}^{m-1}$, $W=\set{w_i}_{i=0}^{n-1}$ and $E\subseteq V\times W$. The \emph{lexicographical first maximal matching} (lfm-matching) is the matching produced by successively matching each vertex $v_{0},\ldots,v_{m-1}$ to the least vertex available  in $W$. 

Formally, let $E_{m\times n}$ be a matrix encoding the edge relation of a bipartite graph with $m$ bottom nodes and $n$ top nodes, where $E(i,j)=1$ iff  the bottom node $v_{i}$ is adjacent to the top node $w_{j}$. 
Let $L$ be a matrix of the same size as $E$ with the following intended interpretation: $L(i,j)=1$ iff the edge $(v_i,w_j)$ is in the lfm-matching. We can define a $\Sigma_{0}^{B}$-formula $\lfmm(m,n,X,L)$, which asserts that $L$ properly encodes the lfm-matching of the bipartite graph represented by $X$ as follows:
\begin{align}
\forall i < m \forall j < n, \,L(i,j)\leftrightarrow 
\left[
\begin{array}{ccl}
E(i,j) &\wedge& \forall k<j\, \forall \ell < i \bigl(\neg L(i,k) \wedge \neg L(\ell ,j)\bigr)\\
&\wedge& \forall k<j \bigl(\neg E(i,k) \vee \exists i' < i\, L(i',k)\bigr)
\end{array}\right]. \label{eq:lfmm}
\end{align}

In this section we will show that two decision problems concerning the lexicographical matching of  a bipartite graph are $\CC$-complete (under many-one $\AC^{0}$-reductions).   
The lexicographical  first maximal matching  problem ($\LFMM$) is to decide if a designated 
edge belongs to the lfm-matching of  a bipartite graph $G$.
The vertex version of lexicographical first maximal matching problem ($\VLFMM$) is to decide if a designated top node is matched in the lfm-matching of a bipartite graph $G$. $\LFMM$ is the usual way to define a decision problem for  lexicographical  first maximal matching as seen in \cite{MS92,Sub94}; however, as shown in Sections \ref{sec:c2vl} and \ref{sec:vl2cc}, the $\VLFMM$ problem is even more closely related to the $\CCV$ problem.

We will show that the following two more restricted lexicographical matching problems  are also $\CC$-complete.  We define $\TLFMM$ to be the restriction of $\LFMM$ to bipartite graphs of degree at most three. We define $\TVLFMM$ to be the restriction of $\VLFMM$ to bipartite graphs of degree at most three. 

To show that the problems defined above are equivalent under many-one $\AC^{0}$-reductions, it turns out that we also need the following intermediate problem.  A negation gate flips the value on a wire.  For comparator circuits with negation gates, we allow negation gates to appear on any wire
(see the left diagram of \figref{cn2c} below for an example). 
The comparator circuit value problem with negation gates ($\CCVN$) is: given a comparator circuit with negation gates and input assignment, and a designated wire, decide if that wire outputs 1.

All reductions in this section are summarized in \figref{redu}.

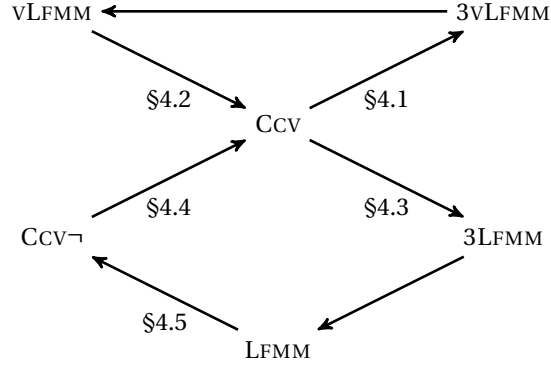
\begin{figure}
\begin{center}
        \tikzstyle{vertex}=
        [%
          	minimum size=5mm,%
          	rectangle,%
	        	thick,%
		text centered
        ]

       \tikzstyle{edge} = [draw,line width=1pt,->]
      \begin{tikzpicture}[>=stealth',scale=1.5]

    	\foreach \pos/\label/\name in 
			{{(0,1)/\VLFMM/vl}, {(2,0)/\CCV/c}, {(4,1)/\TVLFMM/tv},
			{(0,-1)/\CCVN/cvn}, {(2,-2)/\LFMM/el}, {(4,-1)/\TLFMM/te}}
        		\node[vertex] (\name) at \pos {$\label$};
    
    	\foreach \source/ \dest/\label  in {{vl/c/vl2cc},{c/tv/c2vl},{c/te/c2te},{el/cvn/el2cvn},{cvn/c/cvn2c}}
       		\path [edge] (\source) -- node [circle,below,minimum size=5mm,inner sep=1pt] {\S\ref{sec:\label}} (\dest) ;

    	\foreach \source/ \dest  in {{tv/vl},{te/el}}
       		\path[edge] (\source) -- (\dest);

      \end{tikzpicture}
\end{center}

\caption{Label of an arrow denotes the section in which the reduction is described. Arrows without labels denote trivial reductions.}
\label{fig:redu}
\end{figure}

\subsection{$\CCV  \mred^{\AC^{0}}  \TVLFMM$ \label{sec:c2vl}} 
By \propref{p1} it suffices to consider only instances of $\CCV$
in which all comparator gates point upward. We will show that these instances of $\CCV$ are $\AC^{0}$-many-one-reducible to instances of $\TVLFMM$, which consist of bipartite graphs with \emph{degree at most three}.

The key observation is that a comparator gate on the left below closely relates to an instance of $\TVLFMM$ on the right. We use the top nodes $p_{0}$ and $q_{0}$ to represent the values $p_{0}$ and $q_{0}$ carried by the wires $x$ and $y$ respectively before the comparator gate, and the nodes $p_{1}$ and $q_{1}$ to represent the values of $x$ and $y$ after the comparator gate, where a top node is matched iff its respective value is one. 
\begin{center}
\begin{minipage}{0.4\textwidth}
\centering
\small 
\hspace{-2cm}$\Qcircuit @C=2.5em @R=1.3em {
\push{\bit{p_{0}}}&\push{x\,} & \ctrlu{1}	& \rstick{p_{1} = p_{0}\vee q_{0}}\qw\\
\push{\bit{q_{0}}}&\push{y\,} & \ctrl{-1} & \rstick{q_{1} = p_{0}\wedge q_{0}}\qw  
}$
\end{minipage}
\begin{minipage}{0.2\textwidth}
\centering
\begin{tikzpicture}
\draw[-to,line width=1.5pt,snake=snake,segment amplitude=.5mm,
         segment length=2mm,line after snake=1mm]  (0,0) -- (2,0);
\end{tikzpicture}
\end{minipage}
\begin{minipage}{0.35\textwidth}
\centering
\tikzstyle{vertex}=[circle,fill=black!30,minimum size=15pt,inner sep=0pt]
\tikzstyle{edge} = [draw,line width=1pt,-]
\begin{tikzpicture}[scale=0.6, auto,swap]
    \foreach \pos/\label/\name in {{(0,2)/p_{0}/p0}, {(2,2)/q_{0}/q0}, {(4,2)/p_{1}/p1},{(6,2)/q_{1}/q1},
                            {(4,0)/x/x}, {(6,0)/y/y}}
        \node[vertex] (\name) at \pos {$\label$};
        
    \foreach \source/ \dest  in {x/p0,x/p1,y/q0,y/p1,y/q1}
        \path[edge] (\source) -- (\dest);

\end{tikzpicture}

\end{minipage}
\end{center}
If nodes $p_{0}$ and $q_{0}$ are not previously matched, i.e. $p_{0}=q_{0}=0$ in the comparator circuit, then edges $\seq{x,p_{0}}$ and $\seq{y,q_{0}}$ are added to the lfm-matching. So the nodes $p_{1}$ and $q_{1}$ are not matched.  If $p_{0}$ is previously matched, but $q_{0}$ is not, then edges $\seq{x,p_{1}}$ and $\seq{y,q_{0}}$ are added to the lfm-matching. So the node $p_{1}$ will be matched but $q_{1}$ will remain  unmatched. The other two cases are similar.

Thus, we can reduce a comparator circuit to the bipartite graph of an $\TVLFMM$ instance  by  converting each  comparator gate into a ``gadget'' described above. We will describe our method through an example, where we are given the comparator circuit in \figref{cex}. 

\begin{wrapfigure}{r}{0.3\textwidth}
\vspace{-.2cm}
\centering
{\small
$\Qcircuit @C=2.2em @R=0.7em {
\push{\bit{0}}&\push{a\,} & \ctrlu{1}	& \qw&    \rstick{\bit{1}} \qw \\
\push{\bit{1}}&\push{b\,} & \ctrl{-1} & \ctrlu{1}&   \rstick{\bit{1}} \qw  \\
\push{\bit{1}}&\push{c\,} & \qw &	\ctrl{-1}&   \rstick{\bit{0}} \qw  \\
		&	    &\lstick{0} & \lstick{1} &\lstick{2}\\
}$}
\vspace{-.15cm}
\caption{}
\label{fig:cex}
\end{wrapfigure}

We divide the comparator circuit into vertical layers 0, 1, and 2 as shown in \figref{cex}. Since the  circuit has three wires $a$, $b$ and $c$, for each layer $i$, we use six nodes, including three top nodes $a_{i}$, $b_{i}$ and $c_{i}$ representing the values of the wires $a$, $b$ and $c$ respectively, and three bottom nodes $a'_{i},b'_{i},c'_{i}$, which are auxiliary nodes used to simulate the effect of the comparator gate at layer $i$.\\
\textbf{Layer 0:} This is the input layer, so we add an edge $\set{x_{i},x'_{i}}$ iff the wire $x$ takes input value 1. In this example, since $b$ and $c$ are wires taking input 1, we need to add the edges $\set{b_{0},b'_{0}}$ and $\set{c_{0},c'_{0}}$. 
\begin{center}
\tikzstyle{vertex}=[circle,fill=black!30,minimum size=15pt,inner sep=0pt]
\tikzstyle{edge} = [draw,line width=1pt,-]
\begin{tikzpicture}[scale=1, auto,swap]
    \foreach \i in {0,1,2}{
    	\node[vertex] (a\i) at (0+\i*3,1.2) {$a_{\i}$} ; 
	\node[vertex] (b\i) at (1+\i*3,1.2) {$b_{\i}$} ; 
	\node[vertex] (c\i) at (2+\i*3,1.2) {$c_{\i}$} ; 
	
	\node[vertex] (a'\i) at (0+\i*3,0) {$a'_{\i}$} ; 
	\node[vertex] (b'\i) at (1+\i*3,0) {$b'_{\i}$} ; 
	\node[vertex] (c'\i) at (2+\i*3,0) {$c'_{\i}$} ; 
    }
        
    \foreach \source/ \dest  in {b'0/b0,c'0/c0}
        \path[edge] (\source) -- (\dest);

\end{tikzpicture}
\end{center}
\textbf{Layer 1:} We then add the gadget simulating the comparator gate from wire $b$ to wire $a$ as follows.
\begin{center}
\tikzstyle{vertex}=[circle,fill=black!30,minimum size=15pt,inner sep=0pt]
\tikzstyle{edge} = [draw,line width=1pt,-]
\tikzstyle{dedge} = [draw,dashed,line width=1pt,-]
\begin{tikzpicture}[scale=1, auto,swap]
    \foreach \i in {0,1,2}{
    	\node[vertex] (a\i) at (0+\i*3,1.2) {$a_{\i}$} ; 
	\node[vertex] (b\i) at (1+\i*3,1.2) {$b_{\i}$} ; 
	\node[vertex] (c\i) at (2+\i*3,1.2) {$c_{\i}$} ; 
	
	\node[vertex] (a'\i) at (0+\i*3,0) {$a'_{\i}$} ; 
	\node[vertex] (b'\i) at (1+\i*3,0) {$b'_{\i}$} ; 
	\node[vertex] (c'\i) at (2+\i*3,0) {$c'_{\i}$} ; 
    }
        
    \foreach \source/ \dest  in {b'0/b0,c'0/c0,a'1/a0,a'1/a1,b'1/b0,b'1/a1,b'1/b1}
        \path[edge] (\source) -- (\dest);

    \foreach \source/ \dest  in {c'1/c0,c'1/c1}
        \path[dedge] (\source) -- (\dest);

\end{tikzpicture}
\end{center}
Since the value of wire $c$ does not change when going from layer 0 to layer 1, we can simply propagate the value of $c_{0}$ to $c_{1}$ using the pair of dashed edges in the picture.\\
\textbf{Layer 2:} We proceed very similarly to layer 1 to get the following bipartite graph.
\begin{center}
\tikzstyle{vertex}=[circle,fill=black!30,minimum size=15pt,inner sep=0pt]
\tikzstyle{edge} = [draw,line width=1pt,-]
\tikzstyle{dedge} = [draw,dashed,line width=1pt,-]
\begin{tikzpicture}[scale=1, auto,swap]
    \foreach \i in {0,1,2}{
    	\node[vertex] (a\i) at (0+\i*3,1.2) {$a_{\i}$} ; 
	\node[vertex] (b\i) at (1+\i*3,1.2) {$b_{\i}$} ; 
	\node[vertex] (c\i) at (2+\i*3,1.2) {$c_{\i}$} ; 
	
	\node[vertex] (a'\i) at (0+\i*3,0) {$a'_{\i}$} ; 
	\node[vertex] (b'\i) at (1+\i*3,0) {$b'_{\i}$} ; 
	\node[vertex] (c'\i) at (2+\i*3,0) {$c'_{\i}$} ; 
    }
        
    \foreach \source/ \dest  in 	{b'0/b0,c'0/c0,a'1/a0,
    						a'1/a1,b'1/b0,b'1/a1,b'1/b1,
						b'2/b1,b'2/b2,c'2/c1,c'2/c2,c'2/b2}
        \path[edge] (\source) -- (\dest);

    \foreach \source/ \dest  in {c'1/c0,c'1/c1,a'2/a1,a'2/a2}
        \path[dedge] (\source) -- (\dest);

\end{tikzpicture}
\end{center}
Finally, we can get the output values of the comparator circuit by looking at the ``output'' nodes $a_{2},b_{2},c_{2}$ of this bipartite graph. We can easily check that $a_{2}$ is the only node that remains unmatched, which corresponds exactly to the only zero produced by wire $a$ of the comparator circuit in \figref{cex}.

\begin{remark}
The construction above is an $\AC^0$ many-one reduction since each gate in the comparator circuit can be reduced to exactly one gadget in the bipartite graph that simulates the effect of the comparator gate. Note that  since it can be tedious and unintuitive to work with  $\AC^{0}$-circuits, it might seem hard to justify that our reduction is an $\AC^0$-function. However, thanks to \theoref{szb}, we do not have to work with $\AC^{0}$-circuits directly; instead, it is  not hard to construct a $\Sigma_{0}^{B}$-formula that defines the above reduction. 
\end{remark}

The correctness of our construction can be proved in $\VCC$ by using $\Sigma_{0}^{B}$ induction on the layers of the circuits and arguing that the matching information of the nodes in the bipartite graph can be correctly translated to the values carried by the wires at each layer.

\subsection{$\VLFMM \mred^{\AC^{0}} \CCV$ \label{sec:vl2cc}}
Consider an instance of $\VLFMM$  consisting of a bipartite graph in \figref{l2c}. Recall that we find the lfm-matching by matching the bottom nodes $x, y, \ldots$ successively  to the first available node on the top. Hence we can simulate the matching of the bottom nodes to the top nodes using  comparator circuit on the right of \figref{l2c}, where we can think of the moving of a one, say from wire $x$ to wire $a$, as the matching of node $x$ to node $a$ in the original bipartite graph. In this construction, a top node is matched iff its corresponding wire in the circuit outputs 1.
\begin{figure}[!h]
\centering
\begin{minipage}{0.3\textwidth}
\tikzstyle{vertex}=[circle,fill=black!30,minimum size=15pt,inner sep=0pt]
\tikzstyle{edge} = [draw,line width=1pt,-]
\tikzstyle{cedge} = [draw,line width=4pt,-,black!35]
\begin{tikzpicture}[scale=0.6, auto,swap]
    \foreach \pos/\name in {{(0,2)/a}, {(2,2)/b}, {(4,2)/c},{(6,2)/d},
                            {(0,0)/x}, {(2,0)/y}, {(4,0)/z}}
        \node[vertex] (\name) at \pos {$\name$};
        
    \foreach \source/ \dest  in {x/a,x/b,x/c,y/a,y/c,z/b,z/d}
        \path[edge] (\source) -- (\dest);

    \begin{pgfonlayer}{background}
    \foreach \source/ \dest  in {x/a,y/c,z/b}
        \path[cedge] (\source) -- (\dest);
    \end{pgfonlayer}

\end{tikzpicture}
\end{minipage}
\begin{minipage}{0.18\textwidth}
\centering
\begin{tikzpicture}
\draw[-to,line width=1.5pt,snake=snake,segment amplitude=.5mm,
         segment length=2mm,line after snake=1mm]  (0,0) -- (1.5,0);
\end{tikzpicture}
\end{minipage}
\begin{minipage}{0.45\textwidth}
\centering
{\small$\Qcircuit @C=1.4em @R=0.5em {
\push{\bit{0}}&\push{a\,} & \ctrlu{4}	& \qw 	&\qw		&\qw		&\ctrlu{5}	& \qw 	& \qw	& \qw	& \qw	& \qw	& \qw	& \qw &\rstick{\bit{1}} \qw \\
\push{\bit{0}}&\push{b\,} &	\qw		& \ctrlu{3} &\qw  	&\qw		&\qw		&\qw		&\qw		&\qw		& \qw	& \ctrlu{5}	& \qw	& \qw &\rstick{\bit{1}} \qw\\
\push{\bit{0}}&\push{c\,} &	\qw		& \qw  	&\qw  	&\qw		&\qw 	&\qw		&\ctrlu{3} 	&\qw		&\qw		&\qw		&\qw		&\qw&\rstick{\bit{1}} \qw\\
\push{\bit{0}}&\push{d\,} & \qw	 	& \qw 	& \ctrlu{1} 	&\qw		&\qw		&\qw		&\qw 	&\qw		& \qw	& \qw 	& \qw	& \ctrlu{3}&\rstick{\bit{0}} \qw \\
\push{\bit{1}}&\push{x\,} & \ctrl{-1} 	& \ctrl{-1}  & \ctrl{-1} 	&\ctrl{0}	& \qw	&\qw		&\qw 	&\qw		&\qw		&\qw		& \qw	& \qw&    \rstick{\bit{0}} \qw \\
\push{\bit{1}}&\push{y\,} & \qw 		& \qw 	&\qw		&\qw		& \ctrl{-1}  &\ctrl{0}	&\ctrl{-1}	&\ctrl{0}	& \qw	& \qw 	& \qw	& \qw&\rstick{\bit{0}} \qw \\
\push{\bit{1}}&\push{z\,} &	\qw		& \qw  	&\qw  	&\qw		& \qw	&\qw		&\qw 	&\qw		&\ctrl{0}	&\ctrl{-1}	& \ctrl{0}	&\ctrl{-1}&\rstick{\bit{0}} \qw
\gategroup{1}{3}{7}{6}{1em}{-}
\gategroup{1}{7}{7}{10}{1em}{-}
\gategroup{1}{11}{7}{14}{1em}{-}
}$}
\end{minipage}
\caption{}
\label{fig:l2c}
\end{figure}

Note that we draw bullets without any arrows going out from them in the circuit to  denote dummy gates, which do nothing. These dummy gates are introduced for the following technical reason. Since the bottom nodes might not have the same degree, the position of a comparator gate really depends on the number of edges that do not appear in the bipartite graph, which makes it harder to give a direct $\AC^{0}$-reduction. By using dummy gates, we can treat the graph as if it is a complete bipartite graph, where the missing edges are represented by dummy gates. This can easily be shown to be an $\AC^0$-reduction from $\VLFMM$ to $\CCV$, and its correctness can be carried
out in $\VCC$ using $\Sigma^B_0$-induction on the layers of the circuit.
This together with the reduction from \secref{c2vl} gives us the following theorem.

\begin{theorem}\label{theo:lfcom} 
($\VCC\proves$) The problems $\CCV$, $\TVLFMM$  and  $\VLFMM$ are equivalent under many-one $\AC^0$-reductions.
\end{theorem}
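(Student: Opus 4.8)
The plan is to obtain the three-way equivalence from a single cycle of many-one $\AC^0$-reductions together with the transitivity of $\mred^{\AC^0}$. The two nontrivial legs of the cycle are already in hand: \secref{c2vl} gives $\CCV \mred^{\AC^0} \TVLFMM$ via the three-edge gadget (so the target is indeed a bipartite graph of degree at most three), and \secref{vl2cc} gives $\VLFMM \mred^{\AC^0} \CCV$ via the circuit that simulates greedy matching. The third leg, $\TVLFMM \mred^{\AC^0} \VLFMM$, is trivial: since $\TVLFMM$ is by definition the restriction of $\VLFMM$ to degree-at-most-three graphs, every $\TVLFMM$ instance is already a $\VLFMM$ instance and the reduction is the identity. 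Chaining these around
\[
  \CCV \mred^{\AC^0} \TVLFMM \mred^{\AC^0} \VLFMM \mred^{\AC^0} \CCV
\]
and using transitivity produces a reduction between each ordered pair of the three problems, which is exactly the asserted equivalence.

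The one structural fact I would record explicitly is that $\mred^{\AC^0}$ is transitive, i.e.\ that $\mathsf{FAC}^0$ is closed under composition in the two-sorted setting. This holds because substituting the $\Sigma_0^B$ bit-graph of one reduction function for each string atom $Y(t)$ appearing in the $\Sigma_0^B$ bit-graph of the next (and the matching $\AC^0$ length term for $|Y|$) again yields a $\Sigma_0^B$ formula: no string quantifier is introduced and every number quantifier remains bounded. Since the reductions are thus $\mathsf{FAC}^0$ functions and $\V^0 \subseteq \VCC$, each reduction function is provably total in $\VCC$ and available as a symbol of $\overline{\VCC}$, so the whole cycle argument can be carried out inside the theory, matching the ``$\VCC \proves$'' annotation.

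For the $\VCC$-provable correctness of the two nontrivial legs I would simply invoke the per-section arguments. For \secref{c2vl} the correctness is the invariant ``top node $x_i$ is matched in the lfm-matching of the output graph iff wire $x$ carries value $1$ at layer $i$'' (with upward gates justified by \propref{p1}), proved by $\Sigma_0^B$-induction on layers; for \secref{vl2cc} it is ``a top wire outputs $1$ iff the corresponding top node is matched,'' again by $\Sigma_0^B$-induction on layers. These invariants are precisely the defining biconditionals $R_1 \lra R_2$ of the respective reductions. Consequently the present theorem is essentially bookkeeping: the genuine effort is front-loaded into those two layer inductions, and the only point needing care in the assembly is the closure-under-composition step above, which guarantees that chaining $\AC^0$ reductions stays within $\AC^0$ and hence within reach of $\VCC$.
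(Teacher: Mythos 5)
Your proposal is correct and takes essentially the same approach as the paper: \theoref{lfcom} is obtained there from exactly the same cycle, namely the reduction $\CCV \mred^{\AC^{0}} \TVLFMM$ of \secref{c2vl}, the trivial restriction reduction $\TVLFMM \mred^{\AC^{0}} \VLFMM$ (an unlabelled arrow in \figref{redu}), and the reduction $\VLFMM \mred^{\AC^{0}} \CCV$ of \secref{vl2cc}, chained by transitivity. Your explicit verification that $\mred^{\AC^{0}}$ is transitive (i.e.\ that $\mathsf{FAC}^{0}$ is closed under composition via substitution of $\Sigma_{0}^{B}$ bit-graphs) is left implicit in the paper but is the standard fact and changes nothing.
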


%

\subsection{$\CCV  \mred^{\AC^{0}}  \TLFMM$ \label{sec:c2te}}
We start by applying the reduction  $\CCV  \mred^{\AC^{0}}  \TVLFMM$
of \secref{c2vl} to get an instance of $\TVLFMM$, and notice that 
the degrees of the top ``output'' nodes of the resulting bipartite
graph, e.g. the nodes $a_{2},b_{2},c_{2}$ in the example of \secref{c2vl},
have degree at most two.
Now we show how to reduce such instances of $\TVLFMM$ (i.e. those
whose designated top vertices have degree at most two) to $\TLFMM$.
Consider the graph $G$ with degree at most three and a designated top
vertex $b$ of degree two as shown on the left of \figref{vl2el}. We construct a bipartite graph $G'$, which contains a copy of the graph $G$ and one additional top node $w_t$ and one additional bottom node $w_b$, and two edges $\set{b, w_b}$ and $\set{w_t, w_b}$, as shown in \figref{vl2el}. Observe that the degree of the new graph $G'$ is at most three.
\begin{figure}[!h]
\begin{center}
\begin{minipage}{0.3\textwidth}
\centering 
\tikzstyle{vertex}=[circle,fill=black!30,minimum size=15pt,inner sep=0pt]
\tikzstyle{edge} = [draw,line width=1pt,-]
\tikzstyle{cedge} = [draw,line width=4pt,-,black!35]
\tikzstyle{dedge} = [draw,dashed,line width=1pt,-]
\begin{tikzpicture}[scale=0.6, auto,swap]
    \foreach \pos/\name in {{(0,2)/a}, {(2,2)/b}, {(4,2)/c},
                            {(0,0)/x}, {(2,0)/y}, {(4,0)/z}}
        \node[vertex] (\name) at \pos {$\name$};
        
    \foreach \source/ \dest  in {x/a,x/b,x/c,y/a,y/c,z/b}
        \path[edge] (\source) -- (\dest);

\end{tikzpicture}
\end{minipage}
\begin{minipage}{0.2\textwidth}
\centering
\begin{tikzpicture}
\draw[-to,line width=1.5pt,snake=snake,segment amplitude=.5mm,
         segment length=2mm,line after snake=1mm]  (0,0) -- (2,0);
\end{tikzpicture}
\end{minipage}
\begin{minipage}{0.4\textwidth}
\centering
\tikzstyle{vertex}=[circle,fill=black!30,minimum size=15pt,inner sep=0pt]
\tikzstyle{edge} = [draw,line width=1pt,-]
\tikzstyle{dedge} = [draw,dashed,line width=1pt,-]

\begin{tikzpicture}[scale=0.6, auto,swap]
    \foreach \pos/\name in {{(0,2)/a}, {(2,2)/b}, {(4,2)/c}, {(6,2)/w_t}, 
    					{(0,0)/x}, {(2,0)/y}, {(4,0)/z},{(6,0)/w_b}}
        \node[vertex] (\name) at \pos {$\name$};
        
    \foreach \source/ \dest  in {x/a,x/b,x/c,y/a,y/c,z/b}
        \path[edge] (\source) -- (\dest);
        
    \foreach \source/ \dest  in {{w_b/b},{w_b/w_t}}
        \path[dedge] (\source) -- (\dest);

\end{tikzpicture}
\end{minipage}
\end{center}
\caption{}
\label{fig:vl2el}
\end{figure}

We treat the resulting bipartite graph $G'$ and the edge $\set{w_{t},w_{b}}$ as an instance of $\TLFMM$. It is not hard to see that the vertex $b$ is matched in the lfm-matching of the original bipartite graph $G$ iff the edge $\set{w_t, w_b}$ is in the lfm-matching of the new bipartite graph $G'$.

\subsection{$\CCVN \mred^{\AC^{0}}  \CCV$ \label{sec:cvn2c}}
Recall that a comparator circuit value problem with negation gates ($\CCVN$) is the task of deciding,  on a comparator circuit with negation gates  and an input assignment, if a designated wire outputs one. 
It should be clear that $\CCV$ is a special case of $\CCVN$ and hence $\AC^{0}$ many-one-reducible to $\CCVN$. Here, we show the nontrivial direction that $\CCVN \mred^{\AC^{0}}  \CCV$.

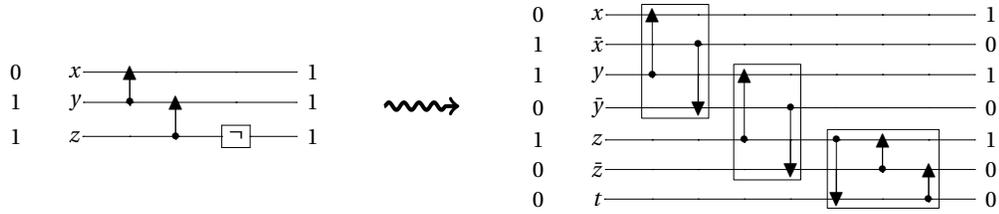
\begin{figure}[!h]
\begin{center}
\begin{minipage}{0.3\textwidth}
\centering
\small 
$\Qcircuit @C=2.2em @R=0.7em {
\push{\bit{0}}&\push{x\,} & \ctrlu{1}	& \qw&  \qw  &\rstick{\bit{1}} \qw \\
\push{\bit{1}}&\push{y\,} & \ctrl{-1} & \ctrlu{1}&  \qw &\rstick{\bit{1}} \qw  \\
\push{\bit{1}}&\push{z\,} & \qw &	\ctrl{-1}&   \gate{\neg}&\rstick{\bit{1}} \qw  
}$
\end{minipage}
\begin{minipage}{0.2\textwidth}
\centering
\begin{tikzpicture}
\draw[-to,line width=1.5pt,snake=snake,segment amplitude=.5mm,
         segment length=2mm,line after snake=1mm]  (0,0) -- (1,0);
\end{tikzpicture}
\end{minipage}
\begin{minipage}{0.4\textwidth}
\centering
\small 
$\Qcircuit @C=2.2em @R=0.7em {
\push{\bit{0}}&\push{x\,}  	& \ctrlu{1}	& \qw	& \qw	& \qw	&  \qw 	& \qw	 & \qw	&\rstick{\bit{1}} \qw \\
\push{\bit{1}}&\push{\bar{x}\,} 	& \qw	& \ctrl{2}	& \qw	& \qw	&  \qw  	& \qw	& \qw	&\rstick{\bit{0}} \qw \\
\push{\bit{1}}&\push{y\,} 	& \ctrl{-1} 	& \qw	& \ctrlu{1}	& \qw	&  \qw 	& \qw	& \qw	&\rstick{\bit{1}} \qw  \\
\push{\bit{0}}&\push{\bar{y}\,} 	& \qw	& \ctrld{-1}	& \qw	& \ctrl{2}	&  \qw  	& \qw	& \qw	&\rstick{\bit{0}} \qw \\
\push{\bit{1}}&\push{z\,} 	& \qw 	& \qw	&\ctrl{-1}	& \qw	&  \ctrl{2}	& \ctrlu{1}	&\qw	&\rstick{\bit{1}} \qw \\
\push{\bit{0}}&\push{\bar{z}\,} 	& \qw	& \qw	& \qw	&\ctrld{-1}	&  \qw  	& \ctrl{-1}	&\ctrlu{1}&\rstick{\bit{0}} \qw \\ 
\push{\bit{0}}&\push{t\,} 	& \qw	& \qw	& \qw	&\qw		& \ctrld{-1}& \qw & \ctrl{-1}	&\rstick{\bit{0}} \qw  
\gategroup{1}{3}{4}{4}{1em}{-}
\gategroup{3}{5}{6}{6}{1em}{-}
\gategroup{5}{7}{7}{9}{1em}{-}
}$
\end{minipage}
\end{center}
\caption{Successive gates on the left circuit corresponds to the successive boxes of gates on the right circuit.}
\label{fig:cn2c}
\end{figure}

This reduction is based on  ``double-rail'' logic.
Given an instance of $\CCVN$ consisting of a comparator circuit with negation gates $C$ with its input $I$ and a designated wire $s$, we construct  an instance of $\CCV$ consisting of a comparator circuit $C'$ with its input $I'$ and a designated wire $s'$ as follows.
For every wire $w$ in $I$ we put in two corresponding wires, $w$ and
$\bar{w}$, in $C'$.  We define input $I'$ of $C'$ such that the input value
of $\bar{w}$ is the negation of the input value of $w$. We want to
fix things so that the value carried by the wire $\bar{w}$ at each layer
is always the negation of the value carried by $w$. For any comparator
gate $\seq{y,x}$ in $C$ we put in both the gate $\seq{y,x}$ and a second
gate $\seq{\bar{x},\bar{y}}$ in $C'$ immediately after $\seq{y,x}$.
It is easy to check by De Morgan's laws that the wires $x$ and $y$ in
$C'$ carry the corresponding values of $x$ and $y$ in $C$, and the
wires $\bar{x}$ and $\bar{y}$ in $C'$ carry the corresponding negations
of $x$ and $y$ in $C$.

The circuit $C'$ has one extra wire $t$ with input value $0$ to help in
translating negation gates.  For each negation gate on a wire, says $z$
in the example from \figref{cn2c}, we add and three comparator gates
$\seq{z,t}$, $\seq{\bar{z},z}$,  $\seq{t,\bar{z}}$ as shown in the right
circuit of \figref{cn2c}. Thus $t$ as a
temporary ``container'' that we use to swap the values of carried by the
wires $z$ and $\bar{z}$. Note that the swapping of values of $z$ and $\bar{z}$ in $C'$ simulates the effect of a negation in $C$.  Also note that
after the swap takes place the value of $t$ is restored to $0$. 

Finally note that the output value of the designated wire $s$ in
$C$ is 1 iff the output value of the
corresponding wire $s$ in $C'$ with input $I'$ is 1.  Thus we set
the designated wire $s'$ in $I'$ to be $s$.

\subsection{$\LFMM \mred^{\AC^{0}} \CCVN$ \label{sec:el2cvn}}
Consider an instance of $\LFMM$  consisting of a bipartite graph on the left of \figref{el2c}, and a designated edge $\set{y,c}$. Without loss of generality, 
we can safely ignore all top vertices occurred after $c$, all bottom vertices occurred after $y$ and all the edges associated with them, since they are not going to affect the outcome of the instance. 
Using the construction from \secref{vl2cc},  we can simulate the matching of the bottom nodes to the top nodes using  comparator circuit in the upper box on the right of \figref{el2c}.
\begin{figure}[!h]
\centering
\begin{minipage}{0.25\textwidth}
\centering
\tikzstyle{vertex}=[circle,fill=black!30,minimum size=15pt,inner sep=0pt]
\tikzstyle{edge} = [draw,line width=1pt,-]
\tikzstyle{cedge} = [draw,line width=4pt,-,black!35]
\begin{tikzpicture}[scale=0.6, auto,swap]
    \foreach \pos/\label/\name in {{(0,2)/a/a}, {(2,2)/b/b}, {(4,2)/c/c},
                            {(0,0)/x/x}, {(2,0)/y/y}}
        \node[vertex] (\name) at \pos {$\label$};
        
    \foreach \source/ \dest  in {x/a,x/b,y/a,y/c}
        \path[edge] (\source) -- (\dest);
        
    \begin{pgfonlayer}{background}
    \foreach \source/ \dest  in {y/c}
        \path[cedge] (\source) -- (\dest);
    \end{pgfonlayer}
    
\end{tikzpicture}
\end{minipage}
\begin{minipage}{0.18\textwidth}
\centering
\begin{tikzpicture}
\draw[-to,line width=1.5pt,snake=snake,segment amplitude=.5mm,
         segment length=2mm,line after snake=1mm]  (0,0) -- (1.5,0);
\end{tikzpicture}
\end{minipage}
\begin{minipage}{0.45\textwidth}
\centering
{\small$\Qcircuit @C=2em @R=0.5em {
\push{\bit{0}}&\push{a\,} & \ctrlu{3}	& \qw 	&\qw		&\ctrlu{4}	&\qw		&\qw			&\qw		&\rstick{\bit{1}} \qw \\
\push{\bit{0}}&\push{b\,} &	\qw		&\ctrlu{2} 	&\qw		&\qw		&\qw 	&\qw 		&\qw		&\rstick{\bit{0}} \qw\\
\push{\bit{0}}&\push{c\,} & \qw 		& \qw 	&\qw		&\qw		&\qw		&\ctrlu{2}		&\ctrl{1}	&\rstick{\bit{1}} \qw \\
\push{\bit{1}}&\push{x\,} & \ctrl{-1} 	& \ctrl{-1}  &\ctrl{0}	&\qw		& \qw         &\qw			& \qw	&\rstick{\bit{0}} \qw \\
\push{\bit{1}}&\push{y\,} & \qw 		& \qw	&\qw 	&\ctrl{-1}	&\ctrl{0}	&\ctrl{-1}		& \qw	&\rstick{\bit{0}} \qw \\
\push{\bit{0}}&\push{a'\,} &\ctrlu{3}	&\qw 	&\qw		&\ctrlu{4}	&\qw  	&\qw			& \qw	&\rstick{\bit{1}} \qw\\
\push{\bit{0}}&\push{b'\,} &\qw		&\ctrlu{2}  	&\qw		&\qw		&\qw 	&\qw			&\qw 	&\rstick{\bit{0}} \qw\\
\push{\bit{0}}&\push{c'\,} & \qw 		& \qw 	&\qw		&\qw		&\qw 	&\gate{\neg}	&\ctrld{-5}	&\rstick{\bit{1}} \qw \\
\push{\bit{1}}&\push{x'\,} & \ctrl{-1} 	& \ctrl{-1}	&\ctrl{0}	&\qw		& \qw	&\qw			& \qw	&\rstick{\bit{0}} \qw \\
\push{\bit{1}}&\push{y'\,} & \qw 		& \qw	&\qw 	&\ctrl{-1}	&\ctrl{0}	&\qw			& \qw	&\rstick{\bit{0}} \qw
\gategroup{1}{3}{5}{8}{1em}{-}
\gategroup{6}{3}{10}{7}{1em}{-}
}$}
\end{minipage}
\caption{}
\label{fig:el2c}
\end{figure}

We keep another running copy of this simulation on the bottom,
(see the wires labelled $a',b',c',x',y'$ in \figref{el2c}). The only
difference  is that the comparator gate $\seq{y',c'}$ corresponding to
the designated edge $\set{y,c}$ is not added. We finally add a negation
gate on $c'$ and  a comparator gate $\seq{c,c'}$.   We let the desired
output of the $\CCV$ instance be the output of $c$, since $c$ outputs 1
iff the edge $\set{y,c}$ is added to the lfm-matching. It  is not hard to see that such construction can be generalized, and the output correctly computes if the designated edge is in the lfm-matching. 

\begin{theorem}\label{theo:lfm}
($\VCC\proves$) The problems $\CCV$,  $\CCVN$, $\TLFMM$  and  $\LFMM$ are equivalent under many-one $\AC^0$-reductions.
\end{theorem}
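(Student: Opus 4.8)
The plan is to assemble the individual reductions established in Sections~\ref{sec:c2te}, \ref{sec:el2cvn}, and~\ref{sec:cvn2c}, together with one trivial reduction, into a single directed cycle that visits all four problems, and then to appeal to the transitivity of $\mred^{\AC^0}$.

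First I would record the trivial reduction $\TLFMM \mred^{\AC^0} \LFMM$: since $\TLFMM$ is by definition the restriction of $\LFMM$ to bipartite graphs of degree at most three, every instance of $\TLFMM$ is already an instance of $\LFMM$, so the identity map is the required $\AC^0$-reduction. (For completeness one also has $\CCV \mred^{\AC^0} \CCVN$ via the identity, $\CCV$ being the special case of $\CCVN$ with no negation gates, but this link is not needed below.) Chaining the trivial link with the nontrivial reductions $\CCV \mred^{\AC^0} \TLFMM$ of \secref{c2te}, $\LFMM \mred^{\AC^0} \CCVN$ of \secref{el2cvn}, and $\CCVN \mred^{\AC^0} \CCV$ of \secref{cvn2c} yields the cycle
\[
  \CCV \mred^{\AC^0} \TLFMM \mred^{\AC^0} \LFMM
       \mred^{\AC^0} \CCVN \mred^{\AC^0} \CCV .
\]
Because $\mathsf{FAC}^0$ is closed under composition, $\mred^{\AC^0}$ is transitive, so every problem on this cycle reduces to every other by traversing the cycle; hence $\CCV$, $\CCVN$, $\TLFMM$ and $\LFMM$ are all equivalent under many-one $\AC^0$-reductions.

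For the ``$\VCC \proves$'' clause I would note that the correctness of each of the three nontrivial links was already argued to be formalizable in $\VCC$, each using only $\Sigma^B_0$-induction on the layers of the relevant circuit (available in $\VCC$ since it extends $\V^0$). As the composite reduction functions are again in $\mathsf{FAC}^0 \subseteq \FCC$ and hence $\Sigma^B_1$-definable in $\VCC$, and as their correctness proofs are obtained simply by concatenating the correctness proofs of the individual links, the whole equivalence is provable in $\VCC$.

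I do not expect a genuine obstacle: all the mathematical content resides in the constructions of the preceding subsections, and this final step is the bookkeeping of stitching them into a cycle. The one point that needs checking is that each link is a \emph{many-one} (rather than merely oracle) $\AC^0$-reduction, so that the composites stay many-one; this holds because each construction produces a single instance of the target problem whose answer, read off a designated wire or edge, coincides with the answer to the source instance.
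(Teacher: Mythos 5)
Your proposal is correct and matches the paper's own argument: the paper proves this theorem exactly by assembling the cycle $\CCV \mred^{\AC^0} \TLFMM \mred^{\AC^0} \LFMM \mred^{\AC^0} \CCVN \mred^{\AC^0} \CCV$ from the reductions of Sections~\ref{sec:c2te}, \ref{sec:el2cvn}, and~\ref{sec:cvn2c} together with the trivial restriction reduction (this is precisely the cycle shown in \figref{redu}), and closes it by transitivity of many-one $\AC^0$-reducibility. Your remarks on formalizability in $\VCC$ and on the links being many-one rather than oracle reductions likewise reflect what the paper asserts about those constructions.
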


Combined with the results from Sections \ref{sec:c2vl} and \ref{sec:vl2cc}, we have the following corollary.

\begin{corollary}\label{cor:lfm}
($\VCC\proves$) The problems $\CCV$, $\TVLFMM$, $\VLFMM$, $\CCVN$, $\TLFMM$  and  $\LFMM$ are equivalent under many-one $\AC^0$-reductions.
\end{corollary}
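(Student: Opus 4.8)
The plan is to read this corollary off as the transitive closure of the two equivalence theorems already proved. \theoref{lfcom} places $\CCV$, $\TVLFMM$ and $\VLFMM$ in a single equivalence class under $\mred^{\AC^0}$, and \theoref{lfm} places $\CCV$, $\CCVN$, $\TLFMM$ and $\LFMM$ in a single such class. Since $\CCV$ belongs to both classes, the two classes must coincide, giving one equivalence class containing all six problems. Equivalently, \figref{redu} exhibits a directed graph on these six problems that is strongly connected (the cycles $\CCV \to \TVLFMM \to \VLFMM \to \CCV$ and $\CCV \to \TLFMM \to \LFMM \to \CCVN \to \CCV$ share the vertex $\CCV$), so every problem reduces to every other by following a directed path.

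The one ingredient that legitimizes the merging is the transitivity of $\mred^{\AC^0}$ (Definition~\ref{d:manyOne}). I would record that if $R_1 \mred^{\AC^0} R_2$ is witnessed by functions $\vec{f},\vec{F}$ and $R_2 \mred^{\AC^0} R_3$ by $\vec{g},\vec{G}$, then substituting $\vec{f},\vec{F}$ into the arguments of $\vec{g},\vec{G}$ yields functions witnessing $R_1 \mred^{\AC^0} R_3$. This is valid because $\mathsf{FAC}^0$ is closed under composition: by \theoref{szb} the bit graph of a composite of $\Sigma_0^B$-definable functions is again $\Sigma_0^B$-definable, hence the composite is again an $\mathsf{FAC}^0$ reduction. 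Traversing a path in \figref{redu} is exactly such a composition of the displayed $\Sigma_0^B$-reductions.

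For the ``$\VCC\proves$'' qualifier I would note that $\VCC$-provability of correctness is preserved under these compositions. Each edge of \figref{redu} carries a $\Sigma_0^B$-definable reduction whose correctness biconditional was proved in $\VCC$ in the corresponding subsection; substituting one reduction into the correctness biconditional of the next produces, by a routine chaining of biconditionals in $\VCC$, a $\VCC$-proof that the composite is correct. The only obstacle here is bookkeeping rather than mathematics: one checks that $\CCV$ really is the common hub of the two classes and that, after merging, no direction of reducibility is left unaccounted for. All the substantive content --- the gadget constructions of Sections \ref{sec:c2vl}--\ref{sec:el2cvn} and their $\Sigma_0^B$-provable correctness --- is already in place, so the corollary contributes nothing beyond this chaining.
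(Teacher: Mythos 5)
Your proposal is correct and takes essentially the same approach as the paper: the paper's entire proof is the remark that \theoref{lfm} ``combined with the results from Sections \ref{sec:c2vl} and \ref{sec:vl2cc}'' (i.e., \theoref{lfcom}) yields the corollary, which is exactly your merging of the two equivalence classes through their common member $\CCV$ via transitivity of $\mred^{\AC^0}$. The details you add --- closure of $\mathsf{FAC}^{0}$ under composition via \theoref{szb}, and the chaining of $\VCC$-provable correctness biconditionals --- are precisely the routine bookkeeping the paper leaves implicit.
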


Since $\CCVN$ is complete for $\CC$, we can use comparator circuits to decide the complement of the $\CCV$ problem: given a comparator circuit and and input assignment, does a designated wire output 0? Thus, we have the following corollary.
\begin{corollary} $\CC$ is closed under complementation.
\end{corollary}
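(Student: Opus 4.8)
The plan is to reduce the corollary to the single observation that the complement problem $\overline{\CCV}$ belongs to $\CC$, and then propagate this to the whole class by composing reductions. First I would note that it suffices to prove $\overline{\CCV} \in \CC$, i.e. $\overline{\CCV} \mred^{\AC^0} \CCV$. Indeed, if $R \in \CC$ is witnessed by a many-one $\AC^0$-reduction $\vec f,\vec F$ to $\CCV$, so that $R(\vec x,\vec X) \leftrightarrow \CCV(\vec f(\vec x,\vec X),\vec F(\vec x,\vec X))$, then negating both sides of this biconditional shows that the \emph{same} functions $\vec f,\vec F$ witness $\overline{R} \mred^{\AC^0} \overline{\CCV}$. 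Composing this with $\overline{\CCV} \mred^{\AC^0} \CCV$ and invoking transitivity of $\mred^{\AC^0}$ then places $\overline{R}$ in $\CC$, which is exactly closure under complementation.

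The core step is thus the reduction $\overline{\CCV} \mred^{\AC^0} \CCV$, which I would route through $\CCVN$. Given an instance of $\overline{\CCV}$ — a comparator circuit $C$, an input assignment, and a designated wire $s$ — I would form the $\CCVN$-instance consisting of $C$ together with a single extra negation gate placed on wire $s$ after all of $C$'s comparator gates, keeping the same input and the same designated wire $s$. By the definition of a negation gate, this augmented circuit outputs $1$ on $s$ precisely when $C$ outputs $0$ on $s$, so the map is a correct many-one reduction from $\overline{\CCV}$ to $\CCVN$. It is visibly an $\AC^0$-reduction: appending one negation gate to the encoding of a circuit is $\Sigma_0^B$-definable (by \theoref{szb}), exactly as in the constructions of the preceding subsections. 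Finally, since $\CCVN$ is $\CC$-complete by \corref{lfm}, we have $\CCVN \mred^{\AC^0} \CCV$, and chaining the two reductions gives $\overline{\CCV} \mred^{\AC^0} \CCV$, as needed.

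The only point requiring care — rather than a genuine obstacle — is the transitivity of many-one $\AC^0$-reducibility, which rests on the closure of $\mathsf{FAC}^0$ under composition; this is what licenses the chain $\overline{\CCV} \mred^{\AC^0} \CCVN \mred^{\AC^0} \CCV$ and its further composition with the reduction of an arbitrary $R \in \CC$. Everything else is immediate: the sole new construction is the one-negation-gate map, and all the substantive work has already been carried out in establishing the $\CC$-completeness of $\CCVN$. Because these steps are all $\Sigma_0^B$-level and are compositions of provably total $\FCC$-functions, the argument can be formalized in $\VCC$ in the same manner as the reductions of the earlier subsections.
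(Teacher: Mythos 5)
Your proof is correct and follows essentially the same route as the paper: the paper also deduces closure under complementation from the $\CC$-completeness of $\CCVN$, observing that negating the designated wire reduces the complement of $\CCV$ to $\CCVN$, which in turn reduces to $\CCV$. Your write-up merely makes explicit the bookkeeping (negating the biconditional for an arbitrary $R\in\CC$ and invoking transitivity of $\mred^{\AC^0}$) that the paper leaves implicit.
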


\section{The theory $\VCC$ contains $\VNL$}
Each instance of the \textsc{Reachability} problem consists of a directed acyclic  graph $G=(V,E)$, where $V=\set{u_{0},\ldots,u_{n-1}}$, and  we want to decide if there is a path from $u_{0}$ to $u_{n-1}$. It is well-known that \textsc{Reachability} is $\NL$-complete. 
Since a directed graph can be converted into a layered graph with
an equivalent reachability problem, it
suffices to give a comparator circuit construction  that solves instances
of   \textsc{Reachability} satisfying the following assumption: 
\begin{align}
\text{The graph $G$ only has directed edges of the from $(u_{i},u_{j})$, where $i<j$.} \label{eq:a1}
\end{align}
We believe that our new construction for showing that $\NL\subseteq \CC$ is more intuitive than the one in \cite{Sub90,MS92}. Moreover, we reduce \textsc{Reachability} to $\CCV$ directly without going through some intermediate complete problem, and this was stated as an open problem in \cite[Chapter 7.8.1]{Sub90}.

We will demonstrate our construction through a simple example, where we have the directed graph in \figref{g}  satisfying the assumption \eref{a1}. We will build a comparator circuit as in \figref{NL2CC}, where the wires $\nu_0,\ldots,\nu_4$ represent the vertices $u_{0},\ldots,u_{4}$ of the preceding graph and the wires $\iota_0,\ldots,\iota_4$ are used to feed $1$-bits into the wire $\nu_0$, and from there to the other wires $\nu_{i}$ reachable from $\nu_{0}$. We let every wire $\iota_{i}$ take input 1 and every wire $\nu_{i}$ take input 0. \smallskip 

\begin{wrapfigure}{r}{0.3\textwidth}
\vspace{-3mm}
\begin{center}
\tikzstyle{vertex}=[circle,fill=black!30,minimum size=15pt,inner sep=0pt]
\tikzstyle{edge} = [draw,line width=1pt,-to]
\begin{tikzpicture}[scale=1.2, auto,swap]
    \foreach \pos/\name/\label in {{(0,0)/v0/u_{0}}, {(1,0.5)/v1/u_{2}}, {(1,-0.5)/v2/u_{1}},{(2,0)/v3/u_{3}},						{(2,1)/v4/u_{4}}}
        \node[vertex] (\name) at \pos {$\label$};
        
    \foreach \source/ \dest  in {v0/v1,v0/v2,v1/v3,v1/v4}
        \path[edge] (\source) -- (\dest);
\end{tikzpicture}\end{center}
\vspace{-4mm}
\caption{}
\label{fig:g}
\end{wrapfigure}

We next show how to construct the gadgets in the boxes. For a graph with
$n$ vertices ($n=5$ in our example), the $k^{\rm th}$ gadget is
constructed as follows:
\begin{algorithmic}[1]
\STATE Introduce a comparator gate from wire $\iota_{k}$ to wire $\nu_{0}$
\FOR{$i=0,\ldots,n-1$}
\FOR{$j=i+1,\ldots,n-1$}
\STATE Introduce a comparator gate from $\nu_i$ to $\nu_j$ if  $(u_i,u_j)\in E$, or a dummy gate on $\nu_i$ otherwise.
\ENDFOR
\ENDFOR
\end{algorithmic}
Note that the gadgets are identical except for the first comparator gate.

We only use the loop structure to clarify the order the gates are added. The construction can easily be done in $\AC^{0}$ since the position of each gate can be calculated exactly, and thus all gates can be added independently from one another. Note that for a graph with $n$ vertices, we have at most $n$ vertices reachable from a single vertex, and thus we need $n$ gadgets described above. In our example, there are at most 5 wires reachable from wire $\nu_0$, and thus we utilize the gadget 5 times. 
\begin{figure}[!h]
\centering
{\footnotesize
$\Qcircuit @C=1.4em @R=.5em {
\push{\bit{1}}&\push{\iota_0} 
	&\ctrl{4}	& \qw 	& \qw	& \qw	&\qw
	&\qw	& \qw 	& \qw	& \qw	&\qw	
	&\qw	& \qw 	& \qw	& \qw	&\qw	
	&\qw	& \qw 	& \qw	& \qw	&\qw	
	&\qw	& \qw 	& \qw	& \qw	&\qw		
	&\rstick{\bit{0}} \qw \\
\push{\bit{1}}&\push{\iota_1} 
	&	\qw	&\qw  	&\qw  	& \qw	&\qw	
	&\ctrl{3}& \qw 	& \qw	& \qw	&\qw
	&\qw	& \qw 	& \qw	& \qw	&\qw	
	&\qw	& \qw 	& \qw	& \qw	&\qw	
	&\qw	& \qw 	& \qw	& \qw	&\qw		
	&\rstick{\bit{0}} \qw\\
\push{\bit{1}}&\push{\iota_2} 
	&	\qw	&\qw  	&\qw  	& \qw	&\qw	
	&\qw	& \qw 	& \qw	& \qw	&\qw
	&\ctrl{2}& \qw 	& \qw	& \qw	&\qw
	&\qw	& \qw 	& \qw	& \qw	&\qw
	&\qw	& \qw 	& \qw	& \qw	&\qw		
	&\rstick{\bit{0}} \qw\\
\push{\bit{1}}&\push{\iota_3} 	
	& \qw	& \qw 	& \qw	& \qw	&\qw	
	&\qw	& \qw 	& \qw	& \qw	&\qw
	&\qw	& \qw 	& \qw	& \qw	&\qw
	&\ctrl{1}& \qw 	& \qw	& \qw	&\qw
	&\qw	& \qw 	& \qw	& \qw	&\qw	
	&\rstick{\bit{0}} \qw \\
\push{\bit{1}}&\push{\iota_4} 
	& \qw	& \qw 	& \qw	& \qw	&\qw	
	&\qw	& \qw 	& \qw	& \qw	&\qw
	&\qw	& \qw 	& \qw	& \qw	&\qw
	&\qw	& \qw 	& \qw	& \qw	&\qw	
	&\ctrl{1}& \qw 	& \qw	& \qw	&\qw
	&\rstick{\bit{0}} \qw \\	
\push{\bit{0}}&\push{\nu_0} 
	&\ctrld{-1}&\ctrl{0}&\ctrl{0}& \qw	
	&\qw	&\ctrld{-1}&\ctrl{0}&\ctrl{0}& \qw
	&\qw	&\ctrld{-1}&\ctrl{0}&\ctrl{0}& \qw
	&\qw	&\ctrld{-1}&\ctrl{0}&\ctrl{0}& \qw
	&\qw	&\ctrld{-1}&\ctrl{0}&\ctrl{0}& \qw
	&\qw	&\rstick{\bit{1}} \qw \\
\push{\bit{0}}&\push{\nu_1} 
	& \qw 	&\ctrld{-1}& \qw& \qw	&\qw
	& \qw 	&\ctrld{-1}& \qw& \qw	&\qw	
	& \qw 	&\ctrld{-1}& \qw& \qw	&\qw	
	& \qw 	&\ctrld{-1}& \qw& \qw	&\qw	
	& \qw 	&\ctrld{-1}& \qw& \qw	&\qw		
	&\rstick{\bit{1}} \qw \\
\push{\bit{0}}&\push{\nu_2} 
	& \qw 	&\qw 	&\ctrld{-2}&\ctrl{0}&\ctrl{0}
	& \qw 	&\qw 	&\ctrld{-2}&\ctrl{0}&\ctrl{0}
	& \qw 	&\qw 	&\ctrld{-2}&\ctrl{0}&\ctrl{0}
	& \qw 	&\qw 	&\ctrld{-2}&\ctrl{0}&\ctrl{0}
	& \qw 	&\qw 	&\ctrld{-2}&\ctrl{0}&\ctrl{0}
	&\rstick{\bit{1}} \qw \\
\push{\bit{0}}&\push{\nu_3} 
	&	\qw	&\qw  	&\qw  	&\ctrld{-1}&\qw
	&	\qw	&\qw  	&\qw  	&\ctrld{-1}&\qw  
	&	\qw	&\qw  	&\qw  	&\ctrld{-1}&\qw  
	&	\qw	&\qw  	&\qw  	&\ctrld{-1}&\qw  
	&	\qw	&\qw  	&\qw  	&\ctrld{-1}&\qw  	
	&\rstick{\bit{1}} \qw\\
\push{\bit{0}}&\push{\nu_4} 
	& \qw 	&\qw 	&\qw 	&\qw	&\ctrld{-2}
	& \qw 	&\qw 	&\qw 	&\qw	&\ctrld{-2}
	& \qw 	&\qw 	&\qw 	&\qw	&\ctrld{-2}
	& \qw 	&\qw 	&\qw 	&\qw	&\ctrld{-2}	
	& \qw 	&\qw 	&\qw 	&\qw	&\ctrld{-2}
	&\rstick{\bit{1}} \qw 
\gategroup{1}{3}{10}{7}{1em}{-}
\gategroup{1}{8}{10}{12}{1em}{-}
\gategroup{1}{13}{10}{17}{1em}{-}
\gategroup{1}{18}{10}{22}{1em}{-}
\gategroup{1}{23}{10}{27}{1em}{-}
}$}
\caption{A comparator circuit that solves \textsc{Reachability}. (The dummy gates are omitted.)}
\label{fig:NL2CC}
\end{figure}
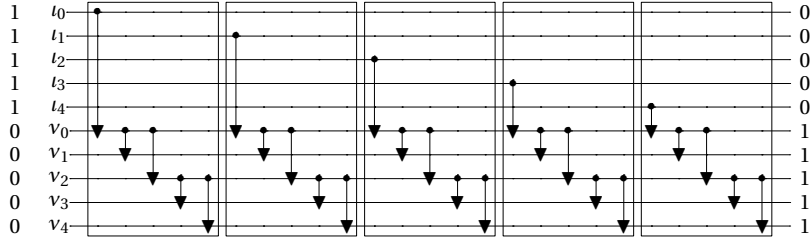

Intuitively, the construction works since each gadget from a box looks for the \emph{lexicographical first maximal path} starting from $v_0$ (with respect to the \emph{natural lexicographical ordering} induced by the vertex ordering $v_0,\ldots,v_n$), and then the vertex at the end of the path will be marked (i.e. its wire will now carry 1) and thus excluded from the search of the gadgets that follow. For example, the gadget from the left-most dashed box in  \figref{NL2CC} will move a value 1 from wire $\iota_0$ to wire $\nu_0$ and from wire $\nu_0$ to wire $\nu_1$. This essentially ``marks'' the wire $\nu_1$ since we cannot move this value 1 away from $\nu_1$, and thus $\nu_{1}$ can no longer receive any new incoming 1. Hence, the gadget from the second box in \figref{NL2CC} will repeat the process of finding the lex-first maximal path  from $v_0$ to the remaining (unmarked) vertices. These searches end when all vertices reachable from $v_0$ are marked. Note that this has the same effect as applying the \emph{depth-first search} algorithm to find all the vertices reachable from $v_{0}$. 
\begin{theorem}[$\VNL \subseteq \VCC$] \label{theo:NL2CCV}
The theory $\VCC$ proves the axiom $\CONN$  defined in \eref{conn}. 
\end{theorem}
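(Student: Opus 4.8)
The plan is to formalize in $\VCC$ the reduction from \textsc{Reachability} to $\CCV$ sketched above, using the axiom $\CCVA$ (available in $\VCC$) to evaluate the comparator circuit $C_G$ built from the input graph $G$. By the remark preceding \eref{a1} we may assume $G$ is topologically sorted, i.e. satisfies \eref{a1}; this is what makes the construction of \figref{NL2CC} well defined. First I would fix a $\Sigma^B_0$ description of $C_G$ (its gate sequence is given uniformly by the loop in the construction, so its bit-graph is $\Sigma^B_0$-definable from $E$), invoke $\CCVA$ to obtain the layer matrix $Z$ satisfying $\ccv(\ldots,Z)$, and then prove the correctness statement that wire $\nu_i$ carries $1$ in the final layer iff $v_i$ is reachable from $v_0$.

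The correctness proof splits into two directions. Soundness (a marked wire is reachable) is a straightforward $\Sigma^B_0$-induction on layers: a wire $\nu_i$ can change from $0$ to $1$ only through a comparator gate whose other (already-$1$) wire is either some $\nu_j$ with $(v_j,v_i)\in E$ or the feeder $\iota_k\to\nu_0$, so every $1$ on a $\nu$-wire traces back along edges to $v_0$. Completeness (every reachable vertex is marked) is the delicate part and is where the counting argument enters. The key tool is \corref{c1}: the number of $1$'s is the same in every layer, and it equals $n$ since the inputs put a $1$ on each of the $n$ feeder wires $\iota_k$. Using this conservation law I would prove, by induction on the gadget index $k$, the invariant that after $k$ gadgets the set of marked $\nu$-wires is exactly the set of the first $k$ vertices discovered by the greedy lex-first-maximal-path search, that marks are never lost, and that a fresh $1$ injected by gadget $k$ comes to rest on a \emph{new} reachable vertex whenever an unmarked reachable vertex still exists. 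Since at most $n$ vertices are reachable and there are $n$ gadgets, after the last gadget exactly the reachable vertices are marked.

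From the final reachability relation one still has to produce the \emph{layered} witness $U$ demanded by $\conn$, whose rows are the distance-bounded reachability sets specified by the recurrence \eref{connMean}. The DFS-style gadgets mark one vertex at a time rather than one distance layer at a time, so the rows of $U$ are not read off directly from the gadget states. Instead I would apply the same reachability construction to the $d$-step time-unrollings of $G$ (which again satisfy \eref{a1}) for each $d=0,\ldots,n$: the $d$-th circuit computes exactly ``reachable within $d$ steps,'' i.e. the row $U(d,\cdot)$. Aggregating these $n+1$ circuits into one, and using that the aggregate $F^\ast_{\CCVB}$ is $\Sigma^B_1$-definable in $\VCC$, yields a single string $U$; the recurrence \eref{connMean} is then verified by $\Sigma^B_0$-induction on $d$ from the per-row correctness already established. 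All of these inductions and comprehensions are available in $\VCC$, and the counting steps rely on \corref{VTC} ($\VTC^0\subseteq\VCC$) together with \corref{c1}.

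The main obstacle I expect is the completeness direction: formalizing the claim that each gadget advances the lex-first-maximal-path search by exactly one new reachable vertex. This requires carefully combining the conservation of $1$'s (to rule out ``losing'' or ``duplicating'' marks) with a precise statement of the greedy path that gadget $k$ traces, and arguing that an injected $1$ cannot get permanently stuck unless all reachable vertices are already marked. Getting a $\Sigma^B_0$ invariant strong enough to carry this induction, while staying within the counting power guaranteed by \corref{VTC}, is the technically demanding part.
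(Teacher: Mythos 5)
Your circuit construction and your identification of counting (\corref{c1}) as the engine of the argument both match the paper, but your formalization plan for the ``completeness'' direction has a genuine gap: the invariants you propose to carry through the induction on gadgets --- ``the marked wires are exactly the first $k$ vertices discovered by the greedy lex-first-maximal-path search'' and ``a fresh $1$ comes to rest on a new \emph{reachable} vertex whenever an unmarked reachable vertex exists'' --- are not $\Sigma^B_0$ statements. Reachability is precisely the predicate whose computability by comparator circuits is being established, and lex-first discovery order is a sequential, DFS-style notion; the paper explicitly notes that this intuition ``cannot be formalized directly since it would require $\VNL$ reasoning,'' which is unavailable because it is the conclusion of the theorem. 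The same objection hits your final step, where \eref{connMean} is to be derived from per-row statements of the form ``circuit $d$ computes reachable-within-$d$ steps'': that statement cannot be written as the $\Sigma^B_0$ formula (even with the function symbols of $\FCC$ available so far) that the induction and comprehension axioms require, so the plan stalls exactly at the point you yourself flag as the main obstacle.

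The paper's proof avoids reachability and search order altogether. It applies the construction once, to the full $n$-layer unrolling $G'$ of $G$, so that all rows of $U$ are read off a single circuit (row $d$ from the wires of layer $d$), and it proves purely circuit-level lemmas: each gadget either changes no $\nu$-wire or flips exactly one from $0$ to $1$, according to whether $\nu_0$ enters it carrying $1$ (\lemref{gadget}, proved by induction on gadgets together with \corref{c1}); ones persist (\lemref{onePersists}); by conservation of the $n^2$ injected ones, $\nu_0$ must output $1$ (\lemref{teminate}); hence the final gadget is stable (\lemref{noChange}, \lemref{mon}), which yields the local characterization that $\nu_j$ outputs $1$ iff some $\nu_i$ with $i<j$ outputs $1$ and there is an edge of $G'$ from the node of $\nu_i$ to the node of $\nu_j$ (\lemref{corresp}). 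Because $G'$ is layered, this local statement, instantiated at the wires of layer $d+1$, literally \emph{is} the recurrence \eref{connMean}; no global ``marked iff reachable'' claim is ever stated, let alone proved. If you wanted to salvage your multi-circuit variant ($n+1$ unrollings aggregated via $F^\ast_{\CCVB}$), you would still need a \lemref{corresp}-style local lemma for each circuit \emph{plus} agreement lemmas between consecutive circuits on their shared layers --- strictly more work than the single-circuit route, and still only possible after replacing your reachability-based invariants by circuit-level ones.
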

\begin{proof}
Recall that if $G$ is a directed graph on $n$ vertices $u_{0},\ldots,u_{n-1}$ with the edge relation $E$, then the formula  $\conn(n,E,U)$ holds iff $U$ is a matrix of $n$ rows, where row $d$ encodes the set of all vertices in $G$ that are reachable from $u_{0}$ using a path of length at most $d$.

We start by converting $G$ into a layered graph $G'=(V',E')$ which
satisfies \eref{a1}, where 
\begin{align}\label{eq:Gprime}
V' &=\bset{u_{i}^{\ell}\mid 0\le i,\ell <n},\\
E' &= \bset{ (u^\ell_i,u^{\ell+1}_j) \mid 0\le i,j,\ell <n
    \mbox{ and ($i=j$ or $(u_i,u_j) \in E$)}}. \nonumber
\end{align}
Observe that a vertex $u_{i}^{\ell}$ is reachable from $u_{0}^{0}$ in $G'$ iff $u_{i}$ is reachable from $u_{0}$ by a path of length at most $\ell$ in $G$. Moreover, if we enumerate the vertices of $G'$ by layers, then $G'$ satisfies the condition \eref{a1}.  We now apply the above construction to $G'$ to find all vertices in $V'$ reachable from $u_{0}^{0}$. Then we construct a matrix $U$ witnessing the formula $\conn(n,E,U)$ by letting the row $d$ encode the set of vertices $u_{i}$ in $V$ such that $u_{i}^{d}$ is reachable from $u_{0}^{0}$ in $G'$. 

We want to show that the comparator circuit constructed for $G'$ using the above method produces the correct set of vertices $u_{i}^{d}$ reachable from $u_{0}^{0}$ for every $d$. Although the correctness of the construction follows from the intuition that the circuit simulates the depth-first search algorithm, we cannot formalize this intuition directly since it would require $\VNL$ reasoning. Recall that up to this point, we only know that $\VTC^{0}\subseteq \VNC^{1}\subseteq \VCC$ by \theoref{NC2CC}. It turns out that using only the counting ability of $\VTC^{0}$, we can analyze the computation of the circuit  in the above construction and argue that since we feed in as many 1-bits as the number of wires representing the vertices of $G'$, these 1-bits will eventually fill all the lower wires that are reachable from $\nu_{0}^{0}$. 

Now we begin the detailed proof.  (To follow the proof it will be
helpful to keep an eye on \figref{NL2CC}.)
Let $C$ be the comparator circuit constructed from the layered graph $G'$
of $G$ using the above construction, where $C$ consists of the wires
$\set{\nu_{i}\mid 0\le i,\ell <n^2}$ representing the vertices of $G'$
and wires $\set{\iota_{0},\ldots,\iota_{n^{2}}}$ are used to feed ones
to the wire $\nu_{0}$.  It is important to note that we will order the
wires from high to low by layers of $G'$ so that the sequence of
wires
\[
   \nu_0,\ldots,\nu_{n^2-1}
\]
corresponds to the sequence of nodes
\[
u^{0}_{0},u^{0}_{1},\ldots,u^{0}_{n-1},u^{1}_{0},u^{1}_{1},\ldots,u^{1}_{n-1},
\ldots, u^{n-1}_{0},u^{n-1}_{1},\ldots,u^{n-1}_{n-1}.
\]
Let $\gamma_0, \gamma_1,\ldots,\gamma_{n^2-1}$ be the successive
gadgets in the circuit $C$.

\begin{lemma}\label{lem:gadget}
($\VCC\proves$) For each $k \le n^2-1$, if wire $\nu_0$ has value $1$ at
the input of gadget $\gamma_k$ then the value of each wire
$\nu_i, 0\le i \le n^2-1$, is the same at the output of $\gamma_k$ as
at the input of $\gamma_k$.  If $\nu_0$ has value $0$ at the input of
$\gamma_k$ then the above is true of $\gamma_k$ with exactly one exception:
some wire $\nu_j$ is $0$ at the input of $\gamma_k$ and $1$ at the output
of $\gamma_k$.
\end{lemma}
\begin{proof}
Note that for $k< n^2-1$ the output values of $\gamma_k$ are the same as
the input values of $\gamma_{k+1}$.

We proceed by induction on $k$.  For $k=0$ the input values of wires 
$(\nu_0,\ldots,\nu_{n^2-1})$ are 0,
but after the first gate $(\iota_0,\nu_0)$ the value
of wire $\nu_0$ becomes 1. Hence by Corollary \ref{cor:c1} applied
to $\gamma_0$ starting after that gate (or by induction on the depth
of the gates in $\gamma_0$), the output values of
wires $(\nu_0,\ldots,\nu_{n^2-1})$ contain a single 1, and the rest are 0.

For the induction step suppose $k>0$.  Suppose first that the value of
$\nu_0$ at the input of $\gamma_k$ is 1, so the output value of $\nu_0$
for $\gamma_{k-1}$ is 1.   Then it follows from the
induction hypothesis that the tuple of values for wires 
$(\nu_0,\ldots,\nu_{n^2-1})$ is either the same for the input and 
output of $\gamma_{k-1}$ or wire $\nu_0$ is the only exception (it
must be an exception if its input value is 0 because by assumption
its output value is 1).  Note that after the first gate $(\iota_0,\nu_0)$
in $\gamma_{k-1}$
the value of wire $\nu_0$ is certainly 1 and so at this point in the
gadget $\gamma_{k-1}$ the tuple of values for wires
$(\nu_0,\ldots,\nu_{n^2-1})$ is the same as at this point in $\gamma_k$.
Since gadgets $\gamma_{k-1}$ and $\gamma_k$ are identical after the
first gate, the outputs for wires $(\nu_0,\ldots,\nu_{n^2-1})$ are
the same for the two gadgets. Thus the lemma follows for this case.

Now suppose that the value of $\nu_0$ at the input of $\gamma_k$ (and
hence the output of $\gamma_{k-1}$) is 0.
Then by the induction hypothesis the value of $\nu_0$ at the input of $\gamma_{k-1}$ is
also 0 and the values of $(\nu_0,\ldots,\nu_{n^2-1})$ at the inputs of
$\gamma_{k-1}$ and $\gamma_k$ are identical except some wire
$\nu_j$ ($j \ne 0$) is 0 for $\gamma_{k-1}$ and 1 for $\gamma_k$.
After the first gate in each gadget the value of $\nu_0$ is 1,
and since the gadgets are
identical except for the first gate, it follows by induction
on $p$ that the value of each wire at position $p$ in $\gamma_{k-1}$
is the same as the value of that wire at position $p$ in $\gamma_k$,
except for each $p$ one wire is 0 in $\gamma_{k-1}$ and 1 in
$\gamma_k$.  The lemma follows when $p$ is the output position.
\end{proof}

The next result follows easily from the preceding  lemma by
induction on $k$. 

\begin{lemma}\label{lem:onePersists}
($\VCC\proves$)
If a wire $\nu_i$ has value $1$ at the output of some gadget $\gamma_k$,
then it has value $1$ at the outputs of all succeeding gadgets.
If the tuple of output values for $(\nu_0,\ldots,\nu_{n^2-1})$
is the same for two successive gadgets $\gamma_k$ and $\gamma_{k+1}$
then $\nu_0 =1$ and the tuple remains the same for the outputs of all
succeeding gadgets.
\end{lemma}

The next result is the only place in the proof of Theorem
\ref{theo:NL2CCV} that makes essential use of
the ability of $\VCC$ to count, as in Corollary \ref{cor:c1}.

\begin{lemma}\label{lem:teminate}
($\VCC\proves$)
Wire $\nu_0$ has value $1$ at the output of the circuit $C$.
\end{lemma}
\begin{proof}
By Lemma \ref{lem:onePersists}, if $\nu_0$ has value 0 at the output
of $C$, then it has value 0 at the output of every gadget $\gamma_k$.
Since there are $n^2$ ones at the input to $C$ (one for every wire
$\iota_i$), by Corollary \ref{cor:c1} there are $n^2$ ones at the
output of $C$.  But if $\nu_0$ remains 0 after every gadget, then
by the construction of $C$ we see that the final output of every
wire $\iota_i$ is 0, and hence outputs of all $n^2$ wires $\nu_i$
must be 1, including $\nu_0$.
\end{proof}

\begin{lemma}\label{lem:noChange}
($\VCC\proves$)
In the final gadget, no wire $\nu_i$ changes its value after any
comparator gate, except possibly $\nu_0$ changes from $0$ to $1$
after the initial gate feeds a $1$ from $\iota_{n^2-1}$.
\end{lemma}

\begin{proof}
We use induction on $i$.  For $i=0$, by Lemma \ref{lem:teminate}
the output of $\nu_0$ in the final gadget is 1.  At the input to this
gadget $\nu_0$ can be either 0 or 1, but after the first gate it is
certainly 1.   The value of $\nu_0$ cannot change from 1 to 0 in
the gadget, because after the first gate there is no further
gate leading down to $\nu_0$ which could bring down a 1 to change the value of 
$\nu_0$ from 0 to 1, but we know the final output value is 1.

For $i>0$ it follows from Lemmas \ref{lem:gadget} and \ref{lem:teminate}
that the value of wire $\nu_i$ is the same at the input and output
of the final gadget.   We note that all gates leading down
to $\nu_i$ preceed all gates leading away from $\nu_i$.  The only
way that $\nu_i$ can change from 0 to 1 is at a gate bringing
down a 1 from above, but that would change the value of the wire
above, violating the induction hypothesis.  The only way that
$\nu_i$ can change from 1 to 0 is at a gate leading away from $\nu_0$,
but then $\nu_i$ cannot change back to 1, so the input and output
cannot be the same.
\end{proof}

\begin{lemma}\label{lem:mon}
($\VCC\proves$)
If a wire $\nu_i$ has value $1$ at some position $p$ in some gadget
$\gamma_k$, then the output of $\nu_i$ in the final gadget is $1$.
\end{lemma}

\begin{proof}
By Lemma \ref{lem:gadget} the tuple of input values for wires
$\nu_0,\ldots,\nu_{n^2-1}$ is the same for gadgets $\gamma_k$ and
$\gamma_{k+1}$ except possibly some input changes from 0 to 1.
From this and the fact that the gadgets $\gamma_k$ and $\gamma_{k+1}$
are the same except for the first gate, it is easy to prove by induction
on $p$ that at position in $p$ each wire has the same value in
$\gamma_{k+1}$ as in $\gamma_k$, except the value might be 0 in
$\gamma_k$ and 1 in $\gamma_{k+1}$.

Therefore by induction on $k$, if some wire $\nu_i$ has value 1 in
position $p$ in some gadget then wire $\nu_i$ has value 1 in position
$p$ in the final gadget.  In this case, by Lemma \ref{lem:noChange}
the output of $\nu_i$ in the final gadget is 1.
\end{proof}

\begin{lemma}\label{lem:corresp}
($\VCC\proves$)
Let $j>0$ and let $y$ be the node in $G'$ corresponding to wire $\nu_j$.
Then the output of $\nu_j$ in $C$ is $1$ iff there is $i<j$ such that
the output of wire $\nu_i$ in $C$ is $1$ and there is an edge from
$x$ to $y$, where $x$ is the node in $G'$ which corresponds to $\nu_i$.
\end{lemma}
\begin{proof}
For the direction ($\Leftarrow$), suppose that the output of wire
$\nu_i$ is 1 and there is an edge from $x$ to $y$ in $G'$.  Then
there is a gate from $\nu_i$ to $\nu_j$ in the final gadget.
Then the value of $\nu_j$ must be 1 by Lemma \ref{lem:noChange},
since otherwise this gate would change $\nu_i$ from 1 to 0.
Since the value of $\nu_j$ cannot change, its value is 1 at the output.

For the direction ($\Rightarrow$) suppose the value of $\nu_j$ is 1
at the output.  Since the value of $\nu_j$ at the input to $C$ is 0,
there is some gadget $\gamma_k$ such that the value of $\nu_j$
changes from 0 to 1.  For this to happen there must be some 
comparator gate in $\gamma_k$ from some wire $\nu_i$ down to $\nu_j$
with $i<j$ such that the value of $\nu_i$ before the gate is 1,
and there is an edge from the node $x$ corresponding to $\nu_i$
to node $y$.  By Lemma \ref{lem:mon} the value of $\nu_i$ at the
output of the final gadget is 1.
\end{proof}

To complete the proof of Theorem \ref{theo:NL2CCV} recall the
meaning of the array $U$ given in (\ref{eq:connMean}), and the
relation between the graph $G = (V,E)$ and
the graph $G'=(V',E')$ given by (\ref{eq:Gprime}).  We define
$U(d,i)$ (the truth value of node $u_i$ of $G$ in row $d$ of the array
$U$) to be true iff the the output of wire $\nu_j$ in circuit $C$ is 1,
where $\nu_j$ is the wire corresponding to node $u^d_i$ in $G'$. 

We prove  that this definition of $U$ satisfies the formula $\conn(n,E,U)$
(\ref{eq:connMean}) (which appears in the axiom (\ref{eq:conn}) for $\VNL$)
by induction on $d$. 
The base case is
$d=0$.  We have $U(0,0)$ holds because the output of $\nu_0$ (the wire
corresponding to node $u^0_0$) is 1 by 
Lemma \ref{lem:teminate}.  For $i>0$, $U(0,i)$ is false because
there is no edge in $G'$ leading to node $u^0_i$, and hence there is
no comparator gate in any gadget in $C$ leading down to the wire
corresponding to $u^0_i$, so that wire has output 0.

The induction step follows directly from our definition of $U$ above,
together with (\ref{eq:connMean}) and Lemma \ref{lem:corresp}.
\end{proof}

As a of  consequence of \theoref{NL2CCV}, we have the following result.

\begin{theorem}\label{theo:redu}
$\CCstar$ is closed under many-one $\NL$-reductions, and hence
$\CCSubr \subseteq \CCstar$.
\end{theorem}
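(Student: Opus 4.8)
The plan is to derive both parts of the statement from \theoref{NL2CCV} by passing to the associated function classes. The key point is that the provably total functions of $\VCC$ are exactly those of $\FCCstar$, and that (as already noted when $\CCstar$ was introduced) $\FCCstar$ is closed under composition.

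First I would extract the function-class consequence of \theoref{NL2CCV}. That theorem gives $\VNL \subseteq \VCC$, so every function provably total in $\VNL$ is provably total in $\VCC$. By the machinery of \cite[Chapter~9]{CN10}, the provably total functions of $\VNL$ are precisely those in $\FNL$, while those of $\VCC$ are precisely those in $\FCCstar$; hence $\FNL \subseteq \FCCstar$. Equivalently, every $\FNL$ function is $\AC^0$-oracle-reducible to $\CCV$ (Definition~\ref{d:ACred}).

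Next I would establish closure of $\CCstar$ under many-one $\NL$-reductions. Suppose $R_1 \mred^{\NL} R_2$ with $R_2 \in \CCstar$, witnessed (per Definition~\ref{d:manyOne}) by $\vec{f},\vec{F}\in\FNL$ with $R_1(\vec{x},\vec{X}) \lra R_2(\vec{f}(\vec{x},\vec{X}),\vec{F}(\vec{x},\vec{X}))$. Then the characteristic function $\chi_{R_1}$ is the composition of $\chi_{R_2}$ with $\vec{f},\vec{F}$. Since $R_2\in\CCstar$ we have $\chi_{R_2}\in\FCCstar$, and by the previous paragraph $\vec{f},\vec{F}\in\FNL\subseteq\FCCstar$. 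As $\FCCstar$ is closed under composition, $\chi_{R_1}\in\FCCstar$, that is, $R_1\in\CCstar$.

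The inclusion $\CCSubr\subseteq\CCstar$ then follows by specialization: since $\mathsf{FL}\subseteq\FNL$, every log-space many-one reduction is in particular an $\NL$-reduction. Thus if $R\in\CCSubr$ then $R\mred^{\L}\CCV$, hence $R\mred^{\NL}\CCV$; and since $\CCV\in\CCstar$ trivially, the closure just proved yields $R\in\CCstar$. The main subtlety—rather than a genuine obstacle—is the identification of the provably total functions of $\VCC$ with $\FCCstar$ together with the closure of that class under composition; both are supplied by the general framework of \cite[Chapter~9]{CN10}, so once they are in hand the remainder is a routine composition of reductions.
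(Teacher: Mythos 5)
Your proof is correct and takes essentially the same route as the paper's: the paper's (very terse) proof cites exactly the three facts you establish, namely closure of $\FCCstar$ under composition, the inclusion of $\FNL$ in the comparator-circuit function class as a consequence of \theoref{NL2CCV}, and the fact that a decision problem is in $\CCstar$ iff its characteristic function is in $\FCCstar$. Your elaboration of the first fact---passing from the theory inclusion $\VNL \subseteq \VCC$ to $\FNL \subseteq \FCCstar$ via the provably total functions of the two theories---is the intended reading of the paper's appeal to \theoref{NL2CCV}, and the rest of your argument (composition of reductions, specialization from $\NL$- to log-space reductions) matches the paper's.
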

\begin{proof}
This follows from the following three facts:  The function class $\FCCstar$
is closed under composition, $\FNL\subseteq \FCC$, and a decision
problem is in $\CCstar$ iff its characteristic function is in $\FCCstar$.
\end{proof}

\section{The $\SMP$ problem is $\CC$-complete}

\subsection{$\TLFMM$ is $\AC^{0}$-many-one-reducible to  $\SMP$, $\MOSM$ and $\WOSM$ \label{sec:yuli}}

We start by showing that $\TLFMM$ is $\AC^0$-many-one-reducible to
$\SMP$ in the second sense of Definition \ref{d:manyOne}; i.e. we
regard both $\TLFMM$ and $\SMP$ as search problems.  (Of course the
lfm-matching is the unique solution to $\TLFMM$ formulated as a
search problem,, but it is still a total search problem.)

Let $G=(V,W,E)$ be a bipartite graph  from an instance of  $\TLFMM$, where $V$ is the set of bottom nodes, $W$ is the set of top nodes, and $E$ is the edge relation such that the degree of each node is at most three (see the example in the figure on the left  below).  Without loss of generality, we can assume that $|V|=|W|=n$. To reduce it to an instance of $\SMP$, we  double the number of nodes in each partition, where the new nodes are enumerated after the original nodes and the original nodes are enumerated using the ordering of the original bipartite graph, as shown in the diagram on the right below. We also let the bottom nodes and top nodes represent the men and women respectively.
\begin{center}
\tikzstyle{vertex}=[circle,fill=black!30,minimum size=16pt,inner sep=0pt]
\tikzstyle{edge} = [draw,line width=1pt,-]
\begin{minipage}{0.25\textwidth}
\centering
\begin{tikzpicture}[scale=0.6, auto,swap]
    \foreach \pos/\name in {{(0,2)/a}, {(2,2)/b}, {(4,2)/c},
    					{(0,0)/x}, {(2,0)/y}, {(4,0)/z}}
        \node[vertex] (\name) at \pos{};
        
    \foreach \source/ \dest  in {x/a,x/b,x/c,y/a,y/c,z/c}
        \path[edge] (\source) -- (\dest);

\end{tikzpicture}
\end{minipage}
\begin{minipage}{0.2\textwidth}
\centering
\begin{tikzpicture}
\draw[-to,line width=1.5pt,snake=snake,segment amplitude=.5mm,
         segment length=2mm,line after snake=1mm]  (0,0) -- (1.5,0);
\end{tikzpicture}
\end{minipage}
\begin{minipage}{0.5\textwidth}
\begin{tikzpicture}[scale=0.6, auto,swap]
    \foreach \i  in {0,...,5}{
    		\node[vertex](w\i) at (2*\i,2){$w_{\i}$};
		\node[vertex](m\i) at (2*\i,0){$m_{\i}$};
  	}
    \foreach \source/ \dest  in {m0/w0,m0/w1,m0/w2,m1/w0,m1/w2,m2/w2}
        \path[edge] (\source) -- (\dest);

\end{tikzpicture}
\end{minipage}
\end{center}

It remains to define a preference list for each person in this $\SMP$ instance. The preference list of each man $m_{i}$, who represents a bottom node in the original graph, starts with all the woman $w_{j}$ (at most three of them) adjacent to $m_{i}$ in the order that these women are enumerated, followed by all the women $w_{n},\ldots,w_{2n-1}$; the list ends with all women $w_{j}$ not adjacent to  $m_{i}$ also in the order that they are enumerated. For example, the preference list of $m_{2}$ in our example is  $w_{2},w_{3},w_{4},w_{5},w_{0},w_{1}$. The preference list of each newly introduced man $m_{n+i}$ simply consists of $w_{0},\ldots,w_{n-1},w_{n},\ldots,w_{2n-1}$, i.e., in the order that the top nodes are listed. Preference lists for the women are defined dually. 


Intuitively, the preference lists are constructed so that any stable marriage (not necessarily man-optimal) of the new $\SMP$ instance must contain  the lfm-matching of $G$. Furthermore, if a bottom node $u$ from the original graph is not matched to any top node in the lfm-matching of $G$, then the man $m_{i}$ representing $u$ will marry some top node $w_{n+j}$, which is a dummy node that does not correspond to any node of $G$.

Formally, let $\calf{I}$ be an instance of $\SMP$ constructed from a bipartite graph $G=(V,W,E)$ using the above construction, where the set of men is $\set{m_{i}}_{i=0}^{2n-1}$ and the set of women is  $\set{w_{i}}_{i=0}^{2n-1}$, and the preference lists are defined as above. For convenience, assume that the set of bottom nodes and top nodes of $G$ are $V=\set{m_{i}}_{i=0}^{n-1}$ and $W=\set{w_{i}}_{i=0}^{n-1}$ respectively; the set of newly added bottom nodes and top nodes are $V'=\set{m_{i}}_{i=n}^{2n-1}$ and $W'=\set{w_{i}}_{i=n}^{2n-1}$ respectively. We will encode the edge relation of $G$ by a Boolean matrix $E_{n\times n}$, where $E(i,j)=1$ iff $m_i$ is adjacent to $w_j$ in $G$. Similarly, we encode the lfm-matching of $G$ by Boolean matrix $L_{n \times n}$. We encode a stable marriage  by a Boolean matrix $M_{2n\times 2n}$, and thus $M(i,j)=1$ iff $m_i$ marries $w_j$ in $M$. We first prove the following lemma.

\begin{lemma}($\VCC\proves$) 
\label{lem:me}
Given a stable marriage $M$, if $M(i,j)=1$ for some $i,j<n$, then $E(i,j)=1$. 
\end{lemma}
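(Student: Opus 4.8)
The plan is to argue by contradiction using the stability of $M$ together with the counting power of $\VCC$. Suppose toward a contradiction that $M(i,j)=1$ with $i,j<n$ but $E(i,j)=0$, i.e.\ $m_i$ and $w_j$ are \emph{not} adjacent in $G$. The first thing I would record is a pair of purely syntactic facts about the preference lists. Since $m_i$ and $w_j$ are non-adjacent, $w_j$ sits in the \emph{final} block of $m_i$'s list (the non-adjacent original women), so $m_i$ ranks every dummy woman $w_n,\ldots,w_{2n-1}$ strictly above $w_j$. Dually, $m_i$ sits in the final block of $w_j$'s list, so $w_j$ ranks every dummy man $m_n,\ldots,m_{2n-1}$ strictly above $m_i$.

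The heart of the argument is to locate a blocking pair, and for this I would first show that some dummy man $m_{n+k}$ is married in $M$ to a dummy woman. This is where counting enters. Restricted to the $n$ dummy men $m_n,\ldots,m_{2n-1}$, the matching $M$ is an injection into the $2n$ women, and none of these men can be matched to $w_j$, since $w_j$ is already taken by $m_i$. If \emph{every} dummy man were married to an original woman, $M$ would restrict to an injection from the $n$-element set of dummy men into the $(n-1)$-element set $\{w_0,\ldots,w_{n-1}\}\setminus\{w_j\}$, contradicting the pigeonhole principle $\PHP$. Since $\VTC^0\subseteq\VCC$ by Corollary \ref{cor:VTC}, $\VCC$ proves $\PHP$, and hence proves that at least one dummy man $m_{n+k}$ is matched to some dummy woman $w_{n+\ell}$.

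It then remains to verify that $(m_{n+k},w_j)$ is a blocking pair. The man $m_{n+k}$ prefers $w_j$ to his current partner $w_{n+\ell}$, because a dummy man's list ranks all original women $w_0,\ldots,w_{n-1}$ ahead of all dummy women. The woman $w_j$ prefers $m_{n+k}$ to her current partner $m_i$, because by the first paragraph she ranks every dummy man ahead of the non-adjacent $m_i$. Thus $m_{n+k}$ and $w_j$ prefer each other to their assigned partners, contradicting the stability of $M$, and the lemma follows.

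The main obstacle is exactly the counting step that guarantees a dummy--dummy marriage: everything else is a direct reading of the constructed preference lists, but this step genuinely requires the pigeonhole principle and so depends essentially on having $\VTC^0$ available inside $\VCC$ (Corollary \ref{cor:VTC}). The only real care needed in the formalization is to exhibit the alleged injection from the $n$ dummy men into the $n-1$ remaining original women as a $\Sigma^B_0$-definable map so that $\PHP$ applies; once that is in place, the blocking-pair verification and the appeal to stability formalize routinely in $\VCC$.
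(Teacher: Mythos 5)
Your proof is correct and takes essentially the same route as the paper's: assume $M(i,j)=1$ but $E(i,j)=0$, use the pigeonhole principle $\PHP$ (available in $\VCC$ via Corollary~\ref{cor:VTC}) to conclude that some dummy man must be married to a dummy woman, and then exhibit a blocking pair contradicting stability. The only cosmetic difference is that you use the blocking pair formed by the dummy man and $w_j$, whereas the paper points to the symmetric pair involving $m_i$ and the dummy woman; both work for the same reason, and your write-up of the preference-list facts and of the injection needed for $\PHP$ is if anything more careful than the paper's.
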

\begin{proof}  
We prove by contradiction. Suppose $M(i,j)=1$ for some $i,j<n$, but $E(i,j)=0$, then since
$M$ is a perfect matching, by the pigeonhole principle $\PHP(n-1,M)$, which is provable in $\VTC^{0}$ (and hence in $\VCC$), we cannot map the set of $n$ men $V'$ into the set of $n-1$ women $W$. Thus, there must exist some $p\ge n$ and $q\ge n$ such that $M(p,q)=1$. Since $m_i$ prefers $w_q$ to $w_j$ and $w_j$ prefers $m_q$ to $m_i$, $M$ is not stable; hence a contradiction. 
\end{proof}

\begin{lemma}\label{lem:me2}
($\VCC\proves$) Let $M$ be a stable marriage of the $\SMP$ instance $\calf{I}$ reduced from the graph $G$, and let $L$ be the lfm-matching of $G$. Then we have $L=\bigl(\set{0,\ldots,n-1}\times \set{0,\ldots,n-1}\bigr)\cap M$, where here $L$ and $M$ are treated as relations.
\end{lemma}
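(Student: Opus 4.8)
The plan is to show that the restriction of $M$ to the original nodes satisfies exactly the formula that defines the lexicographical first maximal matching, and then invoke uniqueness of that matching. Concretely, I would introduce the relation $N$ given by $N(i,j)\equiv M(i,j)\wedge i<n\wedge j<n$ and prove that $N$ fulfills the biconditional \eref{lfmm} (with $G$ and $N$ playing the roles of the graph encoding $X$ and of $L$). Since \eref{lfmm} determines the lfm-matching uniquely --- which one checks by a routine $\Sigma_0^B$-induction on the number $i\cdot n+j$ (an order embedding of the lexicographic order on pairs), available already in $\V^0\subseteq\VCC$ --- this gives $N=L$, i.e. the claim. Two global facts are used throughout: a stable marriage is a perfect matching, so every person has a unique partner, and \lemref{me}, which guarantees that an original man matched to an original woman in $M$ is matched along an edge of $G$.

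For the forward direction I would fix $i,j<n$ with $N(i,j)=1$ and verify the three conjuncts of \eref{lfmm}. The edge conjunct $E(i,j)=1$ is precisely \lemref{me}. The freeness conjunct $\forall k<j\,\forall \ell<i(\neg N(i,k)\wedge \neg N(\ell,j))$ is immediate because $M$ is a matching, so $m_i$ and $w_j$ each have a single partner. For the third conjunct I would take $k<j$ with $E(i,k)=1$; then $m_i$ ranks the adjacent woman $w_k$ ahead of $w_j$, since adjacent women head each man's list in increasing index order. Letting $m_t$ be the partner of $w_k$ in $M$, stability forbids $(m_i,w_k)$ from blocking, so $w_k$ must prefer $m_t$ to $m_i$. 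As $m_i$ is an adjacent original man of $w_k$ and such men head $w_k$'s list in increasing index order, $m_t$ must be an adjacent original man with $t<i$; hence $N(t,k)=1$ with $t<i$, establishing $\exists i'<i\,N(i',k)$.

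For the reverse direction I would assume the full right-hand side of \eref{lfmm} for $i,j<n$ and show $M(i,j)=1$. If not, then since $M$ is perfect, $m_i$ is matched to some $w_{j'}\ne w_j$ and $w_j$ to some $m_{i'}\ne m_i$, and stability rules out $(m_i,w_j)$ as a blocking pair, so either $m_i$ prefers $w_{j'}$ to $w_j$ or $w_j$ prefers $m_{i'}$ to $m_i$. In the first case, because $E(i,j)=1$ places $w_j$ among $m_i$'s adjacent women and only adjacent women of smaller index precede $w_j$ on his list, $w_{j'}$ must be adjacent and original with $j'<j$, giving $N(i,j')=1$ and contradicting $\forall k<j\,\neg N(i,k)$. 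The second case is symmetric and contradicts $\forall \ell<i\,\neg N(\ell,j)$. Either way we contradict an assumed conjunct, so $M(i,j)=1$.

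The main obstacle is the reverse direction: one must convert the purely existential statement ``$(m_i,w_j)$ is not blocking'' into the sharp structural conclusion that any strictly preferred partner is a smaller-index adjacent original node, and this depends on reading off the exact three-block shape of the constructed lists (adjacent originals, then the dummy nodes, then non-adjacent originals). Once these orderings are pinned down, the remaining ingredients --- the uniqueness of the lfm-matching and the $\Sigma_0^B$-inductions --- are routine and formalize directly in $\VCC$; the only nontrivial imported fact is \lemref{me}, whose proof already draws on the counting strength guaranteed by $\VTC^0\subseteq\VCC$.
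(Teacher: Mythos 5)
Your proposal is correct, but it takes a genuinely different route from the paper's proof. You funnel all information about $M$ into a single statement --- that its restriction $N$ to the original nodes satisfies the defining biconditional \eref{lfmm} --- verifying the edge conjunct via \lemref{me}, the freeness conjunct from the matching property, and the remaining conjunct and the converse direction from stability together with the three-block shape of the preference lists; you then conclude $N=L$ from a separate uniqueness lemma for solutions of \eref{lfmm}, proved by $\Sigma_{0}^{B}$-induction along the row-major order $i\cdot n+j$ (indeed available already in $\V^0$, since the right-hand side of \eref{lfmm} at $(i,j)$ refers only to entries that precede $(i,j)$ in that order). The paper never isolates such a uniqueness lemma; instead it proves the two inclusions directly, comparing $L$ and $M$ entry by entry: for $L\subseteq M$ it applies $\Sigma_{0}^{B}\textit{-MIN}$ to pick a ``bad pair'' (where $L(i,j)=1$ but $M$ matches $m_i$ elsewhere) with least man index and, in the two cases $k<j$ and $k>j$, uses stability, \lemref{me} and \eref{lfmm} to contradict $L(i,j)=1$; for the reverse inclusion it shows a pair of $M$ missing from $L$ would itself have to lie in $L$ by \eref{lfmm}. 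What your decomposition buys is modularity: the stable-marriage reasoning is confined to a local verification of the biconditional, the combinatorics of \eref{lfmm} is confined to a reusable, purely $\V^0$ uniqueness fact, and no minimal-counterexample bookkeeping across the two matchings is needed. What the paper's route buys is that it never has to state or prove uniqueness of the lfm-matching as a separate lemma. Both versions import the same nontrivial ingredients (\lemref{me}, hence the $\VTC^0\subseteq\VCC$ counting, plus $\Sigma_{0}^{B}$ minimization or induction), so both formalize in $\VCC$ with the same resources.
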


\begin{proof}
First, we will show that $L$ is contained in $\bigl(\set{0,\ldots,n-1}\times \set{0,\ldots,n-1}\bigr)\cap M$. Suppose otherwise, then there must exist a pair $(i,j)$, called ``bad pair", $i,j<n$, such that $L(i,j)=1$ but $M(i,k)=1$ for some $k\ne j$.
Using the  $\Sigma_{0}^{B}\textit{-MIN}$ principle, we pick  the ``bad pair" $(i,j)$ with minimum man index.
There are two cases:
\begin{enumerate}
\item If $k<j$, then we cannot have $L(h,k)=1$ for any $h< i$ for otherwise the pair $(h,k)$ is a ``bad pair" with a smaller man index than $(i,j)$, which is a contradiction. Therefore, $L(h,k)=0$ for any $h<i$. By \lemref{me}, we have $E(i,k)=1$.
Note that for any $\ell< k$, we have $L(i,\ell)=0$, and furthermore, 
by the property of lfm-matching, for any $\ell< k$, either $E(i,\ell)=0$ or there exists some $i'<i$ such that $L(i',\ell)=1$.
Therefore, $L(i,k)=1$ by Eq.~\eref{lfmm}, which contradicts the fact that $L(i,j)=1$.  
\item Otherwise $k>j$, then we cannot have $M(h,j)=1$ for any $h>i$ for otherwise $m_i$ prefers $w_j$ to $w_k$ and $w_j$ prefers $m_i$ to $m_h$, which implies that $M$ is not stable. Therefore, $M(h,j)=1$ for some $h<i$.
By \lemref{me}, we have $E(h,j)=1$.
Note that for any $\ell<j$, $L(h,\ell)=0$, for otherwise the pair $(h,\ell)$ is a ``bad pair" with a smaller man index than $(i,j)$, which is a contradiction. Furthermore, by the property of lfm-matching, for any $\ell< j$, either $E(h,\ell)=0$ or there exists some $i'<i$ such that $L(i',\ell)=1$. 
Since for any $p< h$, we have $L(p,j)=0$, by Eq.~\eref{lfmm}, we have $L(h,j)=1$, which contradicts the fact that $L(i,j)=1$.   
\end{enumerate}

Next, it remains to show that $L$ cannot be strictly contained in $\bigl(\set{0,\ldots,n-1}\times \set{0,\ldots,n-1}\bigr)\cap M$. Suppose otherwise, let $(i,j)$, $i,j<n$, be a pair such that $M(i,j)=1$ but for all $k<n$ and for all $\ell<n$, we have $L(i,k)=0$ and $L(\ell,j)=0$. By \lemref{me}, we have $E(i,j)=1$. 
Furthermore, by the property of lfm-matching, for any $\ell< j$, either $E(i,\ell)=0$ or there exists some $i'<i$ such that $L(i',\ell)=1$. By Eq.~\eref{lfmm}, we have $L(i,j)=1$, which is a contradiction.
\end{proof}

Since \lemref{me2} directly implies the correctness of the many-one $\AC^{0}$-reduction from $\TLFMM$ to $\SMP$ as search problems, any solution of a stable marriage instance constructed by the above reduction provides us all the information to decide if an edge is in the lfm-matching of the original $\TLFMM$ instance. The key explanation is that every instance of stable marriage produced by the above reduction has a unique solution; thus the man-optimal solution coincides with the woman-optimal solution.   Further
\lemref{me2} also shows that the decision version of $\TLFMM$ is 
$\AC^{0}$-many-one-reducible to either of the decision problems
$\MOSM$ and $\WOSM$. Hence we have proven the following theorem.

\begin{theorem}\label{theo:me}
($\VCC\proves$) $\TLFMM$  is $\AC^{0}$-many-one-reducible to $\SMP$, $\MOSM$ and $\WOSM$.
\end{theorem}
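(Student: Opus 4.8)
The plan is to obtain the theorem as a direct consequence of \lemref{me2}, so that the only remaining work is to check that each of the three asserted reductions matches the precise form required by Definition~\ref{d:manyOne} and that the auxiliary functions are $\Sigma_0^B$-definable (hence lie in $\mathsf{FAC}^0$). First I would record that the construction described above---doubling each side of the degree-$\le 3$ graph $G=(V,W,E)$ with $|V|=|W|=n$ to obtain men $\set{m_i}_{i=0}^{2n-1}$ and women $\set{w_i}_{i=0}^{2n-1}$, and filling in the preference lists as specified---is given by a $\Sigma_0^B$-formula: the rank of each $w_j$ in the list of $m_i$ (and dually) is computed directly from $E$, $i$, $j$ and $n$ by a bounded formula. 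This supplies the instance-mapping functions $\vec f,\vec F\in\mathsf{FAC}^0$ of Definition~\ref{d:manyOne} that are common to all three reductions.

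For the reduction of $\TLFMM$ to $\SMP$ as \emph{search} problems (the second clause of Definition~\ref{d:manyOne}), I would exhibit the decoding function $G$. Given any string $Z=M$ encoding a stable marriage of the constructed instance $\calf I$, let $G$ output the $n\times n$ restriction $\bigl(\set{0,\ldots,n-1}\times\set{0,\ldots,n-1}\bigr)\cap M$; this is plainly $\Sigma_0^B$-definable from $M$. By \lemref{me2} this restriction equals the lfm-matching $L$ of $G$ for \emph{every} stable marriage $M$, and since $\SMP$ is a total search problem, $G$ sends every solution of the $\SMP$ instance into the solution set of the $\TLFMM$ instance, which is exactly what the definition demands.

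For the decision reductions to $\MOSM$ and $\WOSM$, I would keep the same instance map $\vec f,\vec F$ and translate a designated edge $(v_i,w_j)$ of the $\TLFMM$ instance (with $i,j<n$) into the designated pair $(m_i,w_j)$. The key point is that \lemref{me2} forces $M(i,j)=L(i,j)$ for all $i,j<n$ and every stable marriage $M$; in particular this holds for the man-optimal and for the woman-optimal solution. Hence $(m_i,w_j)$ is married in the man-optimal (resp.\ woman-optimal) stable marriage iff $L(i,j)=1$, i.e.\ iff $(v_i,w_j)$ lies in the lfm-matching. This is a relation-to-relation many-one $\AC^0$-reduction in the first sense of Definition~\ref{d:manyOne}, with the trivial equivalence $\varphi\lra\varphi$ as the final step.

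Finally, as to where the difficulty lies: essentially all the mathematical content has already been discharged in \lemref{me2}, whose $\VCC$-proof rests on \lemref{me} and the pigeonhole principle available through $\VTC^0\subseteq\VCC$ (\corref{VTC}) together with the $\Sigma_0^B\textit{-MIN}$ scheme. Consequently the theorem itself is mostly bookkeeping, and the one point I would treat with care is the formalization inside $\VCC$ of ``the man-optimal / woman-optimal solution'': since $\MOSM$ and $\WOSM$ name specific canonical stable marriages, what makes the reduction correct is precisely that \lemref{me2} pins down $M(i,j)$ on all original coordinates uniformly over \emph{all} stable marriages, so no separate argument about optimality is needed for the designated-pair coordinate.
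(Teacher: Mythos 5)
Your proposal is correct and follows essentially the same route as the paper: the construction plus \lemref{me2} carries all the mathematical content, and the theorem is then the bookkeeping of matching the search-problem and decision-problem clauses of Definition~\ref{d:manyOne}, exactly as the paper does in the paragraph following \lemref{me2}. Your closing observation is in fact a slight sharpening of the paper's wording: the paper appeals to uniqueness of the stable marriage for the constructed instance, whereas you correctly note that \lemref{me2}'s determination of $M$ on the original $n\times n$ coordinates, uniformly over all stable marriages, already suffices for the reductions to $\MOSM$ and $\WOSM$.
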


\subsection{$\MOSM$ and $\WOSM$ are  $\AC^{0}$-many-one-reducible to $\CCV$ \label{s:SMP-CCV}}
In this section, we formalize a reduction from $\SMP$ to $\CCV$ due to Subramanian \cite{Sub90,Sub94}. Subramanian did not reduce  $\SMP$ to $\CCV$ directly, but to the \emph{network stability problem} built from the less standard X gate, which takes two inputs $p$ and $q$ and produces two outputs $p'=p\wedge \neg q$ and $q'=\neg p\wedge q$.  It is important to note that the ``\emph{network}'' notion in Subramanian's work denotes a generalization of circuits  by allowing a connection from the output of a gate to the input of any gate including itself, and thus a network in his definition might contain cycles. An X-network is a network consisting only of X gates under the important restriction that each X gate has fan-out exactly one for each output it computes. The network stability problem for X gate ($\XNS$) is then to decide if an X-network has a stable configuration, i.e., a way to assign Boolean values to the wires of the network so that  the values are compatible with all the X gates of the network. Subramanian showed in his dissertation \cite{Sub90}  that $\SMP$, $\XNS$ and  $\CCV$ are all equivalent under $\log$ space reductions. 

We do not work with $\XNS$ in this paper since networks are less intuitive and do not have a nice graphical representation as do comparator circuits. By utilizing Subramanian's idea, we give a direct $\AC^{0}$-reduction from $\SMP$ to $\CCV$. For this goal, it turns out to be conceptually simpler to go through a new variant of $\CCV$, where the comparator gates are three-valued instead of Boolean.

\subsubsection{$\TCV$ is $\CC$-complete \label{sec:ccvpi}}
We define the $\TCV$ problem similarly to $\CCV$, i.e., we want to decide, on a given input assignment, if a designated wire of a comparator circuit outputs one.  The only difference is that each wire can now take either value 0, 1 or $\ast$, where a wire takes value $\ast$ when its value is not known to be 0 or 1.  The output values of the comparator gate on two input values $p$ and $q$ will be defined as follows.
\begin{center}
$p\wedge q = \begin{cases}
0	& \text{if $p=0$ or $q=0$}\\
1	& \text{if $p=q=1$}\\
\ast	& \text{otherwise.} 
\end{cases}$
\hspace{2cm} 
$p\vee q = \begin{cases}
0	& \text{if $p=q=0$}\\
1	& \text{if $p=1$ or $q=1$}\\
\ast	& \text{otherwise.} 
\end{cases}$
\end{center}
Every instance of $\CCV$ is also an instance of $\TCV$. We will show that every instance of $\TCV$ is $\AC^{0}$-many-one-reducible to an instance of $\CCV$ by using a pair of Boolean wires to represent each three-valued wire and adding comparator gates appropriately to simulate three-valued comparator gates. 

\begin{theorem}\label{theo:tcv}
($\VCC\proves$) $\TCV$ and $\CCV$ are equivalent under many-one $\AC^0$-reductions.
\end{theorem}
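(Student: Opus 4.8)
The plan is to prove the two reductions separately. The direction $\CCV \mred^{\AC^0}\TCV$ is essentially the identity: a Boolean comparator circuit is already a three-valued one, and on inputs drawn from $\set{0,1}$ the three-valued rules for $p\wedge q$ and $p\vee q$ restrict to the ordinary Boolean $\wedge$ and $\vee$, so no input or intermediate value can ever equal $\ast$. Formalizing this in $\VCC$ amounts to checking, via the defining $\Sigma_0^B$ formula $\ccv$ of \eref{ccv}, that the three-valued layer matrix agrees with the Boolean one, which is immediate.

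For the nontrivial direction $\TCV \mred^{\AC^0}\CCV$, the key observation is that the three-valued gate is exactly a comparator on the three-element chain $0 < \ast < 1$: inspecting the tables gives $p\wedge q = \min(p,q)$ and $p\vee q = \max(p,q)$ with respect to this order. I would therefore simulate it using the standard unary (``thermometer'') encoding: encode a three-valued value $v$ by the pair of Boolean bits $(b_1,b_2)$ with $b_1 = [\,v\ge \ast\,]$ and $b_2 = [\,v\ge 1\,]$, so that $0\mapsto(0,0)$, $\ast\mapsto(1,0)$, and $1\mapsto(1,1)$, the only invalid pattern being $(0,1)$. The crucial identity is that, since $[\,\min(p,q)\ge c\,]=[\,p\ge c\,]\wedge[\,q\ge c\,]$ and $[\,\max(p,q)\ge c\,]=[\,p\ge c\,]\vee[\,q\ge c\,]$ for each threshold $c\in\set{\ast,1}$, a single three-valued comparator acts bit-for-bit as a Boolean comparator on each rail.

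Concretely, the reduction replaces each three-valued wire $w$ by two Boolean wires $w^{(1)},w^{(2)}$ and each three-valued gate $\seq{x,y}$ by the two Boolean gates $\seq{x^{(1)},y^{(1)}}$ and $\seq{x^{(2)},y^{(2)}}$ placed consecutively; because these act on disjoint rails their relative order is irrelevant, and the Boolean comparator $\seq{x^{(c)},y^{(c)}}$ sends $(x^{(c)},y^{(c)})$ to $(x^{(c)}\wedge y^{(c)},\,x^{(c)}\vee y^{(c)})$, which are exactly the $c$-th threshold bits of $(\min,\max)$. Each Boolean output retains fan-out one, so the result is a legal $\CCV$ instance. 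The input assignment is the thermometer code of the given three-valued inputs, and the designated $\CCV$ wire is the second rail $w^{(2)}$ of the designated three-valued wire, since a value equals $1$ iff its second thermometer bit is $1$. Each output bit of this map is computed from a fixed local piece of the input, so it is $\Sigma_0^B$-definable and hence, by \theoref{szb}, a genuine $\AC^0$ many-one reduction.

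Finally I would verify correctness inside $\VCC$ by $\Sigma_0^B$-induction on the layer index, just as in the reductions of Sections \ref{sec:c2vl} and \ref{sec:vl2cc}. The invariant is that after the block of Boolean gates simulating the first $i$ three-valued gates, the pair $(w^{(1)},w^{(2)})$ equals the thermometer code of the three-valued value of $w$ at layer $i$, i.e.\ $w^{(c)}=[\,\text{value of }w\ge c\,]$; validity of codewords ($b_1\ge b_2$) is maintained automatically because the bits track an actual three-valued value. The base case is the input encoding, and the induction step is precisely the bitwise identity above applied to the two wires touched by the $i$-th gate, with untouched wires copied unchanged on both rails. The main obstacle is bookkeeping rather than mathematics: one must set up the $\Sigma_0^B$ formula describing the doubled circuit, including the index arithmetic that maps gate $i$ and rail $c$ to a gate position and wire $w^{(c)}$ to a wire index, carefully enough that the layer-to-layer step can be read off from $\ccv$ in \eref{ccv} and the $\Sigma_0^B$-induction applied. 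Once that encoding is fixed, the inductive step reduces to a finite Boolean identity.
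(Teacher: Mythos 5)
Your proposal is correct and matches the paper's proof in all essentials: the paper likewise encodes each three-valued wire as a pair of Boolean wires (using codes $\seq{0,0},\seq{1,1},\seq{0,1}$ for $0,1,\ast$) and replaces each three-valued gate by two Boolean comparator gates acting on the paired rails, except that it verifies the gadget by an exhaustive nine-case table where you use the cleaner observation that the gate is $\min/\max$ on the chain $0<\ast<1$ and that thermometer bits commute with $\min/\max$ thresholdwise. The only other difference is at the designated wire: your encoding lets you designate the second rail $w^{(2)}$ directly (since it equals $[\,v\ge 1\,]$), whereas the paper (whose encoding would also permit a direct choice) instead adds one extra comparator gate to test whether the pair outputs $\seq{1,1}$ -- both handle the single-designated-wire requirement correctly.
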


\begin{proof}Since each instance of $\CCV$ is a special case of $\TCV$, it only remains to show that every instance of $\TCV$ is $\AC^{0}$-many-one-reducible to an instance of $\CCV$. 

First, we will describe a gadget built from standard comparator gates that simulates a three-valued comparator gate  as follows. Each wire of an instance of $\TCV$  will be represented by a pair of wires in an instance of $\CCV$. Each three-valued comparator gate on the left below, where $p,q,p\wedge q,p\vee q \in \set{0,1,\ast}$, can be simulated by  a gadget with two standard comparator gates on the right below.
\begin{center}
\begin{minipage}{0.3\textwidth}
\centering
\hspace{-1.5cm}$\Qcircuit @C=2em @R=0.7em {
\lstick{\bit{p}}&\push{x\,} & \ctrl{1}	& \rstick{~~~~~\bit{p\wedge q}}\qw\\
\lstick{\bit{q}}&\push{y\,} & \ctrld{-1} & \rstick{~~~~~\bit{p\vee q}}\qw  
}$
\end{minipage}
\begin{minipage}{0.2\textwidth}
\centering
\begin{tikzpicture}
\draw[-to,line width=1.5pt,snake=snake,segment amplitude=.5mm,
         segment length=2mm,line after snake=1mm]  (0,0) -- (2,0);
\end{tikzpicture}
\end{minipage}
\begin{minipage}{0.4\textwidth}
\centering
$\Qcircuit @C=2em @R=.7em {
\lstick{\bit{p_1}}&\push{x_1} & \ctrl{1}	&\qw	& \rstick{~~~~~\bit{p_1\wedge q_1}}\qw\\
\lstick{\bit{p_2}}&\push{x_2} &\qw	& \ctrl{1}	& \rstick{~~~~~\bit{p_2\wedge q_2}}\qw\\
\lstick{\bit{q_1}}&\push{y_1} & \ctrld{-1} &\qw	& \rstick{~~~~~\bit{p_1\vee q_1}}\qw\\  
\lstick{\bit{q_2}}&\push{y_2} &\qw	& \ctrld{-1} & \rstick{~~~~~\bit{p_2\vee q_2}}\qw
}$
\end{minipage}
\end{center}
The wires $x$ and $y$ are represented using the two pairs of wires $\seq{x_{1},x_{2}}$ and $\seq{y_{1},y_{2}}$, and three possible values 0, 1 and $\ast$ will be encoded by $\seq{0,0}$, $\seq{1,1}$, and $\seq{0,1}$ respectively.  The fact that our gadget  correctly simulates the three-valued comparator gate is shown in the following table.
\begin{center}
\begin{tabular}{|c|c||c|c||c|c||c|c|}
\hline
$~p~$ & $~q~$ & $\seq{p_1,p_2}$ &  $\seq{q_1,q_2}$ & $p\wedge q$ &$p\vee q$ 
& $\seq{p_1\wedge q_1,p_2 \wedge q_2}$& $\seq{p_1\vee q_1,p_2 \vee q_2}$\\
\hline
\hline
0 & 0 & $\seq{0,0}$ &  $\seq{0,0}$ & 0 & 0 & $\seq{0,0}$& $\seq{0,0}$\\
\hline
0 & 1 & $\seq{0,0}$ &  $\seq{1,1}$ & 0 & 1 & $\seq{0,0}$& $\seq{1,1}$\\
\hline
0 & $\ast$ & $\seq{0,0}$ &  $\seq{0,1}$ & 0 & $\ast$ & $\seq{0,0}$& $\seq{0,1}$\\
\hline
1 & 0 & $\seq{1,1}$ &  $\seq{0,0}$ & 0 & 1 & $\seq{0,0}$& $\seq{1,1}$\\
\hline
1 & 1 & $\seq{1,1}$ &  $\seq{1,1}$ & 1 & 1 & $\seq{1,1}$& $\seq{1,1}$\\
\hline
1 & $\ast$ & $\seq{1,1}$ &  $\seq{0,1}$ & $\ast$ & 1 & $\seq{0,1}$& $\seq{1,1}$\\
\hline
$\ast$ & 0 & $\seq{0,1}$ &  $\seq{0,0}$ & 0 & $\ast$ & $\seq{0,0}$& $\seq{0,1}$\\
\hline
$\ast$ & 1 & $\seq{0,1}$ &  $\seq{1,1}$ & $\ast$ & 1 & $\seq{0,1}$& $\seq{1,1}$\\
\hline
$\ast$ & $\ast$ & $\seq{0,1}$ &  $\seq{0,1}$ & $\ast$ & $\ast$ & $\seq{0,1}$& $\seq{0,1}$\\
\hline
\end{tabular} 
\end{center}
Using this gadget, we can reduce an instance of $\TCV$ to an instance of $\CCV$ by doubling the number of wires, and for every three-valued comparator gate of the $\TCV$ instance, we will add a gadget with two standard comparator gates simulating it.  

The above construction shows how to reduce the question
of whether a designated wire ouputs 1 for a given instance of $\TCV$
to the question of whether a \emph{pair} of wires of an instance of $\CCV$
outputs $\seq{1,1}$. However for an instance of $\CCV$ we are only
allowed to decide whether a \emph{single} designated wire outputs 1.
This technical difficulty can be easily overcome since  we can use an $\wedge$-gate (one of the two outputs of a comparator gate) to test whether
a pair of wires outputs $\seq{1,1}$, and outputs the result on a single designated wire.
\end{proof}

\subsubsection{A fixed-point method for solving stable marriage problems}
We formalize a method for solving $\SMP$  using three-valued comparator circuits based on
\cite{Sub90,Sub94}. Consider an instance $\calf{I}$ of $\SMP$
consisting of $n$ men and $n$ women and preference lists for each
man and woman.  From this instance we construct a three-valued comparator
circuit $C_\calf{I}$.  \figref{f2} illustrates $C_\calf{I}$ when
$\calf{I}$ consists of two men $a,b$ and two women $x,y$ with preference
lists given by the matrices.

\begin{figure}[h!]
\centering
\begin{minipage}{.4\textwidth}
\text{Men:}\hspace{1.4cm}
$\begin{array}{c|cc}
a & x & y \\ 
\hline
b & y & x 
\end{array}$\\
\vspace{1cm}\\
\text{Women:}\hspace{1cm}
$\begin{array}{c|cc}
x & a & b \\ 
\hline
y & a & b 
\end{array}$ 
\end{minipage}\hspace{1cm}
\begin{minipage}{0.4\textwidth}
\newcommand{\Z}{0}
\small
$\Qcircuit @C=1.5em @R=0.5em {
\push{\bit{1}}&\push{a_{0}^{i}} &\qw& \ctrl{1}	&\ctrl{9}&\qw 	&\qw 	&\qw 	& \qw\\
\push{\bit{0}}&\push{x_{0}^{i}} &\qw& \ctrld{-1}	&\qw &\qw 	&\ctrl{11}&\qw 	&\qw\\  
\push{\bit{\ast}}&\push{a_{1}^{i}}&\qw & \ctrl{1}	&\qw	&\qw &\qw 	&\qw 	& \qw\\
\push{\bit{0}}&\push{y_{0}^{i}} &\qw& \ctrld{-1} 	&\qw	&\qw &\qw 	&\ctrl{11}&\qw\\  
\push{\bit{\ast}}&\push{b_{1}^{i}}&\qw & \ctrl{1}	&\qw	&\qw &\qw 	&\qw 	& \qw\\
\push{\bit{\ast}}&\push{x_{1}^{i}}&\qw & \ctrld{-1}&\qw &\qw 	&\qw 	&\qw &\qw\\  
\push{\bit{1}}&\push{b_{0}^{i}} &\qw& \ctrl{1}	&\qw &\ctrl{6}&\qw 	&\qw 	& \qw\\
\push{\bit{\ast}}&\push{y_{1}^{i}} &\qw& \ctrld{-1} &\qw &\qw 	&\qw 	&\qw 	&\qw\\  
{\small I_{0}}
&&\push{\bit{1}}&\push{a_{0}^{o}} &\qw	&\qw &\qw 	&\qw 	&\qw&\qw&\push{1}\\
&&\push{\bit{0}}&\push{x_{0}^{o}} &\qw	&\qw &\qw 	&\qw 	 &\qw&\qw& \push{0}\\  
&&\push{\bit{\Z}}&\push{a_{1}^{o}}&\ctrld{-1}&\qw &\qw 	&\qw 	&\qw&\qw& \push{0}\\
&&\push{\bit{0}}&\push{y_{0}^{o}} &\qw	&\qw &\qw	&\qw 	 &\qw&\qw&\push{0}\\  
&&\push{\bit{\Z}}&\push{b_{1}^{o}}&\qw	&\ctrld{-1}&\qw	&\qw 	& \qw&\qw&\push{\ast}\\
&&\push{\bit{\Z}}&\push{x_{1}^{o}}&\qw	&\qw	&\ctrld{-1}&\qw  	&\qw&\qw&\push{1}\\  
&&\push{\bit{1}}&\push{b_{0}^{o}} &\qw	&\qw &\qw 	&\qw 	&\qw&\qw&\push{1}\\
&&\push{\bit{\Z}}&\push{y_{1}^{o}}&\qw	&\qw &\qw 	& \ctrld{-1}&\qw&\qw&\push{\ast} \\
&&&&&&&&&&\\
&&&&&&&&&&{I_{1}}
\gategroup{1}{3}{16}{9}{1em}{-}
\gategroup{1}{1}{8}{1}{1em}{.}
\gategroup{9}{11}{16}{11}{1em}{.}
}
$
\end{minipage}
\caption{}
\label{fig:f2}
\end{figure}

For each man $m$ and woman $w$ in $\calf{I}$ and each pair $j,k$
with $j,k < n$ we say $\Pair(m_{j},w_{k})$ holds iff
$w$ is at the $j^{\rm th}$ position of $m$'s preference list and $m$ is at the $k^{\rm th}$ position of $w$'s preference list.
For each such pair there are two consecutive input wires of $C_\calf{I}$
labelled $m^i_j$ and $w^i_k$ respectively.  (Here the superscript $i$
stands for `input'.)  Hence there are $n^2$ pairs of input wires,
making a total of $2n^2$ input wires.

In addition there are $2n^2$ other wires called `output wires'
labelled in the same order as above; two consecutive wires with
labels $m^o_j$ and $w^o_k$ for each pair satisfying
$\Pair(m_{j},w_{k})$.
These output wires have fixed input values: We let output wire
$m_{0}^{o}$ take input one for every man $m$, and let the rest of
output wires have zero inputs.

The circuit $C_\calf{I}$ has the following comparator gates.  For each pair
$(m^i_j,w^i_k)$ of consecutive inputs there is a gate from wire
$m^i_j$ to $w^i_k$.  After these gates, for every person $p$, we add a
gate from wire $p_{j}^{i}$ to $p_{j+1}^{o}$ for every $j<n-1$.
Note that the order of this last group of wires does not matter.
(See \figref{f2}.)


Given the instance $\calf{I}$ of $\SMP$ with $n$ men and $n$ women, define $\M:\set{0,1,\ast}^{2n^{2}}\rightarrow \set{0,1,\ast}^{2n^{2}}$ to be the function computed by the preceding circuit construction, where the inputs of $\M$ are those fed into the input wires, and the outputs of $\M$ are those produced by the output wires. We will use the following notation. Any sequence $I\in \set{0,1,\ast}^{2n^{2}}$ can be seen as an input of function $\M$, and thus we write $I(p_{j}^{i})$ to denote the input value of wire $p_{j}^{i}$ with respect to $I$. Similarly, if a sequence $J\in \set{0,1,\ast}^{2n^{2}}$ is an output of $\M$, then we write $J(p_{j}^{o})$ to denote the output value of wire $p_{j}^{o}$.

Let sequence $I_{0}\in \set{0,1,\ast}^{2n^{2}}$ be an input of $\M$ defined as follows: $I_{0}(m_{0}^{i})=1$ for every man $m$, and  $I_{0}(w_{0}^{i})=0$ for every woman $w$, and $I_{0}(p_{j}^{i})=\ast$ for every person $p$ and every $j$, $1\le j < n$.  Note that the number of $\ast$'s in the sequence $I_{0}$ is 
\begin{align}
c(n) = 2n^{2}-2n.\label{eq:c}
\end{align}
Our version of Subramanian's method \cite{Sub90,Sub94} consists of computing 
\[I_{c(n)} =\M^{c(n)}(I_{0}),\] 
where $\M^{d}$ simply denotes the $d^{\rm th}$ power of $\M$, i.e. the function we get by composing $\M$ with itself $d$ times. It turns out that
$I_{c(n)}$ is a fixed point of $\M$, i.e. $I_{c(n)} = \M(I_{c(n)})$.
To show this, we define a sequence $I'$ to be an {\em extension} of
a sequence $I$ if $I(p) = I'(p)$ for every person $p$ such that
$I(p) \in \{0,1\}$. In other words, all Boolean values in $I$ are preserved in $J$, even though some $\ast$-values in $I$ might be changed to Boolean values in $J$.  
We can show that $\M(I)$ is an extension of $I$ for
every $I$ which extends $I_0$, and hence $\M^d(I_0)$ extends $I_0$ for
all $d$.  It follows that $\M^{c(n)}(I_0)$ is a fixed point because
there are at most $c(n)$ $\ast$'s to convert to 0 or 1.

Now we can extract a stable marriage from the fixed point $I_{c(n)}$ 
by letting $B$ be the sequence obtained by substituting zeros for all
remaining $\ast$-values in $I_{c(n)}$.  Then $B$ is also a fixed
point of $\M$.  A stable marriage can then be extracted from $B$ by announcing the marriage of a man $m$ and a woman $w$ iff
$\Pair(m_{j},w_{k})$ and $B(m_{j}^{o})=1$ and $B(w_{k}^{o})=0$. Our goal is to formalize the correctness of this method.

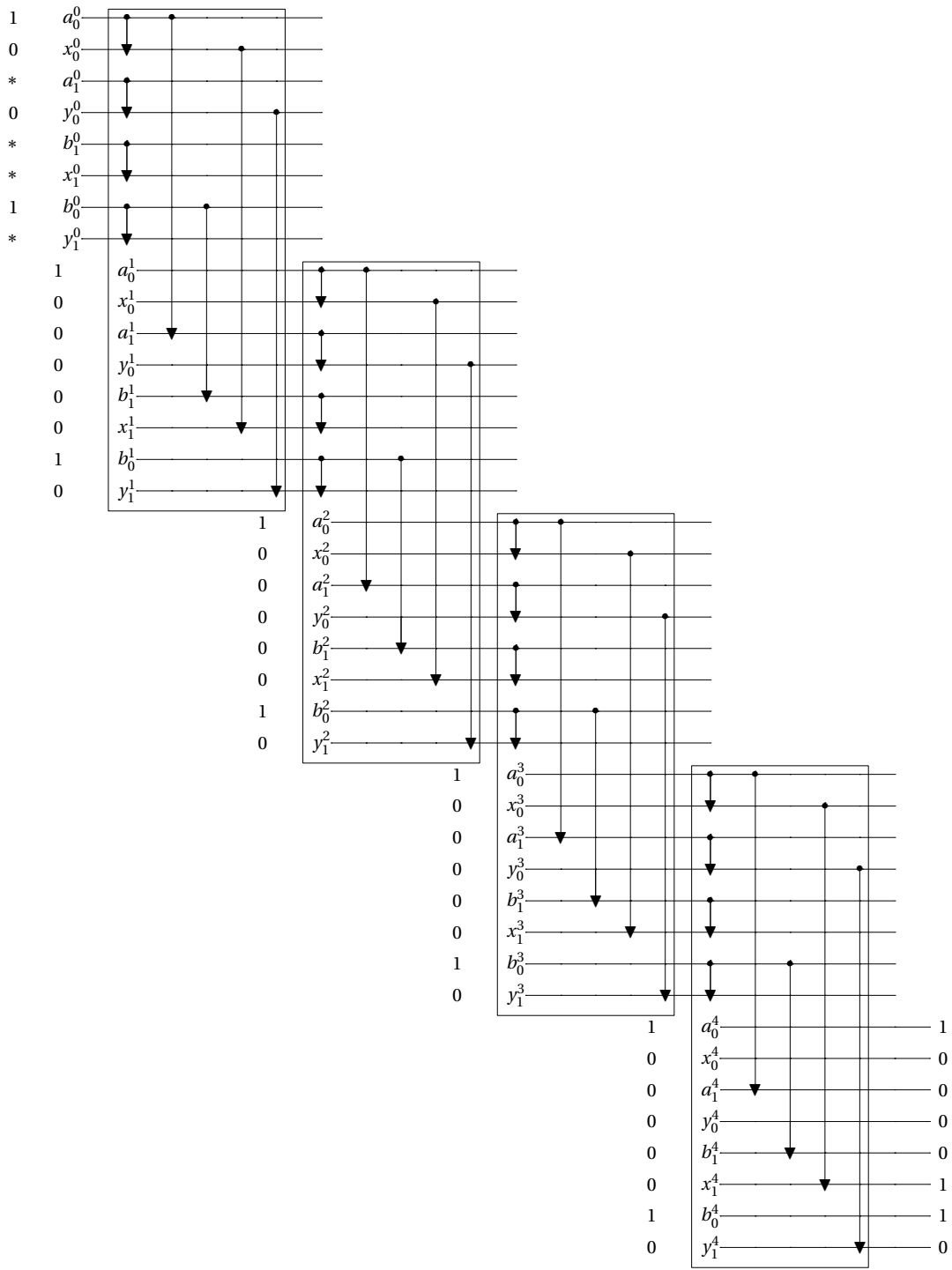
\begin{figure}
\newcommand{\Z}{0}
\newcommand{\one}{\ctrl{9}&\qw 	&\qw 	&\qw 	}
\newcommand{\two}{\qw 	&\qw 	&\ctrl{11}&\qw 	}
\newcommand{\three}{\qw	&\qw 	&\qw 	&\qw 	}
\newcommand{\four}{\qw	&\qw 	&\qw 	&\ctrl{11}}
\newcommand{\five}{\qw	&\qw 	&\qw 	&\qw }
\newcommand{\six}{\qw 	&\qw 	&\qw 	&\qw }	
\newcommand{\sev}{\qw 	&\ctrl{6}&\qw 	&\qw }	
\newcommand{\eight}{\qw 	&\qw 	&\qw 	&\qw 	}

\begin{center}\footnotesize
$\Qcircuit @C=2em @R=0.5em {
\lstick{\bit{1}}&\push{a_{0}^{0}} & \ctrl{1}	&\ctrl{9}&\qw 	&\qw 	&\qw 	& \qw\\
\lstick{\bit{0}}&\push{x_{0}^{0}} & \ctrld{-1}	&\qw 	&\qw 	&\ctrl{11}&\qw 	&\qw\\  
\lstick{\bit{\ast}}&\push{a_{1}^{0}} & \ctrl{1}	&\qw	&\qw 	&\qw 	&\qw 	& \qw\\
\lstick{\bit{0}}&\push{y_{0}^{0}} & \ctrld{-1} 	&\qw	&\qw 	&\qw 	&\ctrl{11}&\qw\\  
\lstick{\bit{\ast}}&\push{b_{1}^{0}} & \ctrl{1}	&\qw	&\qw 	&\qw 	&\qw 	& \qw\\
\lstick{\bit{\ast}}&\push{x_{1}^{0}} & \ctrld{-1}&\qw 	&\qw 	&\qw 	&\qw 	&\qw\\  
\lstick{\bit{1}}&\push{b_{0}^{0}} & \ctrl{1}	&\qw 	&\ctrl{6}&\qw 	&\qw 	& \qw\\
\lstick{\bit{\ast}}&\push{y_{1}^{0}} & \ctrld{-1} &\qw 	&\qw 	&\qw 	&\qw 	&\qw\\  
&\lstick{\bit{1}}&\push{a_{0}^{1}} &\qw	&\qw 	&\qw 	&\qw 	& \ctrl{1}	&\one& \qw\\
&\lstick{\bit{0}}&\push{x_{0}^{1}} &\qw	&\qw 	&\qw 	&\qw 	& \ctrld{-1} 	&\two&\qw\\  
&\lstick{\bit{\Z}}&\push{a_{1}^{1}}&\ctrld{-1}&\qw &\qw 	&\qw 	& \ctrl{1}	&\three& \qw\\
&\lstick{\bit{0}}&\push{y_{0}^{1}} &\qw	&\qw 	&\qw	&\qw 	& \ctrld{-1} 	&\four&\qw\\  
&\lstick{\bit{\Z}}&\push{b_{1}^{1}}&\qw	&\ctrld{-1}&\qw	&\qw 	& \ctrl{1}	&\five& \qw\\
&\lstick{\bit{\Z}}&\push{x_{1}^{1}}&\qw	&\qw	&\ctrld{-1}&\qw & \ctrld{-1} 	&\six&\qw\\  
&\lstick{\bit{1}}&\push{b_{0}^{1}} &\qw	&\qw 	&\qw 	&\qw 	& \ctrl{1}	&\sev& \qw\\
&\lstick{\bit{\Z}}&\push{y_{1}^{1}}&\qw	&\qw 	&\qw 	& \ctrld{-1}& \ctrld{-1}&\eight&\qw\\  
&&&&&&\lstick{\bit{1}}&\push{a_{0}^{2}}	&\qw	&\qw 	&\qw 	&\qw 	& \ctrl{1}	&\one& \qw\\
&&&&&&\lstick{\bit{0}}&\push{x_{0}^{2}}	&\qw	&\qw 	&\qw 	&\qw 	& \ctrld{-1} 	&\two&\qw\\  
&&&&&&\lstick{\bit{\Z}}&\push{a_{1}^{2}}	&\ctrld{-1}&\qw &\qw 	&\qw 	& \ctrl{1}	&\three& \qw\\
&&&&&&\lstick{\bit{0}}&\push{y_{0}^{2}}	&\qw	&\qw 	&\qw	&\qw 	& \ctrld{-1} 	&\four&\qw\\  
&&&&&&\lstick{\bit{\Z}}&\push{b_{1}^{2}}	&\qw	&\ctrld{-1}&\qw	&\qw 	& \ctrl{1}	&\five& \qw\\
&&&&&&\lstick{\bit{\Z}}&\push{x_{1}^{2}}	&\qw	&\qw	&\ctrld{-1}&\qw & \ctrld{-1} 	&\six&\qw\\  
&&&&&&\lstick{\bit{1}}&\push{b_{0}^{2}}	&\qw	&\qw 	&\qw 	&\qw 	& \ctrl{1}	&\sev& \qw\\
&&&&&&\lstick{\bit{\Z}}&\push{y_{1}^{2}}	&\qw	&\qw 	&\qw 	& \ctrld{-1}& \ctrld{-1}&\eight&\qw\\  
&&&&&&&&&&&\lstick{\bit{1}}&\push{a_{0}^{3}}			&\qw	&\qw 	&\qw 	&\qw 	& \ctrl{1}	&\one& \qw\\
&&&&&&&&&&&\lstick{\bit{0}}&\push{x_{0}^{3}}			&\qw	&\qw 	&\qw 	&\qw 	& \ctrld{-1} 	&\two&\qw\\  
&&&&&&&&&&&\lstick{\bit{\Z}}&\push{a_{1}^{3}}		&\ctrld{-1}&\qw &\qw 	&\qw 	& \ctrl{1}	&\three& \qw\\
&&&&&&&&&&&\lstick{\bit{0}}&\push{y_{0}^{3}}			&\qw	&\qw 	&\qw	&\qw 	& \ctrld{-1} 	&\four&\qw\\  
&&&&&&&&&&&\lstick{\bit{\Z}}&\push{b_{1}^{3}}		&\qw	&\ctrld{-1}&\qw	&\qw 	& \ctrl{1}	&\five& \qw\\
&&&&&&&&&&&\lstick{\bit{\Z}}&\push{x_{1}^{3}}		&\qw	&\qw	&\ctrld{-1}&\qw & \ctrld{-1} 	&\six&\qw\\  
&&&&&&&&&&&\lstick{\bit{1}}&\push{b_{0}^{3}}			&\qw	&\qw 	&\qw 	&\qw 	& \ctrl{1}	&\sev& \qw\\
&&&&&&&&&&&\lstick{\bit{\Z}}&\push{y_{1}^{3}}			&\qw	&\qw 	&\qw 	& \ctrld{-1}& \ctrld{-1}&\eight&\qw\\  
&&&&&&&&&&&&&&&&\lstick{\bit{1}}&\push{a_{0}^{4}}		&\qw	&\qw 	&\qw 	&\qw 	& \qw	&\rstick{1}\qw\\
&&&&&&&&&&&&&&&&\lstick{\bit{0}}&\push{x_{0}^{4}} 		&\qw	&\qw 	&\qw 	&\qw 	& \qw	&\rstick{0}\qw\\  
&&&&&&&&&&&&&&&&\lstick{\bit{\Z}}&\push{a_{1}^{4}} 		&\ctrld{-1}&\qw &\qw 	&\qw 	& \qw	&\rstick{0}\qw\\
&&&&&&&&&&&&&&&&\lstick{\bit{0}}&\push{y_{0}^{4}} 		&\qw	&\qw 	&\qw	&\qw 	& \qw	&\rstick{0}\qw\\  
&&&&&&&&&&&&&&&&\lstick{\bit{\Z}}&\push{b_{1}^{4}} 		&\qw	&\ctrld{-1}&\qw	&\qw 	& \qw	&\rstick{0}\qw\\
&&&&&&&&&&&&&&&&\lstick{\bit{\Z}}&\push{x_{1}^{4}} 		&\qw	&\qw	&\ctrld{-1}&\qw & \qw	&\rstick{1}\qw\\  
&&&&&&&&&&&&&&&&\lstick{\bit{1}}&\push{b_{0}^{4}} 		&\qw	&\qw 	&\qw 	&\qw 	& \qw	&\rstick{1}\qw\\
&&&&&&&&&&&&&&&&\lstick{\bit{\Z}}&\push{y_{1}^{4}} 		&\qw	&\qw 	&\qw 	& \ctrld{-1}&\qw&\rstick{0}\qw  
\gategroup{1}{3}{16}{7}{1em}{-}
\gategroup{9}{8}{24}{12}{1em}{-}
\gategroup{17}{13}{32}{17}{1em}{-}
\gategroup{25}{18}{40}{22}{1em}{-}
}
$
\end{center}
\caption{The comparator circuit computing the fourth power of $\M$ constructed from the example in \figref{f2}.  Since the output wires of the
first three blocks serve as input wires to the next block, we use
superscripts 1,2,3,4 for these wires on successive blocks, instead
of the letter `o' used in \figref{f2}.}
\label{fig:f3}
\end{figure}

In the example in \figref{f2}, the computation of the fourth power of the function $\M$ can easily be done by using the comparator circuit in \figref{f3}. We show the outputs, produced as a result of the iteration, which
comprise the fixed point $I_{4}=\M^{4}(I_{0})$ of $\M$.  In this case
the fixed point consists of
Boolean values, where $(I_{4}(a^{o}_{0}),I_{4}(x^{o}_{0}))=(1,0)$ and $(I_{4}(b^{o}_{0}),I_{4}(y^{o}_{1}))=(1,0)$.  Thus, woman $x$ is married to man $a$, and woman $y$ is married to man $b$, which is a stable marriage. \\

Henceforth, we will let $I_{\ell}$ denote the output of $\M^{\ell}(I_{0})$.  We want to show that the $c(n)^{\rm th}$ power of $\M:\set{0,1,\ast}^{2n^{2}}\rightarrow \set{0,1,\ast}^{2n^{2}}$ on input $I_{0}$ defined above in fact produces a fixed point of $\M$.

\begin{theorem}\label{theo:fxp}
($\VCC\proves$) The three-valued sequence $I_{c(n)} = \M^{c(n)}(I_{0})$ is a fixed point of $\M$.
\end{theorem}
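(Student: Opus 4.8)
The plan is to follow the two-part strategy indicated just before the statement: first an \emph{extension} (monotonicity) lemma, and then a counting argument that is available inside $\VCC$ thanks to Corollary~\ref{cor:VTC}. Recall that $I'$ extends $I$ when $I'(p)=I(p)$ for every wire $p$ with $I(p)\in\set{0,1}$; thus moving from $I$ to $I'$ can only turn $\ast$'s into Boolean values and never alters a value already in $\set{0,1}$. Extension is reflexive and transitive, and if $I'$ extends $I$ then the number of $\ast$'s in $I'$ is at most the number in $I$, with equality exactly when $I'=I$. Everything reduces to the key lemma, to be proved in $\VCC$: \emph{for every $I$ that extends $I_{0}$, the output $\M(I)$ extends $I$} (identifying output wire $p_{j}^{o}$ with the input wire $p_{j}^{i}$ that receives its value in the next iteration).

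To prove the key lemma I would isolate the one local fact that drives it: the three-valued comparator gate is monotone in the information ordering in which $\ast$ lies below both $0$ and $1$, so that resolving an $\ast$ input to a Boolean can only resolve $\ast$ outputs and never flips a committed Boolean output. This is a finite case check of the tables defining $\wedge$ and $\vee$ on $\set{0,1,\ast}$, expressible as a $\Sigma_{0}^{B}$ fact. Given an $I$ extending $I_{0}$, I would then trace a single application of $\M$ through its two groups of gates---the man--woman pairing gates acting on the input wires, followed by the gates transporting values to the output wires---and argue by $\Sigma_{0}^{B}$-induction along the gate sequence that no committed Boolean value is destroyed. The hypothesis $I\sqsupseteq I_{0}$ enters precisely at the position-$0$ wires: each position-$0$ output wire carries a fixed Boolean value ($m_{0}^{o}=1$, $w_{0}^{o}=0$) and receives no further gate, and this value coincides with the position-$0$ value of $I_{0}$ ($m_{0}^{i}=1$, $w_{0}^{i}=0$), so an $I$ extending $I_{0}$ agrees with it; this lets the induction start and prevents later gates from overwriting an already-decided Boolean. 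I expect this lemma to be the main obstacle, as it is the one place where the geometry of $C_{\calf{I}}$---and the reason the iteration realizes Subramanian's propagation---must be pinned down and formalized.

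Granting the key lemma, the theorem follows by counting, exactly in the spirit of the proof of Lemma~\ref{lem:teminate}. Since each $I_{\ell}$ extends $I_{0}$ (induction on $\ell$, using transitivity), applying the lemma with $I=I_{\ell}$ gives that $I_{\ell+1}=\M(I_{\ell})$ extends $I_{\ell}$; hence $I_{0},I_{1},I_{2},\dots$ is an extension chain. Let $z(I)$ be the number of $\ast$'s in $I$, a quantity that is definable and countable in $\VTC^{0}\subseteq\VCC$ by Corollary~\ref{cor:VTC}. Then $z(I_{0})\ge z(I_{1})\ge\cdots\ge 0$, and by the remark above $z(I_{\ell+1})=z(I_{\ell})$ already forces $I_{\ell+1}=I_{\ell}$, i.e.\ a fixed point that persists thereafter. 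By \eref{c} we have $z(I_{0})=c(n)=2n^{2}-2n$, so among the sequences $I_{0},\dots,I_{c(n)}$ the value of $z$ cannot strictly drop more than $c(n)$ times; therefore some $\ell\le c(n)$ satisfies $I_{\ell+1}=I_{\ell}$, and this equality propagates to yield $\M(I_{c(n)})=I_{c(n)}$. This final counting step---the bound on the number of strict decreases---is precisely where the counting power of $\VCC$ is essential; the remaining inductions on $\ell$ and along the gate sequence are routine $\Sigma_{0}^{B}$ arguments once $z$ is available as a term.
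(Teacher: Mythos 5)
Your second paragraph (the counting argument) is fine and is essentially the paper's own argument, but the key lemma it rests on --- \emph{for every $I$ that extends $I_{0}$, $\M(I)$ extends $I$} --- is false as stated, and this is a genuine gap. Take the two-man/two-woman instance of \figref{f2}, where $\Pair(a_{0},x_{0})$ holds, and let $I$ agree with $I_{0}$ everywhere except that $I(a_{1}^{i})=1$. Then $I$ extends $I_{0}$, but the circuit computes
$\M(I)(a_{1}^{o})=\bigl(I(a_{0}^{i})\wedge I(x_{0}^{i})\bigr)\vee 0=(1\wedge 0)\vee 0=0\neq 1=I(a_{1}^{i})$,
so $\M(I)$ does \emph{not} extend $I$: the committed value on $a_{1}$ is destroyed. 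The structural reason is that for $j\ge 1$ the output $\M(I)(p_{j}^{o})$ depends only on the position-$(j-1)$ wires and never reads $I(p_{j}^{i})$, so nothing ties output position $j$ to input position $j$ for an \emph{arbitrary} extension of $I_{0}$; preservation can only hold when $I$ is itself an iterate, whose position-$j$ values were produced by those very gates one step earlier. Relatedly, the local fact you isolate --- information-order monotonicity of the three-valued gates --- compares the outputs of a gate on \emph{two} comparable input tuples; it cannot yield your single-run claim that "no committed Boolean value is destroyed." (You were arguably misled by the paper's own informal sentence preceding the theorem, which makes the same overstatement; its formal Lemma~\ref{lem:fxp} is, for exactly this reason, stated only for the iterates $I_{q}=\M^{q}(I_{0})$.)

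There are two ways to repair the first half, after which your counting argument goes through verbatim. One is the paper's route: prove by $\Sigma_{0}^{B}$-induction on $q$ that every Boolean value of $I_{q}$ persists in all later iterates $I_{\ell}$, $q\le\ell\le c(n)+1$. The induction step uses the circuit structure: a Boolean value of $I_{q}$ at position $k\ge 1$ was produced by the gate pair acting on wire $p_{k-1}$, hence is forced by the Boolean values among the $I_{q-1}$-inputs of that gate; those persist by the induction hypothesis, and (by your gate-level fact) they force the same output in every later iteration. The other route actually uses your monotonicity observation correctly: prove the two-run statement that $I\sqsubseteq I'$ in the information order implies $\M(I)\sqsubseteq\M(I')$, by induction along the gate sequence, and separately check the base case $I_{0}\sqsubseteq\M(I_{0})$ directly (the only Boolean values of $I_{0}$ sit on position-$0$ wires, and $\M$ outputs $m_{0}^{o}=1$, $w_{0}^{o}=0$ unconditionally). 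Then induction on $\ell$ gives $I_{\ell}\sqsubseteq I_{\ell+1}$ for all $\ell$, which is the extension chain your counting step needs. The second route is arguably cleaner than the paper's, but as written your proposal proves neither statement.
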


To prove this theorem, we need a new definition. The proof of \theoref{fxp} follows from the important observation that  every sequence $I_{q}$ is extended by all sequences $I_{\ell}$, $\ell\ge q$. And thus when going from  $I_{\ell}$ to $I_{\ell+1}$, the only change that can be made is to change some $\ast$-values in  $I_{\ell}$ to Boolean values in $I_{\ell+1}$. But since we started out with at most $c(n)$ $\ast$-values, we will reach a fixed point in at most $c(n)$ steps. Before proving \theoref{fxp}, we need the following lemma, which says that any given sequence $I_{q}$ is extended by all sequences $I_{\ell}$, $\ell\ge q$. The lemma can be formulated as the following $\Sigma_{0}^{B}$ statement. 
\begin{lemma}\label{lem:fxp}
($\VCC\proves$) For every $q\le c(n)+1$, for every person $p$, and for every $k <n$, if  we have $I_{q}(p_{k}^{i})=v\in \set{0,1}$, then $I_{\ell}(p_{k}^{i})=v$ for all $\ell$ satisfying $q \le \ell \le c(n)+1$.
\end{lemma}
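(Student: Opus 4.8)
The plan is to obtain the lemma from the single-step monotonicity property already recorded before \theoref{fxp} — that $\M(I)$ is an extension of $I$ for every $I$ extending $I_0$ — together with the transitivity of the extension relation and one round of $\Sigma_0^B$-induction. First I would fix the object on which the induction runs. The whole iteration $I_0,I_1,\ldots,I_{c(n)+1}$ is computed by a single (composed) comparator circuit, so by the $\CCVA$ axiom \eref{ccva} — equivalently, by the $\Sigma_1^B$-definability in $\VCC$ of the aggregate $F^{\ast}_{\CCVB}$ — it is available as one string: a matrix $W$ whose row $\ell$ is $I_\ell=\M^\ell(I_0)$ and whose consecutive rows are correctly related by $\M$. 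Relative to this fixed $W$, the relation ``$I_\ell$ extends $I_q$'', namely $\forall p\,\forall k<n\,\bigl(I_q(p^i_k)\in\{0,1\}\rightarrow I_\ell(p^i_k)=I_q(p^i_k)\bigr)$, is a $\Sigma_0^B$ formula in $q$ and $\ell$, so induction on $\ell$ is legitimate inside $\VCC$.

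The argument then splits into two short inductions. I would first establish the invariant that \emph{every} iterate extends $I_0$: this is immediate from the quoted single-step fact and transitivity, since $\M(I_\ell)$ extends $I_\ell$ and $I_\ell$ extends $I_0$ together force $I_{\ell+1}=\M(I_\ell)$ to extend $I_0$, the base case being that $I_0$ trivially extends itself. With this invariant available I would then prove the lemma by induction on $\ell$ with base $\ell=q$. The base case is immediate; for the step, $I_\ell$ extends $I_0$ (just shown), so the single-step fact applies and gives that $I_{\ell+1}=\M(I_\ell)$ extends $I_\ell$, and this together with the induction hypothesis ``$I_\ell$ extends $I_q$'' and transitivity yields ``$I_{\ell+1}$ extends $I_q$''. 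Unwinding the definition of extension is exactly the assertion that $I_\ell(p^i_k)=v$ for all $\ell$ with $q\le\ell\le c(n)+1$.

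The genuinely substantive ingredient is not this bookkeeping but the single-step fact it rests on, which I am allowed to assume here. Were I to prove that fact from scratch, the work would be a layer-by-layer analysis of the circuit defining $\M$, showing that a Boolean value sitting on an input wire $p^i_k$ reappears unchanged on the corresponding output wire $p^o_k$; this is where the specific gate layout enters (the pair gates $\seq{m^i_j,w^i_k}$, the fixed $0/1$ initialization of the output wires, and the forwarding gates $\seq{p^i_j,p^{o}_{j+1}}$), and it is essential that $I$ extend $I_0$, since that is what keeps the Boolean pattern consistent under one application of $\M$. Granting that fact, the only point requiring care in the formalization is that both inductions run over $\Sigma_0^B$ formulas once $W$ is fixed, so that $\Sigma_0^B$-IND — available in $\VCC$ through $\overline{\VCC}$ — applies.
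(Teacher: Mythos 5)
Your bookkeeping (fixing the whole iteration as one matrix $W$ so that ``$I_\ell$ extends $I_q$'' is $\Sigma_0^B$, then chaining by transitivity) is fine, but the ingredient you defer to --- the ``single-step fact'' --- is both unavailable and, as you state it, false. It is unavailable because the sentence preceding \theoref{fxp} (``We can show that $\M(I)$ is an extension of $I$ for every $I$ which extends $I_0$\dots'') is an unproved remark in the informal overview; \lemref{fxp} \emph{is} the paper's rigorous counterpart of that remark, so citing it here is circular. It is false because ``extends $I_0$'' is far too weak a hypothesis: take the two-men, two-women instance of \figref{f2} and let $I$ agree with $I_0$ everywhere except $I(a_1^i)=1$. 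Then $I$ extends $I_0$, yet the circuit gives $\M(I)(a_1^o)=I(a_0^i)\wedge I(x_0^i)=1\wedge 0=0\neq 1=I(a_1^i)$, since the output wire $a_1^o$ receives, through the forwarding gate, the $\wedge$-output of the pair gate $\seq{a_0^i,x_0^i}$ no matter what value sits on the input wire $a_1^i$. A Boolean value planted arbitrarily on an input wire is simply overwritten, so the layer-by-layer proof you sketch (``a Boolean value on $p_k^i$ reappears unchanged on $p_k^o$, essentially because $I$ extends $I_0$'') cannot be carried out, and the step of your second induction has nothing to rest on.

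What makes the lemma true is a property special to the iterates $I_q=\M^q(I_0)$, not to extensions of $I_0$: every Boolean value occurring in $I_q$ was \emph{produced} by a three-valued gate from values of $I_{q-1}$, and a Boolean output of such a gate is already determined by its non-$\ast$ inputs alone (if $v=1$, both inputs of the $\wedge$ are $1$; if $v=0$, some input is $0$). The paper therefore runs $\Sigma_0^B$ induction on $q$ with the hypothesis that \emph{all} non-$\ast$ values of $I_{q-1}$ persist in every later iterate $I_\ell$, $\ell\ge q-1$; then $v=I_q(m_k^i)$, being the forwarded $\wedge$-output of the gate $\seq{m_{k-1}^i,w_r^i}$ applied to $I_{q-1}$, is pinned down by values that persist, hence itself persists. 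This provenance-tracking induction hypothesis is exactly what your proposal is missing, and it cannot be factored through any one-step statement whose only hypothesis is ``extends $I_0$''. Your architecture can be repaired: prove ``$I_{\ell+1}$ extends $I_\ell$'' by induction on $\ell$ and then chain by transitivity; but the induction step of that statement again needs the provenance argument with the hypothesis at stage $\ell-1$ --- it is the same work the paper does, merely reorganized, and not the black-box fact you assumed.
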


\begin{proof}[Proof of \lemref{fxp}] We prove by $\Sigma_{0}^{B}$ induction on $q\le c(n)+1$. The base case ($q=0$) is easy. With respect to $I_{0}$ the only wires having non-$\ast$ values are  $p_{0}^{i}$ for every person $p$. But the output wires $p_0^o$ corresponding to these input values
are fed these same Boolean values as constant inputs, and these wires are
not involved with any comparator gates.  Thus, these values will be preserved for every $I_{\ell}$ with $q \le \ell \le c(n)+1$.

For the induction step, we are given $q$ such that $0<q\le c(n)+1$, and assume that $I_{q}(p_{k}^{i})=v\in \set{0,1}$ for some person $p$ and $k<n$, we want to show that  $I_{\ell}(p_{k}^{i})=v$ for all $\ell$ satisfying $q \le \ell \le c(n)+1$. We will only argue for the case when $p$ is a man $m$ since the case when $p$ is a woman can be argued  similarly. We consider two cases.  We may have $k=0$, in which case, as argued in the base case, we have $I_{\ell}(m_{0}^{i})=1$ for all $\ell$ satisfying $q \le \ell \le c(n)+1$. Otherwise, we have $k\ge 1$, then since $I_{q}(m_{k}^{o})=v\in \set{0,1}$, from how $\M$ was constructed,  the output wire $p_{k}^{o}$ must have got its non-$\ast$ value $v$ from the wire $m_{k-1}^{i}$, which in turn must carry value $v$ before transferring it to wire $m_{k}^{o}$. But then we observe that wire $m_{k-1}^{i}$ is connected to some wire $w_{r}^{i}$ by a comparator gate (i.e. $\Pair(m_{k-1},w_{r})$ holds)  before being connected by a comparator gate to $m_{k}^{o}$. Thus, from the definition of three-valued comparator gate, the value $v$ produced on $m_{k-1}^{i}$ by the gate $\seq{m_{k-1}^{i},w_{r}^{i}}$ only depends on the non-$\ast$ value(s) of either $I_{q-1}(m_{k-1}^{i})$ or $I_{q-1}(w_{r}^{i})$ or both. In any of these cases, by the induction hypothesis, these non-$\ast$ values of $I_{q-1}$ will be preserved in $I_{\ell}$ for all $\ell$ with $q-1 \le \ell \le c(n)+1$. Hence, we will always get $I_{\ell}(m_{k}^{o})=v$ for all $\ell$ satisfying $q \le \ell \le c(n)+1$.
\end{proof}

\begin{proof}[Proof of \theoref{fxp}] Suppose for a contradiction that  for every $\ell \le c(n)$, $I_{\ell}$ is not a fixed point. In other words, $I_{\ell+1}= \M(I_{\ell})\not=I_{\ell}$ for all $\ell \le c(n)$. Is is important to note that, by \lemref{fxp},  when going from $I_{\ell}$ to  $I_{\ell+1}$, we know that $I_{\ell+1}$ extends $I_{\ell}$. Thus, the only change that $\M$ can make at each stage is to switch the $\ast$-values at some positions in $I_{\ell}$ to Boolean values in $I_{\ell+1}$. If $I_{c(n)}$ is not a fixed point, then by utilizing  the counting in $\VTC^{0}$
(Corollary \ref{cor:VTC}), we can show that the number of $\ast$-values that are switched to Boolean values when going from $I_{0}$ to $I_{c(n)+1}$ is at least $c(n)+1$. This is a contradiction, since we started out with only $c(n)$ $\ast$-values in $I_{0}$, and no additional $\ast$-value was supplied during the iterations of $\M$. The final argument, i.e., the number of $\ast$-values never increases, can be proved more formally by $\Sigma_{0}^{B}$-induction on the layers of the comparator circuit computing $\M^{c(n)}(I_{0})$.
\end{proof}

Although \theoref{fxp} gives us a fixed point of $\M$,  this fixed point may still be three-valued and thus does not give us all the information needed to extract a stable marriage. However, every three-valued fixed point can easily  be extended to a Boolean fixed point as follows. Given a three-valued sequence $I$, we let $I[\ast \rightarrow v]$ denote the sequence we get by substituting $v$ for all the $\ast$-values in $I$.

\begin{proposition}\label{prop:fxp0}
($\VCC\proves$) If $I$ is a three-valued fixed point of $\M$, then $I[\ast \rightarrow 0]$  and $I[\ast \rightarrow 1]$ are Boolean fixed points of $\M$.  
\end{proposition}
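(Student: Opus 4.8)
The plan is to isolate a single algebraic fact and then push it through the circuit by induction. For $v\in\set{0,1}$ let $\phi_v$ denote the map on $\set{0,1,\ast}$ sending $\ast$ to $v$ and fixing $0$ and $1$, so that applying $\phi_v$ pointwise to a sequence $I$ yields $I[\ast\rightarrow v]$. The key claim is that $\phi_v$ is a homomorphism for the three-valued comparator operations: for all $p,q\in\set{0,1,\ast}$,
\[
\phi_v(p\wedge q)=\phi_v(p)\wedge\phi_v(q)\quad\text{and}\quad \phi_v(p\vee q)=\phi_v(p)\vee\phi_v(q),
\]
where the right-hand sides are the ordinary Boolean operations (the restriction of the three-valued operations to $\set{0,1}$). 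This is verified by the finite case analysis over the nine input pairs; for instance $p\wedge q=\ast$ holds exactly in the cases $(1,\ast),(\ast,1),(\ast,\ast)$, and in each of these $\phi_v(p)\wedge\phi_v(q)=v=\phi_v(\ast)$, while the cases producing a Boolean output are immediate. Since the gate operations are defined by the $\Sigma_{0}^{B}$ formulas displayed in the definition of $\TCV$, this homomorphism property is a $\Sigma_{0}^{B}$ fact provable already in $\V^{0}$.

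Next I would lift this from a single gate to the whole circuit $C_\calf{I}$. Recall that $\M(I)$ is the vector of output-wire values obtained by running $C_\calf{I}$ with the input wires set to $I$ and the output wires set to their fixed Boolean constants $c$ (namely $m_{0}^{o}\mapsto 1$ and all others $\mapsto 0$). Let $Z$ be the layer matrix of this computation on the initial vector $(I,c)$ and let $Z'$ be the layer matrix on the initial vector $(I[\ast\rightarrow v],c)$; both computations are available within $\VCC$, using \theoref{tcv} to handle the three-valued gates. I would then prove by $\Sigma_{0}^{B}$-induction on the layer index $d$ that $Z'(d,w)=\phi_v(Z(d,w))$ for every wire $w$. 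The base case $d=0$ holds because $\phi_v$ fixes the Boolean constants $c$, so the initial vector of $Z'$ is exactly $\phi_v$ applied to the initial vector of $Z$. The induction step is immediate from the single-gate homomorphism: at each comparator gate $\seq{x,y}$ the two outputs of the $Z'$-computation are the Boolean operations applied to $\phi_v(Z(d,x))$ and $\phi_v(Z(d,y))$, which by the homomorphism equal $\phi_v$ of the corresponding $Z$-outputs, while wires untouched by the gate are preserved trivially. Reading off the final layer yields the universal identity $\M(I[\ast\rightarrow v])=(\M(I))[\ast\rightarrow v]$, valid for every $I$.

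The conclusion is then immediate. If $I$ is a three-valued fixed point, i.e. $\M(I)=I$, then $\M(I[\ast\rightarrow v])=(\M(I))[\ast\rightarrow v]=I[\ast\rightarrow v]$, so $I[\ast\rightarrow v]$ is a fixed point; it is Boolean by construction. Taking $v=0$ and $v=1$ gives both claims. I do not expect a genuine obstacle: the content lies entirely in the commutation property, which is a short case check, and the layer induction is exactly the kind of $\Sigma_{0}^{B}$-induction on circuit layers already used for the correctness arguments elsewhere in the paper (as in \secref{c2vl} and \theoref{NL2CCV}). The only point needing a little care is the bookkeeping that the fixed inputs $c$ on the output wires are Boolean, hence invariant under $\phi_v$, which is precisely what allows the substitution to commute with running the circuit.
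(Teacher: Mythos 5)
Your proposal is correct and is essentially the paper's own argument: both establish, by induction over the circuit, the invariant that each wire's value under input $I[\ast\rightarrow v]$ equals $\phi_v$ of its value under $I$, and then invoke the fixed-point property of $I$ to conclude. The only cosmetic difference is that you verify the single-gate commutation $\phi_v(p\wedge q)=\phi_v(p)\wedge\phi_v(q)$ (and dually for $\vee$) by explicit case analysis, whereas the paper obtains the same invariant by appealing to the monotonicity of the gates.
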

\begin{proof}
Suppose that $I$ is a three-valued fixed point of $\M$.  Then when
the circuit $C_\calf{I}$ is presented with input $I$ the output is
also $I$.   Without studying the detailed structure of the circuit but
just observing that the gates compute monotone functions, by induction
on the depth of a gate $g$ in the circuit we can compare the values $v$
and $v'$ of $g$ under the two inputs $I$ and $I[\ast \rightarrow 0]$ as
follows: $v' = v$ if $v \in \{0,1\}$ and $v' = 0$ if $v = \ast$.
In particular this is true of the output gates of the circuit.
Since $I$ is a fixed point, the output of $C_\calf{I}$ under input
$I$ is $I$, and so the output under input $I[\ast \rightarrow 0]$
is $I[\ast \rightarrow 0]$. Thus $I[\ast \rightarrow 0]$ is a fixed
point of the circuit.

A similar argument works for $I[\ast \rightarrow 1]$
\end{proof}

To show that the above method for solving $\SMP$ using three-valued comparator circuits is correct, it remains to justify Subramanian's method for extracting a stable marriage from a Boolean fixed point.  Define $G$ to be an $\AC^{0}$-function, i.e., $\Sigma_{0}^{B}$-definable,  which takes as input a Boolean fixed point $B$ of $\M$, and returns a marriage $M$ such that the pair of man $m$ and woman $w$ is in $M$ iff when $j,k$ are chosen such that
$\Pair(m_{j},w_{k})$ holds, then $B(m_{j}^{o})=1$ and $B(w_{k}^{o})=0$.  It is worth noting that since $B=\M(B)$, we have $B(p_{k}^{i})=B(p_{k}^{o})$ for every person $p$ and every $k<n$; however, the superscripts $o$ and $i$ are useful for distinguishing between input and output values of the comparator circuit $C_\calf{I}$ computing $\M$.  From the construction of $G$ and the fixed-point property of $\M$, we have the following theorem.

\begin{theorem}\label{theo:sm1}
($\VCC\proves$) If $B$ is a Boolean fixed point of $\M$ then $M=G(B)$  is a stable marriage.
\end{theorem}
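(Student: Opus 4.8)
The plan is to first distill from the fixed-point equation $B=\M(B)$ a complete set of local recurrences describing $B$, and then simply read the marriage $M$ off them. Unwinding the two groups of gates of $C_\calf{I}$ — first the gates $m_j^i\to w_k^i$ for each pair with $\Pair(m_j,w_k)$, then the forwarding gates $p_j^i\to p_{j+1}^o$ — and using that the output wires $m_0^o$, $w_0^o$ retain their fixed inputs ($m_0^o=1$, all other output wires $0$) while every wire $p_{j+1}^o$ merely copies the post-first-group value of $p_j^i$ (its initial value being $0$), the identity $B(p^i)=B(p^o)$ forces, for $\Pair(m_j,w_k)$,
\[ B(m_0^i)=1,\quad B(w_0^i)=0,\quad B(m_{j+1}^i)=B(m_j^i)\wedge B(w_k^i),\quad B(w_{k+1}^i)=B(m_j^i)\vee B(w_k^i). \]
This is $\Sigma_0^B$ reasoning about the structure of $C_\calf{I}$ and is immediate once the gate order is made explicit.

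From these recurrences, a short $\Sigma_0^B$-induction shows that each man's sequence $B(m_0^i),\dots,B(m_{n-1}^i)$ is non-increasing and each woman's sequence $B(w_0^i),\dots,B(w_{n-1}^i)$ is non-decreasing. Recalling that $M$ marries $m$ and $w$ (with $\Pair(m_j,w_k)$) exactly when $B(m_j^i)=1$ and $B(w_k^i)=0$, I would next check that $M$ gives each man at most one wife and each woman at most one husband: if $B(m_j^i)=1$ and $B(w_k^i)=0$ then the man-recurrence forces $B(m_{j+1}^i)=0$, so by monotonicity the marrying position is the unique $1\to 0$ transition in $m$'s sequence; the dual argument with the woman-recurrence (where $B(w_k^i)=0$, $B(m_j^i)=1$ give $B(w_{k+1}^i)=1$) handles women.

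Stability is then the clean part. A blocking pair would be a non-married $(m,w)$, say $\Pair(m_j,w_k)$, with $m$ preferring $w$ to his wife and $w$ preferring $m$ to her husband. His wife sits at a later position $j'>j$, so $B(m_{j'}^i)=1$ and hence $B(m_j^i)=1$ by monotonicity; dually her husband sits at a later position $k'>k$, so $B(w_{k'}^i)=0$ and hence $B(w_k^i)=0$. But $B(m_j^i)=1$ together with $B(w_k^i)=0$ means $M$ \emph{does} marry $m$ and $w$, a contradiction; so no blocking pair exists.

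The one genuinely nontrivial step, and the only place where the counting power of $\VCC$ (Corollary~\ref{cor:VTC}) is needed, is proving $M$ is a \emph{perfect} matching rather than merely a partial one. I would argue that if some man $m$ were unmatched then, never being accepted, $B(m_j^i)=1$ for all $j$; for each woman $w$ the value of her wire at $m$'s position must then be $1$ (otherwise $m$ and $w$ would be married, contradicting that $m$ is unmatched), and monotonicity together with the woman-recurrence forces every woman to be matched to a partner she prefers to $m$. Sending each woman to her unique husband is injective, so this exhibits $n$ distinct matched men alongside the unmatched $m$ — an injection of $n+1$ men into $n$, contradicting $\PHP(n,\cdot)$, which is available in $\VCC$ via Corollary~\ref{cor:VTC}. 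Hence every man, and symmetrically every woman, is matched. I expect this counting argument to be the main obstacle: it is the sole part that escapes pure $\Sigma_0^B$ reasoning, and formalizing ``every woman is matched'' requires turning the monotone woman-sequences into an explicit husband function before $\PHP$ can be applied.
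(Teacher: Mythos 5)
Your proposal is correct and takes essentially the same approach as the paper: your fixed-point recurrences and monotonicity statements are precisely the content of the paper's Lemmas \ref{lem:sm1} and \ref{lem:sm2}, your at-most-one-spouse plus pigeonhole argument for perfection is Lemma \ref{lem:sm3}, and your blocking-pair contradiction is the paper's own stability argument. The only difference is cosmetic: you package the circuit analysis as explicit $\wedge$/$\vee$ recurrences under the fixed-point identity $B(p^i)=B(p^o)$, rather than the paper's conditional formulations.
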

To prove this theorem, we first need to establish the next two lemmas that capture the basic properties of the  comparator circuit computing $\M$.

\begin{lemma}\label{lem:sm1}
($\VCC\proves$)
Let $B$ be any Boolean input to the circuit $C_\calf{I}$.
\begin{enumerate}
\item  For every man $m$ and every $k<n-1$, if $B(m^{i}_{k})=1$ then
$B(m^{o}_{k+1})=0$ iff $B(w^{i}_{j})=0$, where $w^{i}_{j}$
is the wire that satisfies $\Pair(m_{k},w_{j})$.
\item  For every woman $w$ and every $j<n-1$, if $B(w^{i}_{j})=0$ then
$B(w^{o}_{j+1})=1$ iff $B(m^{i}_{k})=1$, where $m^{i}_{k}$
is the wire that satisfies $\Pair(m_{k},w_{j})$.
\end{enumerate}
\end{lemma}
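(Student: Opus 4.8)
The plan is to prove both parts by a direct trace of the relevant Boolean value through the only two comparator gates of $C_\calf{I}$ that can affect the output wire in question. Recall the gate structure: $C_\calf{I}$ first applies, for each pair with $\Pair(m_k,w_j)$, one comparator gate from $m^i_k$ to $w^i_j$, and only afterwards applies the forwarding gates, one from each wire $p^i_j$ to $p^o_{j+1}$ for $j<n-1$. By the defining formula $\ccv$ of \eref{ccv}, the second (target) wire of a gate receives the disjunction and the first (source) wire the conjunction of the two incoming values. Two features of this gate list drive the proof: each input wire lies in exactly one pair gate, and no pair gate touches an output wire, so the value arriving at an output wire is governed entirely by its single forwarding gate.

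I would first isolate the one identity from which both parts follow. Fix a pair with $\Pair(m_k,w_j)$. Immediately after its pair gate, wire $m^i_k$ carries $B(m^i_k)\wedge B(w^i_j)$ and wire $w^i_j$ carries $B(m^i_k)\vee B(w^i_j)$, and no other first-group gate disturbs either wire. Provided $k<n-1$ (resp. $j<n-1$) so that the forwarding gate exists, the output wire $m^o_{k+1}$ (resp. $w^o_{j+1}$) is fed the constant $0$, is left untouched until its forwarding gate, and there receives the disjunction of $0$ with the current value of $m^i_k$ (resp. $w^i_j$). Reading off the produced output values gives
\[
  B(m^o_{k+1}) = B(m^i_k)\wedge B(w^i_j)\quad (k<n-1), \qquad B(w^o_{j+1}) = B(m^i_k)\vee B(w^i_j)\quad (j<n-1),
\]
where $(m_k,w_j)$ is the unique pair with $\Pair(m_k,w_j)$.

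Both parts are then immediate specializations. For part~1, if $B(m^i_k)=1$ then $B(m^o_{k+1})=1\wedge B(w^i_j)=B(w^i_j)$, so $B(m^o_{k+1})=0$ iff $B(w^i_j)=0$. For part~2, if $B(w^i_j)=0$ then $B(w^o_{j+1})=B(m^i_k)\vee 0=B(m^i_k)$, so $B(w^o_{j+1})=1$ iff $B(m^i_k)=1$. Since $B$ is Boolean these collapse to ordinary propositional identities.

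The hard part is purely bookkeeping rather than mathematical depth: one must fix the orientation of every gate (hence which wire gets the conjunction and which the disjunction), check that each wire's forwarding gate really follows its pair gate with no third gate in between, and confirm that the output wire involved carries the constant input $0$, which holds precisely because $1\le k+1<n$ (and likewise for $j$). Once the gate list is read off from the construction these are all immediate, and since the entire argument only inspects a few entries of the layer matrix $Z$ witnessing $\ccv$ in \eref{ccv}, it is a direct $\Sigma_0^B$ calculation and hence available in $\VCC$.
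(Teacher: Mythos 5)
Your proof is correct and takes essentially the same route as the paper's: both trace the Boolean values through the unique pair gate $\seq{m^i_k,w^i_j}$ and then the forwarding gate onto the constant-$0$ output wire. The only difference is presentational---you first derive the closed-form identities $B(m^o_{k+1})=B(m^i_k)\wedge B(w^i_j)$ and $B(w^o_{j+1})=B(m^i_k)\vee B(w^i_j)$ and read off both parts propositionally, whereas the paper proves Part~1 by arguing the two directions of the iff separately (the forward direction via an ``only way the value can change'' argument) and dismisses Part~2 as dual.
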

\begin{proof} We will only prove Part~1 since Part~2 can be shown using a dual argument.  For the ($\Leftarrow$) direction, we recall that since $m_{k}^{i}$ was paired with  $w_{j}^{i}$ when constructing $C_\calf{I}$, we have a comparator gate going from  $m_{k}^{i}$ to $w_{j}^{i}$. Thus, since $B(m^{i}_{k})=1$  and $B(w^{i}_{j})=0$, after the comparator gate $\seq{m_{k}^{i},w_{j}^{i}}$, the wire $m^{i}_{k}$  now carries value zero. But since the output wire  $B(m^{o}_{k+1})$ will carry whatever value forwarded from the wire $m^{i}_{k}$, in this case, we have $B(m^{o}_{k+1})=0$.

For the ($\Rightarrow$) direction, from the construction of $C_\calf{I}$, the only way that we can change from $B(m^{i}_{k})=1$ to $B(m^{o}_{k+1})=0$ is by having a comparator gate  $\seq{m_{k}^{i},w_{j}^{i}}$ connecting $m_{k}^{i}$ with some wire $w_{j}^{i}$, and $B(w^{i}_{j})=0$.
\end{proof}

\begin{lemma}\label{lem:sm2}
($\VCC\proves$)
Let $B$ be any Boolean fixed point for $\M$.
\begin{enumerate}
\item  For every man $m$ and every $k<n$, if $B(m^{i}_{k})=1$ and $B(m^{i}_{k+1})=0$, then 
\begin{align*}
B(m^{i}_{0})=\ldots=B(m^{i}_{k})=1, && B(m^{i}_{k+1})=\ldots = B(m^{i}_{n})=0.
\end{align*}
\item  For every woman $w$ and every $j<n$, if $B(w^{i}_{j})=0$ and $B(m^{o}_{j+1})=1$, then  
\begin{align*}
B(w^{i}_{0})=\ldots=B(w^{i}_{j})=0, &&B(w^{i}_{j+1})=\ldots = B(w^{i}_{n})=1.
\end{align*}
\end{enumerate}
\end{lemma}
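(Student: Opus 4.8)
The plan is to notice that both parts of \lemref{sm2} assert that the Boolean values on a single person's input wires form a \emph{threshold} (monotone) $0$--$1$ sequence: a man's wires read $1\cdots1\,0\cdots0$ and a woman's read $0\cdots0\,1\cdots1$. So I would first establish the one-step monotonicity facts
\[
  B(m^{i}_{k+1})\le B(m^{i}_{k})\qquad\text{and}\qquad B(w^{i}_{k+1})\ge B(w^{i}_{k}),
\]
and then read off the full block structure by induction on the position.

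For the one-step facts I would trace the two groups of gates of $C_\calf{I}$ together with the fixed-point identity $B(p^{i}_{k})=B(p^{o}_{k})$, which holds because $B=\M(B)$. Fix a man $m$ and let $w^{i}_{r}$ be the wire paired with $m^{i}_{k}$, i.e.\ $\Pair(m_{k},w_{r})$ holds. After the first group of gates the wire $m^{i}_{k}$ carries the minimum output $B(m^{i}_{k})\wedge B(w^{i}_{r})$ of the gate $\seq{m^{i}_{k},w^{i}_{r}}$; the second group contains the single gate $\seq{m^{i}_{k},m^{o}_{k+1}}$ feeding $m^{o}_{k+1}$, whose input is the fixed value $0$ and which is the target of no other gate, so its maximum output simply copies the incoming value. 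Hence $B(m^{o}_{k+1})=B(m^{i}_{k})\wedge B(w^{i}_{r})\le B(m^{i}_{k})$, and the fixed-point identity gives $B(m^{i}_{k+1})=B(m^{o}_{k+1})\le B(m^{i}_{k})$. This is exactly \lemref{sm1}(1) when $B(m^{i}_{k})=1$, while for $B(m^{i}_{k})=0$ the minimum is $0$ and the conclusion is immediate. Dually, a woman's wire $w^{i}_{k}$ carries the maximum $B(m^{i}_{r})\vee B(w^{i}_{k})\ge B(w^{i}_{k})$ after the first group, which is forwarded to $w^{o}_{k+1}$, yielding $B(w^{i}_{k+1})=B(w^{o}_{k+1})\ge B(w^{i}_{k})$.

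With monotonicity in hand, Part~1 follows at once. Given $B(m^{i}_{k})=1$ and $B(m^{i}_{k+1})=0$: the non-increasing property forces $B(m^{i}_{j})=1$ for every earlier position $j\le k$ (since a $0$ at any position $\le k$ propagates forward and would force $B(m^{i}_{k})=0$, a $\Sigma_{0}^{B}\textit{-MIN}$/downward-induction argument), and $B(m^{i}_{j})=0$ for every later position $j\ge k+1$ (a forward $\Sigma_{0}^{B}$-induction, as a $0$ propagates downward through the list). Part~2 is the dual argument using the non-decreasing property of a woman's wires; note that by the fixed-point identity its hypothesis on the output wire coincides with $B(w^{i}_{j+1})=1$ on her input wire. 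All the quantities involved are $\Sigma_{0}^{B}$-definable and the only inductions used are the $\Sigma_{0}^{B}$ schemes available in $\VCC$, so the whole argument formalizes there.

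The step I expect to require the most care is the first one: verifying that the second group of gates really copies the post-first-group value of $m^{i}_{k}$ onto $m^{o}_{k+1}$ (and symmetrically for women), i.e.\ that $m^{o}_{k+1}$ still holds its fixed $0$ when its gate fires so that the maximum output equals the incoming value. This is where the specific wiring of $C_\calf{I}$ and the fixed-point hypothesis $B=\M(B)$ are both essential, and it is in effect a reprise of the analysis already done for \lemref{sm1}; everything after that is routine monotone-sequence bookkeeping.
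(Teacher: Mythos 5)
Your proposal is correct and takes essentially the same route as the paper's proof: both rest on the same key fact---derived from the pair-gate-then-forwarding-gate wiring of $C_\calf{I}$ together with the fixed-point identity $B(p^{i}_{k})=B(p^{o}_{k})$---that a $0$ on a man's chain of wires propagates forward (equivalently, your one-step monotonicity $B(m^{i}_{k+1})\le B(m^{i}_{k})$, with the dual statement for women), followed by $\Sigma_{0}^{B}$ induction. If anything, your organization is slightly tidier: the paper selects the least transition index $k_{0}$ via $\Sigma_{0}^{B}$\textit{-MIN} and explicitly shows only that zeros propagate after $k_{0}$, leaving implicit why all positions up to $k_{0}$ carry $1$ (which also needs the propagation fact, applied contrapositively), whereas you derive both halves of the block structure directly from the monotonicity inequality.
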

\begin{proof} We will only prove Part~1 since Part~2 can be proved by a dual argument. Assume that $m$ is a man and $k<n$ is such that $B(m^{i}_{k})=1$ and $B(m^{i}_{k+1})=0$. We will use $\Sigma_{0}^{B}\textit{-MIN}$ to choose the least $k_{0}\ge 0$ satisfying $B(m^{i}_{k_{0}})=1$ and $B(m^{i}_{k_{0}+1})=0$. We can then prove by $\Sigma_{0}^{B}$ induction on $\ell$,  $k_{0}+1\le \ell <n$, that $B(m^{i}_{\ell})=0$. The base case when $\ell = k_{0}+1$ trivially holds. For the induction step, by the construction of $C_\calf{I}$, we observe that when $B(m^{i}_{\ell-1})=0$, then the wire $m^{i}_{\ell-1}$ will always carry value zero. But since $m^{o}_{\ell}$ will receive whatever value forwarded to it from $m^{i}_{\ell-1}$, we get $B(m^{i}_{\ell})=B(m^{o}_{\ell})=0$. Thus, we have just shown that 
\begin{align*}
B(m^{i}_{0})=\ldots=B(m^{i}_{k_{0}})=1, &&B(m^{i}_{k_{0}+1})=\ldots = B(m^{i}_{n-1})=0.
\end{align*}
But this implies that $k_{0}$ is the only subscript at which the elements of the sequence 
\[B(m_{0}^{i}),B(m_{2}^{i}),\ldots,B(m_{n-1}^{i})\]
change their values from one to zero. Thus, we get $k=k_{0}$, and we are done.
\end{proof}

From the above two lemmas, we can show that using $G$ we can extract
from every Boolean fixed point of $\M$ which extends $I_0$ a perfect
matching.
\begin{lemma}\label{lem:sm3}
($\VCC\proves$) If $B$ is a Boolean fixed point of $\M$
then  $M=G(B)$ is a perfect matching between the men and women of $\calf{I}$.
\end{lemma}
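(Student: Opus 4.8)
The plan is to show that $M=G(B)$ is a matching in which every man and every woman occurs, i.e.\ a bijection. First I would record the consequences of $B$ being a \emph{Boolean} fixed point that drive everything. Since the output wires $m^o_0$ and $w^o_0$ carry the constant inputs $1$ and $0$ and have no comparator gate leading into them, $B(m^o_0)=1$ and $B(w^o_0)=0$; together with $B(p^i_k)=B(p^o_k)$ this gives $B(m^i_0)=1$ for every man and $B(w^i_0)=0$ for every woman. By \lemref{sm2} the sequence $B(m^i_0),\ldots,B(m^i_{n-1})$ then has the form $1\cdots10\cdots0$ and the sequence $B(w^i_0),\ldots,B(w^i_{n-1})$ has the form $0\cdots01\cdots1$. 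Let $j_m$ be the largest index with $B(m^i_{j_m})=1$ and $k_w$ the largest with $B(w^i_{k_w})=0$; both exist by the boundary values.

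Next I would show $M$ is a matching. Fix a man $m$. By definition of $G$, $m$ is married to the woman $w$ at position $j$ of his list iff $B(m^o_j)=1$ and $B(w^o_k)=0$, where $\Pair(m_j,w_k)$. Since $B(m^o_j)=B(m^i_j)$ this forces $j\le j_m$. For $j<j_m$ the staircase gives $B(m^i_j)=B(m^i_{j+1})=1$, so \lemref{sm1}(1) yields $B(w^i_k)=1$, whence $B(w^o_k)=1$ and the marriage condition fails; thus the only candidate is $j=j_m$, so $m$ is married to at most one woman. The dual argument using \lemref{sm1}(2) shows each woman is married to at most one man, so $M$ is a partial matching and the partner map is an injection from the matched women into the matched men.

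It remains to prove that every man is matched. I would first characterise unmatched people as \emph{free}: if $m$ is unmatched then the woman at position $j_m$ must have $B(w^i)=1$, which by \lemref{sm1}(1) is impossible when $j_m<n-1$; hence $j_m=n-1$ and $B(m^i_j)=1$ for all $j<n$ (the man ``proposes'' down his whole list), and dually an unmatched woman $w$ has $B(w^i_k)=0$ for all $k<n$. The key step is that an unmatched man $m$ and an unmatched woman $w$ cannot coexist: taking the pair with $\Pair(m_j,w_k)$, freeness gives $B(m^i_j)=1$ and $B(w^i_k)=0$; if $k<n-1$ then \lemref{sm1}(2) forces $B(w^i_{k+1})=1$, contradicting $B(w^i_{k+1})=0$, while if $k=n-1$ the marriage condition $B(m^o_j)=1\wedge B(w^o_k)=0$ holds and $m,w$ are in fact married, again a contradiction. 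Finally, suppose some man $m_0$ is unmatched. Then the matched men form a set of size at most $n-1$, into which the partner map injects the matched women; if all $n$ women were matched this would violate $\PHP(n-1,\cdot)$, provable in $\VTC^0\subseteq\VCC$ by \corref{VTC}. Hence some woman is unmatched, contradicting the previous step, so every man and symmetrically every woman is matched and $M$ is a perfect matching.

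The main obstacle I anticipate is the bookkeeping at the boundary position $k=n-1$ (a person's last choice), where \lemref{sm1} does not apply and one must instead read off the marriage condition directly, and the need to phrase the cardinality comparison as a genuine $\PHP$ instance formalizable in $\VCC$; the remaining reasoning is purely $\Sigma_0^B$ and is driven mechanically by the staircase structure of \lemref{sm2} together with the local behaviour captured by \lemref{sm1}.
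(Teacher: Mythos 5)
Your proof is correct and takes essentially the same approach as the paper's: both derive ``married at most once'' from the staircase structure given by \lemref{sm2} combined with \lemref{sm1}, and both obtain perfectness from the pigeonhole principle $\PHP$ (available since $\VTC^{0}\subseteq\VCC$ by \corref{VTC}). The only differences are organizational --- you invoke $\PHP$ in the contrapositive (an unmatched man forces an unmatched woman, whose coexistence you then refute), whereas the paper argues directly that a single man forces every woman to be married; your explicit handling of the boundary position $k=n-1$, where \lemref{sm1} does not apply and one must read the marriage condition off the definition of $G$ directly, is a point the paper's own proof glosses over.
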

\begin{proof} We will only prove that every man is married to a unique woman in $M$ since the claim that every woman is married to a unique man can be shown similarly. Given a man $m$ we want to show that he is married to a unique woman.  Since $B$ is a fixed point we know $B(m_0^i) = B(m_0^o)=1$, and
by Lemma \ref{lem:sm2} the elements of the sequence 
\[B(m_{0}^{i}),B(m_{1}^{i}),\ldots,B(m_{n-1}^{i})\]
can only change their values from one to zero at most once.   Thus
by Lemma \ref{lem:sm1} and the definition of $M$, $m$ can marry at most once.  
It remains for us to show that $m$ indeed gets married.  Suppose to
the contrary that $m$ remains single in $M$. Then 
\[B(m^{i}_{0})=B(m^{i}_{1})=\ldots=B(m^{i}_{n-1})=1.\] 
For every woman $w$ we can choose $k,j<n$ so
that $\Pair(m_{k},w_{j})$ holds, and so by \lemref{sm1} it follows
that $B(w_{j}^{i})=1$. But $B(w^i_0) = B(w^o_0) = 0$, so the elements
of the sequence 
\[B(w_{0}^{i}),B(w_{1}^{i}),\ldots,B(w_{n-1}^{i})\]
must change their values from zero to one at least once, and by
\lemref{sm2} they change their values exactly once. Thus by
\lemref{sm1} and the definition of $M$, every woman is married to exactly
one man.  Since $m$ was excluded, we have $n$ women paired with at most
$n-1$ men, and this contradicts the pigeonhole principle $\PHP(n-1,M)$.
\end{proof}

\begin{proof}[Proof of \theoref{sm1}] By \lemref{sm3}, we know that $M$ is a perfect matching. Thus it only remains to show that $M$ satisfies the stability condition. Suppose not.  Then there exist men $a,b$ and women $x,y$ such that $x$ is married to $a$ and $y$ is married to $b$, but man $a$ prefers $y$ to $x$ and woman $y$ prefers $a$ to $b$. Since $x$ is married to $a$, by how $M$ was constructed and \lemref{sm2}, there are some $k<n$ and $p<n$ such that $\Pair(a_{k},x_{p})$, and 
\begin{align}
&B(a_{0}^{i})= B(a_{1}^{i})=\ldots = B(a_{k}^{i})=1. \label{eq:sm1}
\end{align}
Similarly, since $y$ is married to $b$, there are some $\ell<n$ and $q<n$ such that $\Pair(b_{\ell},y_{q})$, and 
\begin{align}
B(y_{0}^{i})= B(y_{1}^{i})=\ldots = B(y_{q}^{i})=0. \label{eq:sm2}
\end{align}
Now by the definition of $\Pair$ there must be some $s,t<n$ such that $\Pair(a_{s},y_{t})$ holds.  But since man $a$ prefers $y$ to $x$, and woman $y$ prefers $a$ to $b$, we have $s<k$ and $t<q$. Thus from \eref{sm1}  and \eref{sm2}, we get $B(a_{s}^{i})=1$ and $B(y_{t}^{i})=0$ respectively. Hence, $y$ is also married to $a$, and this contradicts \lemref{sm3}.
\end{proof}

Fix a stable marriage instance $\calf{I}$ with $n$ men and $n$ women, and
let $\M$ be the function computed by the comparator circuit $C_\calf{I}$.
Let $\smset$ denote the set of all stable marriages of $\calf{I}$, and
let $\fxpset$ denote the set of all Boolean
fixed points of $\M$ which extend the input $I_0$ defined from $\calf{I}$.
Note that $\smset$ and $\fxpset$ are exponentially large sets, so they
are not really objects of our theories. In other words, we write
$M\in \smset$ to denote that $M$ satisfies a formula asserting the stable
marriage property, and we write  $I \in \fxpset$ to denote that $I$
satisfies a formula asserting that $I$ is a fixed point of $\M$. 
It was proved in \cite{Sub94} that  there is a one-to-one correspondence
between $\smset$ and $\fxpset$, and that the matchings extracted from
$I_{c(n)}[\ast \rightarrow 0]$  and $I_{c(n)}[\ast \rightarrow 1]$ are
man-optimal and woman-optimal respectively. We now show how to formalize
these results.

We define $F:\smset\rightarrow \fxpset$ to be a function that  takes as input a stable marriage $M$ of $\calf{I}$, and outputs a sequence $I\in \set{0,1}^{2n^{2}}$ defined as follows. For every man $m$ and every woman $w$ that are matched in $M$, if $j,k<n$ are subscripts such that $\Pair(m_{j},w_{k})$ holds, then we assign 
\begin{align}
 \mbox{$I(m_{0}^{i}) = \ldots = I(m_{j}^{i}) = 1$ and $I(m_{j+1}^{i}) = \ldots = I(m_{n-1}^{i}) = 0$} \label{eq:mdef} \\
 \mbox{$I(w_{0}^{i}) = \ldots = I(w_{k}^{i}) = 0$ and $I(w_{k+1}^{i}) = \ldots = I(w_{n-1}^{i}) = 1$.} \label{eq:wdef}
\end{align}
From this definition of $F$, we can prove the following lemma. 

\begin{lemma}\label{lem:corres}
($\VCC\proves$)
The function  $F:\smset\rightarrow \fxpset$ is a bijection. 
\end{lemma}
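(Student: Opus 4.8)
The plan is to prove that $F$ is a bijection by exhibiting the extraction function $G$ (defined just before Theorem \ref{theo:sm1}) as a two-sided inverse. Since Theorem \ref{theo:sm1} already tells us that $G$ sends every Boolean fixed point of $\M$ to a stable marriage, and since every Boolean fixed point automatically extends $I_0$ (the output wires $m_0^o$ and $w_0^o$ carry the constant values $1$ and $0$ and are never targets of any comparator gate, so any fixed point $B$ must satisfy $B(m_0^i)=1$ and $B(w_0^i)=0$, and $I_0$ has Boolean values only at these positions), it suffices to establish three facts, all provable by $\Sigma_0^B$ reasoning in $\VCC$: (i) $F(M)\in\fxpset$ for every $M\in\smset$; (ii) $G(F(M))=M$ for every $M\in\smset$; and (iii) $F(G(B))=B$ for every $B\in\fxpset$. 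Facts (ii) and (iii) say that $F$ and $G$ are mutually inverse, and together with (i) and Theorem \ref{theo:sm1} they give that $F\colon\smset\to\fxpset$ is a bijection.

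First I would prove (i), which is the heart of the argument and the one place where stability of $M$ is essential. Given $M\in\smset$, write $j_m$ for the rank at which each man $m$ is married and $k_w$ for the rank at which each woman $w$ is married, so that $F(M)$ sets $m_a^i$ to $1$ exactly when $a\le j_m$ and $w_b^i$ to $0$ exactly when $b\le k_w$; in particular $F(M)$ extends $I_0$ since $0\le j_m$ and $0\le k_w$. I would feed $F(M)$ into $C_\calf{I}$ and track the effect of each comparator gate $\seq{m_a^i,w_b^i}$, noting that each input wire lies in exactly one such gate, so the gates act on disjoint wire pairs. A gate $\seq{m_a^i,w_b^i}$ alters its two values only when $m_a^i=1$ and $w_b^i=0$, i.e. when $a\le j_m$ and $b\le k_w$; and unless $(m,w)$ is the marriage pair (where $\Pair(m_{j_m},w_{k_w})$ holds) both inequalities are strict, so $m$ strictly prefers $w$ to his partner and $w$ strictly prefers $m$ to hers, making $(m,w)$ a blocking pair and contradicting stability. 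Hence the only gates that change anything are the marriage-pair gates, each flipping $m_{j_m}^i$ from $1$ to $0$ and $w_{k_w}^i$ from $0$ to $1$. A short $\Sigma_0^B$ induction over the forwarding gates $\seq{p_a^i,p_{a+1}^o}$ then shows that each output wire reproduces the threshold pattern, i.e. $F(M)(p_a^o)=F(M)(p_a^i)$ for every person $p$ and every $a$. Thus $F(M)$ is a Boolean fixed point of $\M$, so $F(M)\in\fxpset$.

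For (ii), since $B=F(M)$ is a fixed point we have $B(m_j^o)=B(m_j^i)$ and $B(w_k^o)=B(w_k^i)$ for the unique $j,k$ with $\Pair(m_j,w_k)$. By definition $G(B)$ marries $m$ and $w$ iff $B(m_j^o)=1$ and $B(w_k^o)=0$, i.e. iff $j\le j_m$ and $k\le k_w$; exactly as in (i), stability forces this to hold only for the marriage pair, so $G(F(M))=M$. For (iii), let $B\in\fxpset$ and $M=G(B)$, which is a stable marriage by Theorem \ref{theo:sm1} and in particular a perfect matching, so every man and woman is married. By Lemma \ref{lem:sm2} the sequence $B(m_0^i),\dots,B(m_{n-1}^i)$ descends from $1$ to $0$ at exactly one index and, dually, $B(w_0^i),\dots,B(w_{n-1}^i)$ ascends from $0$ to $1$ at exactly one index; by the definition of $G$ these break-points are precisely the marriage ranks $j_m$ and $k_w$ of $M$. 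Hence $B$ already has the threshold shape prescribed by $F$, so $F(G(B))=B$.

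I expect step (i) to be the main obstacle: it is the only part that genuinely uses stability, and it requires a careful (though elementary) tracking of the comparator gates through $C_\calf{I}$. Once the blocking-pair observation is in place, everything reduces to the threshold bookkeeping already captured by Lemmas \ref{lem:sm1} and \ref{lem:sm2}, and the remaining verifications are routine $\Sigma_0^B$ inductions over the gates and positions of the circuit.
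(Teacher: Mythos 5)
Your proposal is correct and follows essentially the same route as the paper: you exhibit the extraction function $G$ as a two-sided inverse of $F$, where your step (i) is exactly the paper's Lemma~\ref{lem:corres1} (well-definedness of $F$ via the stability/no-blocking-pair argument), and your steps (ii) and (iii) supply the injectivity and surjectivity that the paper dispatches by asserting that $G$ inverts $F$. The only difference is one of detail, not of method: you fill in the threshold-shape bookkeeping (via Lemmas~\ref{lem:sm1}, \ref{lem:sm2} and \ref{lem:sm3}) that the paper leaves as ``not hard to check,'' modulo the harmless edge case where a person is married to their last choice and the sequence has no descent at all.
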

We first need to verify  that the range of $F$ is indeed contained in $\fxpset$.
\begin{lemma}\label{lem:corres1}
($\VCC\proves$)
If $M$ is a stable marriage, then $I=F(M)$ is a fixed point of $\M$.
\end{lemma}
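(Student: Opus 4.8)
The plan is to verify $\M(I)=I$ directly, by comparing for each person $p$ and each position $c<n$ the value $\M(I)(p_c^o)$ produced by the circuit $C_\calf{I}$ with the threshold value $I(p_c^i)$ prescribed by \eref{mdef}--\eref{wdef}. Since a stable marriage $M$ is in particular a perfect matching, every person is matched, so these threshold values are well defined: for the man $m$ matched to woman $w$ with $\Pair(m_j,w_k)$ we have $I(m_a^i)=1$ exactly for $a\le j$ and $I(w_a^i)=0$ exactly for $a\le k$. Because $I$ is Boolean, the outputs of $C_\calf{I}$ on input $I$ are governed by the forwarding structure of the circuit recorded in \lemref{sm1}: after the first layer of comparator gates each output wire $p_{c+1}^o$ receives the value carried by $p_c^i$ at that layer, while the wires $m_0^o$ and $w_0^o$ keep their fixed inputs $1$ and $0$. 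In particular $\M(I)(m_0^o)=1=I(m_0^i)$ and $\M(I)(w_0^o)=0=I(w_0^i)$, which also shows that $I$ extends $I_0$, so that proving the fixed-point property will place $I$ in $\fxpset$.

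For a man $m$ and a position $c\ge 1$ the forwarding gives $\M(I)(m_c^o)=I(m_{c-1}^i)\wedge I(w_b^i)$, where $\Pair(m_{c-1},w_b)$; I split on whether $c$ lies below or above $m$'s threshold $j$. If $c>j$ then either $c-1>j$, so $I(m_{c-1}^i)=0$ and the conjunction is $0=I(m_c^i)$; or $c-1=j$, in which case $w_b$ is exactly $m$'s partner $w$ and $I(w_b^i)=I(w_k^i)=0$ by \eref{wdef}, again giving $0=I(m_c^i)$. If $c\le j$ then $I(m_{c-1}^i)=1$, so it remains only to show $I(w_b^i)=1$. The dual computation $\M(I)(w_c^o)=I(m_a^i)\vee I(w_{c-1}^i)$ handles the women by the symmetric case split, the residual claim there being $I(m_a^i)=0$.

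The one place where the stability of $M$ is used is precisely these residual claims. In the man case, $w_b$ is $m$'s $(c-1)$-st choice with $c-1<j$, so $m$ \emph{strictly} prefers $w_b$ to his assigned partner $w$. If we had $I(w_b^i)=0$, i.e.\ $m$ lay at or below $w_b$'s threshold, then (since $m$ is not $w_b$'s partner) $w_b$ would strictly prefer $m$ to her own partner, making $(m,w_b)$ a blocking pair and contradicting stability. Hence $m$ sits strictly beyond $w_b$'s threshold and $I(w_b^i)=1$, as needed; the woman case is dual. This conversion of the no-blocking-pair condition into the positional inequality ``$m$ lies strictly beyond $w_b$'s threshold'' is the crux of the argument and the step I expect to require the most care. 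Everything else is a bounded case analysis over persons and positions $<n$, together with the $\Sigma_0^B$-expressible stability predicate, so the whole verification is a $\Sigma_0^B$ statement requiring neither induction nor counting, and is routinely formalized in $\VCC$.
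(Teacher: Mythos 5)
Your proof is correct and takes essentially the same approach as the paper's: your wire-by-wire verification via the forwarding structure of $C_\calf{I}$ is exactly the paper's combination of \lemref{sm1} with its Claim, and your two residual claims (pinning down $I(w_b^i)=1$, resp.\ $I(m_a^i)=0$, for choices strictly preferred to the assigned partner) are precisely the $(\Rightarrow)$ direction of that Claim, established by the identical blocking-pair/stability argument. The only difference is organizational: you inline the case analysis that the paper compresses into ``this follows from the Claim and \lemref{sm1}.''
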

\begin{proof}
We start by stating the following:
\begin{myclaim}
For every pair of wires $(m_{j}^{i},w_{k}^{i})$ satisfying $\Pair(m_{j},w_{k})$,
we have $(I(m_{j}^{i}),I(w_{k}^{i}))=(1,0)$ iff man $m$ is matched to woman $w$
in $M$.
\end{myclaim}
To see that $I$ is a fixed point of $\M$ it suffices to show that equations 
(\ref{eq:mdef}) and (\ref{eq:wdef}) hold with the superscripts $i$ replaced
by $o$.  This follows from the Claim and Lemma \ref{lem:sm1} with $B=I$,
together with the observation that for every man $m$ and woman $w$, the
circuit $C_\calf{I}$ assigns the outputs $m^o_0 = 1$ and $w^o_0 = 0$.

It remains to prove the Claim.  The direction $(\Leftarrow)$ follows
immediately from the definition of $I$.  To prove the direction
$(\Rightarrow)$, suppose that for some man $m$ and woman $v$, 
$\Pair(m_{\ell},v_{s})$ holds for some $\ell,s < n$ and 
$(I(m_{\ell}^{i}),I(v_{s}^{i}))=(1,0)$ but $m$ is not matched to $v$.
Then $m$ is matched to some other woman $w$, and $\Pair(m_j,w_k)$
holds for some $j,k< n$.  Since $I(m_\ell^i)=1$, it follows from
(\ref{eq:mdef}) that $\ell < j$, so $m$ prefers $v$ to $w$.
Since $I(v_{s}^{i}) = 0$, it follows from (\ref{eq:wdef}) applied
to the woman $v$ that $v$ prefers $m$ to the man that she is matched
with.  Therefore the marriage is not stable.
\end{proof}

\begin{proof}[Proof of \lemref{corres}] From \lemref{corres1}, we know that the function $F$ is properly defined. It also follows from how $F$ was defined that two distinct stable marriages will get mapped to distinct fixed points of $\M$, and hence $F$ is injective. It only remains to show that $F$ is surjective. But then it is not hard to check that the function $G$ defined
before Theorem \ref{theo:sm1} is a left inverse of $F$.
\end{proof}

We next want to show that  $G(I_{c(n)}[\ast \rightarrow 0])$  and $G(I_{c(n)}[\ast \rightarrow 1])$ are man-optimal and woman-optimal stable marriages of $\calf{I}$ respectively. A technical difficult is that it might be tricky to compare $G(I_{c(n)}[\ast \rightarrow 0])$  and $G(I_{c(n)}[\ast \rightarrow 1])$ with $G(J)$ for some arbitrary  Boolean fixed point $J$ of $\M$. However the following lemma shows that every Boolean fixed point of $\M$ is an extension of $I_{c(n)}$, which means that it suffices to work with only Boolean fixed-point extensions of $I_{c(n)}$.

\begin{lemma}\label{lem:bext}
($\VCC\proves$) If $J$ is a Boolean fixed point of $\M$, then $J$ extends $I_{\ell}$ for every $\ell \le c(n)$.
\end{lemma}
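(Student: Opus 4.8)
The plan is to prove the lemma by $\Sigma_0^B$-induction on $\ell$, with essentially all the work resting on a single structural fact: the function $\M$ is \emph{monotone} with respect to the information ordering $\sqsubseteq$ in which $\ast$ lies below both $0$ and $1$ (and $0,1$ are incomparable). Note that ``$J$ extends $I$'' is precisely ``$I \sqsubseteq J$'', so the goal is to show $I_\ell \sqsubseteq J$ for every $\ell \le c(n)$. Monotonicity of $\M$ holds because each three-valued comparator gate is monotone for $\sqsubseteq$ (a finite check: refining an input from $\ast$ to a Boolean value can only refine, never flip, each of the two outputs $p\wedge q$ and $p\vee q$); lifting this to the whole circuit $C_\calf{I}$ is done by $\Sigma_0^B$-induction on the gates, in exactly the style of argument already used in \propref{fxp0}. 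Here one uses that when $\M$ is evaluated on two different inputs the output wires start from the \emph{same} fixed values ($1$ on each $m_0^o$ and $0$ elsewhere), so the layer-by-layer comparison is between two runs that differ only on the input wires.

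For the base case $\ell=0$ I would argue that any Boolean fixed point $J$ extends $I_0$. Since $I_0$ is Boolean only at the wires $m_0^i$ (value $1$) and $w_0^i$ (value $0$), it suffices to show $J(m_0^i)=1$ and $J(w_0^i)=0$. The key observation is that the output wires $m_0^o$ and $w_0^o$ take part in no comparator gate of $C_\calf{I}$: the only gates touching output wires are those of the form $p_j^i \to p_{j+1}^o$, whose targets are wires $p_{j+1}^o$ with $j+1\ge 1$, so no gate targets a wire $p_0^o$, and output wires are never sources. Hence $m_0^o$ and $w_0^o$ retain their fixed input values, giving $\M(J)(m_0^o)=1$ and $\M(J)(w_0^o)=0$. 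Using $J=\M(J)$ together with the identity $J(p_0^i)=J(p_0^o)$ recorded just before \theoref{sm1}, we conclude $J(m_0^i)=1$ and $J(w_0^i)=0$, i.e.\ $I_0 \sqsubseteq J$.

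The induction step is then immediate: assuming $I_\ell \sqsubseteq J$, monotonicity of $\M$ and the fixed-point property give $I_{\ell+1}=\M(I_\ell)\sqsubseteq \M(J)=J$. Since the predicate ``$I_\ell \sqsubseteq J$'' is $\Sigma_0^B$ once the run $I_0,I_1,\dots,I_{c(n)}$ has been read off the single comparator-circuit computation of $\M^{c(n)}$ (the witnessing matrix of the $\ccv$ formula), $\Sigma_0^B$-induction on $\ell$ is available inside $\VCC$ and the argument goes through. I expect the only genuine obstacle to be the formalization of monotonicity of $\M$ within $\VCC$ — packaging the two parallel circuit evaluations and carrying the $\sqsubseteq$-invariant through every gate by $\Sigma_0^B$-induction on gate depth — whereas the base case and the one-line induction step are routine.
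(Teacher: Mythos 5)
Your proof is correct, and it is formalizable in $\VCC$ for the reasons you indicate; the base case is in fact identical to the paper's (the paper also argues that $\M$ forces output $1$ on every $m_0^o$ and $0$ on every $w_0^o$ regardless of its input, then invokes $J=\M(J)$). Where you genuinely diverge is the induction step. The paper does \emph{not} isolate monotonicity of $\M$ with respect to the information ordering $\sqsubseteq$; instead it re-runs, almost verbatim, the wire-tracing argument from \lemref{fxp}: if $I_\ell(m_k^o)=v\in\set{0,1}$ with $k\ge 1$, then $v$ must have been forwarded from $m_{k-1}^i$, whose value after the gate $\seq{m_{k-1}^i,w_r^i}$ (where $\Pair(m_{k-1},w_r)$ holds) is determined by the non-$\ast$ values of $I_{\ell-1}$ at $m_{k-1}^i$ and/or $w_r^i$; these persist in $J$ by the induction hypothesis, so $J=\M(J)$ forces $J(m_k^o)=v$. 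That is an inlined, structure-specific instance of exactly the monotonicity you prove in general. Your packaging — prove once that each three-valued comparator gate is $\sqsubseteq$-monotone, lift it through the circuit by $\Sigma_0^B$-induction on gate depth (the same two-run comparison the paper itself uses to prove \propref{fxp0}), and then conclude $I_{\ell+1}=\M(I_\ell)\sqsubseteq\M(J)=J$ in one line — is more modular: the same lemma would also streamline \lemref{fxp} (via $I_0\sqsubseteq\M(I_0)$ and induction) and \propref{fxp0}, and it never mentions the particular pairing/forwarding structure of $C_{\calf{I}}$. What the paper's route buys in exchange is that it avoids setting up the two-parallel-evaluations formalism for this lemma and reuses an argument pattern already written out, at the cost of repeating the structural analysis. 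One small caution for the formalized version: besides the single big circuit witnessing the iterates $I_0,\ldots,I_{c(n)}$, your monotonicity comparison also needs the evaluation of $C_{\calf{I}}$ on the given fixed point $J$ as a separate witness (available from the $\CCVA$ axiom), which you do acknowledge but which must be carried as an explicit parameter of the $\Sigma_0^B$ induction.
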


\begin{proof} We prove by $\Sigma_{0}^{B}$ induction on $\ell\le c(n)$. Base case ($\ell=0$): we have $I_{0}(m_{0}^{o})=1$ for every man $m$ and   $I_{0}(w_{0}^{o})=0$ for every woman $w$. But from how $\M$ was constructed, $\M$ always outputs value one on wire $m_{0}^{o}$ for every man $m$ and zero on wire $w_{0}^{o}$ for every woman $w$. Thus since $J$ is a Boolean fixed point of $\M$, we also have $J(m_{0}^{o})=1$ for every man $m$ and $J(w_{0}^{o})=1$ for every man $w$, and hence $J$ extends $J_{0}$.

For the induction step, we are given $\ell$ such that $0<\ell \le c(n)$, and assume that $J$ extends $I_{\ell}$. We want to show that  for every person $p$ and $k<n$, if $I_{\ell}(p_{k}^{i})=v \in \set{0,1}$, then $J(p_{k}^{i})=v$. We will only argue for the case when $p$ is a man $m$ since the case when $p$ is a woman can be argued  similarly. We consider two cases.  We may have $k=0$, then we can argue as in the base case. Otherwise, we have $k\ge 1$, then since $I_{\ell}(m_{k}^{o})=v\in \set{0,1}$, from how $\M$ was constructed,  the output wire $p_{k}^{o}$ must have got its non-$\ast$ value $v$ from the wire $m_{k-1}^{i}$, which in turn must have carried value $v$ before transferring it to wire $m_{k}^{o}$. But then we observe that wire $m_{k-1}^{i}$ is connected to some wire $w_{r}^{i}$ by a comparator gate (i.e. $\Pair(m_{k-1},w_{r})$ holds)  before being connected by a comparator gate to $m_{k}^{o}$. Thus from the definition of three-valued comparator gate, the value $v$ produced on $m_{k-1}^{i}$ by the gate $\seq{m_{k-1}^{i},w_{r}^{i}}$ only depends on the non-$\ast$ value(s) of either $I_{m-1}(m_{k-1}^{i})$ or $I_{q-1}(w_{r}^{i})$ or both. In any of these cases, since $J$ extends $I_{\ell-1}$ (by the induction hypothesis), these non-$\ast$ values of $I_{q-1}$ will also be contained $J$. But since $J=\M(J)$, we get $J(m_{k}^{o})=v$.
\end{proof}

From \lemref{corres} and \lemref{bext}, we can prove the following theorem.

\begin{theorem}\label{theo:manopt}
($\VCC\proves$) Let $M$ be a stable marriage of the $\SMP$ instance $\calf{I}$. Let 
\begin{align*}
M_{0} = G(I_{c(n)}[\ast \rightarrow 0]) && M_{1}= G(I_{c(n)}[\ast \rightarrow 1]). 
\end{align*}
Then $M_0$ and $M_1$ are
stable marriages, and every man gets a partner in $M_{0}$ no worse than the one he gets in $M$, and every woman gets a partner in $M_{1}$ no worse than the one she gets in $M$. In other words, $M_{0}$ and $M_{1}$ are  the man-optimal and woman-optimal solutions  respectively.
\end{theorem}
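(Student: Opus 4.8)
The plan is to reduce both optimality claims to a single structural comparison between Boolean fixed points, reusing the machinery already assembled. First I would dispose of stability: by \theoref{fxp} the sequence $I_{c(n)}$ is a three-valued fixed point of $\M$, so \propref{fxp0} makes both $I_{c(n)}[\ast \rightarrow 0]$ and $I_{c(n)}[\ast \rightarrow 1]$ Boolean fixed points, and then \theoref{sm1} guarantees that $G$ sends each to a stable marriage; this already yields the first assertion. For the optimality itself I would fix the given stable marriage $M$ and pass to its Boolean fixed point $J = F(M)$, which is legitimate since \lemref{corres1} shows $J$ is indeed a fixed point and \lemref{corres} shows $G(J)=M$, so the partner of each person in $M$ can be read off from the block structure of $J$ exactly as it is read off from $B_0 := I_{c(n)}[\ast \rightarrow 0]$ and $B_1 := I_{c(n)}[\ast \rightarrow 1]$.

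The crux is the extension property: by \lemref{bext} every Boolean fixed point of $\M$, in particular $J$, extends $I_{c(n)}$, meaning $J$ agrees with $I_{c(n)}$ on every coordinate where $I_{c(n)}$ is Boolean. By \lemref{sm2} each man's input sequence in any Boolean fixed point is a block of $1$'s followed by a block of $0$'s, and by the definition of $F$ in \eref{mdef} together with $G$ the position $j$ of the last $1$ is exactly the rank of the man's partner in his preference list, a smaller $j$ meaning a more preferred partner. For man-optimality I would argue that in $B_0$ a man's $1$'s occur precisely at the coordinates where $I_{c(n)}$ is already $1$, whereas in $J=F(M)$ those same coordinates are $1$ (since $J$ extends $I_{c(n)}$) plus possibly more. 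Because both sequences are prefix blocks of $1$'s, the transition position in $B_0$ is at most that in $J$, so each man's partner in $M_0 = G(B_0)$ has rank no larger than his partner's rank in $M$; that is, every man does at least as well in $M_0$, which is man-optimality.

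Woman-optimality for $M_1 = G(B_1)$ is dual: by the second part of \lemref{sm2} each woman's sequence is a block of $0$'s followed by a block of $1$'s, with the position of the last $0$ equal to the rank of her partner (again smaller being better). In $B_1$ the $0$'s sit exactly where $I_{c(n)}$ is already $0$, and since $J$ extends $I_{c(n)}$ these coordinates remain $0$ in $J$, so $B_1$ has at least as many leading $0$'s as $J$, forcing each woman's partner in $M_1$ to have rank no larger than in $M$. I expect the main obstacle to be the careful bookkeeping that ties the monotone block structure of \lemref{sm2} to the preference rank through $\Pair$ and the definitions of $F$ and $G$, and then packaging the ``more $1$'s (resp.\ $0$'s) forces a smaller transition index'' comparison as a $\Sigma_0^B$ statement provable in $\VCC$ by a short coordinate-by-coordinate induction. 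No counting beyond what \corref{VTC} already supplies is needed, so the whole argument stays inside the established framework.
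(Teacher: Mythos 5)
Your proof is correct and rests on the same lemma chain as the paper's: \theoref{fxp}, \propref{fxp0} and \theoref{sm1} for stability, then \lemref{corres1}, \lemref{corres}, \lemref{bext} and \lemref{sm2} for optimality. What differs is the packaging of the optimality step. The paper argues by contradiction: assuming man $m$ prefers his $M$-partner $w$ to his $M_0$-partner $u$, it uses the block structure of \lemref{sm2} and the fact that $O=I_{c(n)}[\ast \rightarrow 0]$ has ones exactly where $I_{c(n)}$ does to get $O(m_{j}^{i})=1$, then proves a separate Claim ($O(w_{k}^{i})=0$, via a case analysis forcing $I_{c(n)}(w_{k}^{i})=\ast$), so that $G$ would marry $m$ to $w$ in $M_0$ as well, contradicting that $M_0$ is a perfect matching (\lemref{sm3}). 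You avoid both the contradiction and the Claim: since the ones of $B_0=I_{c(n)}[\ast\rightarrow 0]$ are exactly the ones of $I_{c(n)}$, and every Boolean fixed point --- in particular $J=F(M)$ --- extends $I_{c(n)}$ by \lemref{bext}, each man's prefix block of ones in $B_0$ is contained in his block in $J$; both are prefixes by \lemref{sm2}, so his transition index (which equals his partner's rank, via \eref{mdef} and the bijection of \lemref{corres}) in $B_0$ is at most that in $J$. This direct ``simultaneous minimality'' comparison is a genuine streamlining of the same underlying idea: it buys a shorter argument with no case analysis, at the price of having to make the correspondence between transition index and partner rank explicit, which the paper sidesteps by working pointwise with the marriage condition of $G$.

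One wording slip to fix in the woman case: you write that ``$B_1$ has at least as many leading $0$'s as $J$,'' but the containment you establish in the preceding clause (the zeros of $B_1$ remain zeros in $J$) gives the reverse inequality --- $J$ has at least as many leading zeros as $B_1$, i.e.\ the transition index of $B_1$ is at most that of $J$. That corrected inequality is exactly what your stated conclusion (each woman's partner rank in $M_1$ is no larger than in $M$) requires, so the argument stands once that sentence is flipped; as written, the sentence contradicts both its own premise and its own conclusion.
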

\begin{proof} We  only prove that $M_{0}$ is man-optimal since the proof that $M_{1}$ is woman-optimal is similar. From \lemref{corres} and \lemref{bext}, if we let $K = F(M)$, then $K$ is a Boolean fixed point of $\M$ extending the three-valued fixed point $I_{c(n)}$ and $K$ uniquely determines  $M$. Suppose for a contradiction that some man $m$ gets a better partner in $M$ than in $M_{0}$. Let $w$ and $u$ be the women $m$ marries in $M$ and $M_{0}$, and assume that $j,k,\ell,s<n$ are subscripts such that $\Pair(m_{j},w_{k})$ and $\Pair(m_{\ell},u_{s})$ hold. For brevity, let $O=I_{c(n)}[\ast \rightarrow 0]$. Then from how $M$ and $M_{0}$ are constructed, we have $(K(m_{j}^{i}),K(w_{k}^{i})) = (1,0)$ and  $(O(m_{\ell}^{i}),O(u_{s}^{i})) = (1,0)$. Note that we construct $O$ by substituting zeros for all $\ast$-values in $I_{c(n)}$, so we must have $I_{c(n)}(m_{\ell}^{i})=1$ originally. By \lemref{sm2}, we have $O(m_{0}^{i})=\ldots = O(m_{\ell}^{i})=1$. But since $m$ prefers $w$ to $u$, we also have $j<\ell$, and hence $O(m_{j}^{i})=1$. Since we cannot introduce additional ones to $I_{c(n)}$ to get $O$, we also have $I_{c(n)}(m_{j}^{i})=1$. We next show the following claim, which will imply a contradiction since $m$ cannot marry both $w$ and $u$ in 
the stable marriage $M_{0}$.
\begin{myclaim}
We must have $O(w_{k}^{i})=0$.
\end{myclaim}

We cannot have $(I_{c(n)}(m_{j}^{i}),I_{c(n)}(w_{k}^{i})) = (1,0)$; otherwise, $m$ has no choice but to marry $w$ in both $M$ and $M_{0}$ since both $K$ and $O$ are extensions of $I_{c(n)}$. This forces $I_{c(n)}(w_{k}^{i})=\ast$. But then since we must substitute zeros for all $\ast$-values when producing $O$, we have  $O(w_{k}^{i})=0$ .
\end{proof}

\begin{theorem}\label{theo:mosm}
($\VCC\proves$)  $\MOSM$ and $\WOSM$ are $\AC^{0}$-many-one-reducible to $\CCVN$.
\end{theorem}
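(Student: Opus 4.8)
The plan is to express the man-optimal (resp.\ woman-optimal) decision as the output of a single designated wire of a comparator circuit with negation, built by stacking copies of the three-valued circuit $C_\calf{I}$ and then passing to double-rail Boolean logic. First I would fix an instance of $\MOSM$ to be a stable marriage instance $\calf{I}$ with $n$ men and $n$ women together with a designated pair $(m,w)$. In $\AC^0$ one can compute the subscripts $j,k<n$ for which $\Pair(m_j,w_k)$ holds, since this only inspects the preference lists. By \theoref{manopt} the man-optimal marriage is $M_0 = G(I_{c(n)}[\ast\rightarrow 0])$ with $c(n)$ as in \eref{c}, and by the definition of $G$ the pair $(m,w)$ is married in $M_0$ exactly when $B(m_j^o)=1$ and $B(w_k^o)=0$, where $B = I_{c(n)}[\ast\rightarrow 0]$. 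Thus it suffices to build a (negation-augmented) comparator circuit computing this Boolean combination of two output bits of $I_{c(n)}$.

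To produce $I_{c(n)}$ as a comparator-circuit output I would stack $c(n)$ copies of the three-valued circuit $C_\calf{I}$ as in \figref{f3}, feeding $I_0$ into the first block and wiring the output wires of each block into the input wires of the next; by \theoref{fxp} the output wires of the last block carry $I_{c(n)} = \M^{c(n)}(I_0)$. This is a three-valued comparator circuit, so I apply the double-rail simulation of \theoref{tcv}, replacing each three-valued wire by a pair of Boolean wires under the encoding $0 = \seq{0,0}$, $1 = \seq{1,1}$, $\ast = \seq{0,1}$. The crucial observation is that under this encoding the first rail of any wire carries precisely its $[\ast\rightarrow 0]$ value and the second rail carries its $[\ast\rightarrow 1]$ value (immediate from the three cases). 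Hence the $[\ast\rightarrow 0]$ substitution needed for $M_0$ is free: the bit $B(m_j^o)$ is just the first rail of $m_j^o$ and $B(w_k^o)$ is the first rail of $w_k^o$.

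It remains to combine these two bits into one output. Since the $\MOSM$ answer is $B(m_j^o)\wedge\neg B(w_k^o)$, I add a single negation gate on the first-rail wire of $w_k^o$ and then one comparator gate pairing it with the first-rail wire of $m_j^o$, declaring the $\wedge$-output of that gate the designated wire. The resulting object is an instance of $\CCVN$ whose designated wire outputs $1$ iff $(m,w)$ is married in $M_0$, i.e.\ iff the $\MOSM$ answer is yes. The whole map is a many-one $\AC^0$-reduction: the number of blocks $c(n)$ is polynomial, the gate positions inside each identical block and the double-rail gadgets are $\Sigma_0^B$-definable (\theoref{szb}), the computation of $(j,k)$ is $\Sigma_0^B$, and $\mathsf{FAC}^0$ is closed under composition. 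The $\WOSM$ reduction is identical except that one reads the second rails (realizing $[\ast\rightarrow 1]$) and invokes the woman-optimal half of \theoref{manopt}. Correctness is provable in $\VCC$ because every ingredient is already a $\VCC$-theorem: that $I_{c(n)}$ is a fixed point (\theoref{fxp}), that $M_0,M_1$ are man- and woman-optimal (\theoref{manopt}), and that the double-rail circuit faithfully simulates $C_\calf{I}$ (\theoref{tcv}).

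The reduction itself is mechanical given the machinery already assembled, so I do not expect a genuine obstacle. The one point demanding care is the conceptual bridge in the middle: recognizing that the two substitutions $[\ast\rightarrow 0]$ and $[\ast\rightarrow 1]$ are literally the two rails of the \theoref{tcv} encoding, and that the query ``is $(m,w)$ married in $M_0$?'' collapses to $B(m_j^o)\wedge\neg B(w_k^o)$ and so needs exactly one negation gate -- this is precisely what forces the target to be $\CCVN$ rather than plain $\CCV$.
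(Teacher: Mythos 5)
Your proposal is correct and follows essentially the same route as the paper: stack $c(n)$ copies of $C_\calf{I}$ to compute $I_{c(n)}$ as in \figref{f3}, pass to Boolean wires via the double-rail encoding of \theoref{tcv}, and then use one negation gate plus a comparator gate on the two relevant rails to produce the designated output, with correctness in $\VCC$ resting on Theorems \ref{theo:fxp}, \ref{theo:manopt} and \ref{theo:tcv}. Your observation that the first and second rails literally carry the $[\ast\rightarrow 0]$ and $[\ast\rightarrow 1]$ values is just a cleaner restatement of the paper's case analysis of the codes $\seq{1,1}$, $\seq{0,0}$, $\seq{0,1}$, and it makes the $\WOSM$ case (which the paper only says ``works similarly'') explicit.
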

\begin{proof} We will show only the reduction from $\MOSM$ to $\CCVN$ since the reduction from $\WOSM$ to $\CCVN$ works similarly.

Following the above construction, we can write a $\Sigma_{0}^{B}$-formula defining an $\AC^{0}$ function that takes as input an instance of $\MOSM$ with preference lists for all the men and women, and produces a three-valued comparator circuit that computes the three-valued fixed point $I_{c(n)}=\M^{c(n)}(I_{0})$, and then extracts the man-optimal stable marriage from $I_{c(n)}[\ast\rightarrow 0]$. Although the first step of computing $I_{c(n)}=\M^{c(n)}(I_{0})$ can easily be done as shown in the example from \figref{f3}, the second step of computing $I_{c(n)}[\ast\rightarrow 0]$ from the output  $I_{c(n)}$ and extracting the stable marriage cannot be trivially done using a three-valued comparator circuit. However, we can apply the construction from the proof of  \theoref{tcv} to simulate the three-valued computation of $I_{c(n)}=\M^{c(n)}(I_{0})$ using an instance of $\CCVN$, where we can then utilize the available negation gates, $\wedge$-gates and $\vee$-gates to build the necessary gadget to decide if a designated pair of man and woman are married in the man-optimal stable marriage. The use of negation gates is essential in our construction.

Let $\calf{I}$ be an instance of $\MOSM$, where $(m,w)$ is the designated pair of man and woman. Let $M$ denote the man-optimal stable marriage of $\calf{I}$. We choose $j,k$  such that $\Pair(m_{j},w_{k})$ holds. Then we recall that  $(m,w)\in M$ iff  $I_{c(n)}[\ast\rightarrow 0](m_{j}^{o})=1$ and $I_{c(n)}[\ast\rightarrow 0](w_{k}^{o})=0$. Observe that $I_{c(n)}[\ast\rightarrow 0](m_{j}^{o})=1$ and $I_{c(n)}[\ast\rightarrow 0](w_{k}^{o})=0$ iff 
\begin{align}
\bigl(I_{c(n)}(m_{j}^{o}),I_{c(n)}(w_{k}^{o})\bigr) = (1,0) \;\vee\;  \bigl(I_{c(n)}(m_{j}^{o}),I_{c(n)}(w_{k}^{o})\bigr) = (1,\ast).
\label{eq:mosm.1}
\end{align} 
Let $C$ denote the three-valued circuit computing $I_{c(n)}=\M^{c(n)}(I_{0})$. Then \eref{mosm.1} simply asserts that the wire carrying $I_{c(n)}(m_{j}^{o})$ of $C$ must output 1 and the wire carrying $I_{c(n)}(w_{k}^{o})$ of $C$ must output either $0$ or $\ast$. Let $C'$ be the boolean comparator circuit that simulates the three-valued computation of $C$ using the construction from the proof of  \theoref{tcv}, where now we use a pair of wires in $C'$ to simulate each three-valued wire of $C$.
From \eref{mosm.1} it suffices to check that $I_{c(n)}(m_{j}^{o})$ is
coded by $\seq{1,1}$ and $I_{c(n)}(w_{k}^{o})$ has first component 0
in its code (the two possibilities are $\seq{0,0}$ and $\seq{0,1}$).
This checking is easily done with comparator gates, together with
a negation gate to verify the 0 output.
\end{proof}

\corref{lfm} and Theorems \ref{theo:tcv}, \ref{theo:me} and \ref{theo:mosm} give us the following corollary.

\begin{corollary}($\VCC\proves$)
The ten problems $\MOSM$, $\WOSM$, $\SMP$, $\CCV$, $\CCVN$, $\TCV$, $\TLFMM$,  $\LFMM$,  $\TVLFMM$ and  $\VLFMM$ are all equivalent under many-one $\AC^{0}$-reductions, where the equivalence of $\SMP$ is
with respect to the search problem version of the reduction defined
in Definition~\ref{d:manyOne}.
\end{corollary}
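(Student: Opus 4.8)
The plan is to assemble the stated equivalence purely by transitivity of many-one $\AC^0$-reductions, using the four cited results as building blocks. Transitivity holds because $\mathsf{FAC}^0$ is closed under composition (the composition of two $\Sigma_0^B$-definable functions is again $\Sigma_0^B$-definable), so if $R_1 \mred^{\AC^0} R_2$ and $R_2 \mred^{\AC^0} R_3$ then $R_1 \mred^{\AC^0} R_3$; moreover this composition and the resulting equivalences are all formalizable in $\VCC$. By \corref{lfm} the six decision problems $\CCV$, $\TVLFMM$, $\VLFMM$, $\CCVN$, $\TLFMM$ and $\LFMM$ already form a single $\mred^{\AC^0}$-equivalence class $\mathcal{E}$; it remains to fold in $\TCV$, $\MOSM$, $\WOSM$ and the search problem $\SMP$.

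First I would add $\TCV$: \theoref{tcv} gives $\TCV \equiv \CCV$ outright, so $\TCV \in \mathcal{E}$. Next, for $\MOSM$ and $\WOSM$ I would close a two-sided loop through $\mathcal{E}$. \theoref{me} supplies $\TLFMM \mred^{\AC^0} \MOSM$ and $\TLFMM \mred^{\AC^0} \WOSM$, while \theoref{mosm} supplies $\MOSM \mred^{\AC^0} \CCVN$ and $\WOSM \mred^{\AC^0} \CCVN$. Since $\TLFMM$ and $\CCVN$ both lie in $\mathcal{E}$, transitivity gives $\MOSM \equiv \CCVN \equiv \TLFMM$ and likewise for $\WOSM$, so both join $\mathcal{E}$.

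The one genuinely non-routine case is $\SMP$, and here the search-versus-decision distinction is the main obstacle. One direction is immediate: \theoref{me} gives $\TLFMM \mred^{\AC^0} \SMP$ as search problems, and since every decision problem in $\mathcal{E}$ is a single-valued function problem that reduces to $\TLFMM$, composing with the search reduction $\TLFMM \mred^{\AC^0} \SMP$ yields that all of $\mathcal{E}$ reduces to $\SMP$. For the converse I would invoke the fixed-point construction underlying \theoref{mosm}, namely \theoref{fxp} and \theoref{manopt}: from an $\SMP$ instance $\calf{I}$ one $\AC^0$-constructs the three-valued comparator circuit computing $I_{c(n)} = \M^{c(n)}(I_0)$, and the man-optimal marriage $G(I_{c(n)}[\ast \rightarrow 0])$ is then extracted by the $\Sigma_0^B$ function $G$. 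Crucially the entire marriage is read off from the output of a \emph{single} comparator-circuit computation (after the $\TCV$-to-$\CCV$ simulation of \theoref{tcv}), not from polynomially many oracle calls; this is exactly what makes it a legitimate many-one $\AC^0$ search reduction $\SMP \mred^{\AC^0} \CCV$ rather than merely an oracle reduction. Combining the two directions places $\SMP$ in $\mathcal{E}$ as well, so all ten problems are mutually equivalent. The only point requiring care in the write-up is to phrase the $\SMP$ clauses with respect to the search-problem reduction of Definition~\ref{d:manyOne}, and to verify that transitivity across the search/decision boundary (a function problem composed with the extraction function $G$) stays within the $\mred$-many-one form.
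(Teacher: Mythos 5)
Your proposal is correct and takes essentially the same approach as the paper's proof: it assembles the equivalence by transitivity from \corref{lfm}, \theoref{tcv}, \theoref{me} and \theoref{mosm}, exactly the four ingredients the paper cites. The only difference is that you spell out the direction $\SMP \mred^{\AC^{0}} \CCV$ (as search problems, via the fixed-point circuit of \theoref{fxp} and \theoref{manopt} together with the extraction function $G$), a direction the paper's own two-line proof leaves implicit in the machinery of \theoref{mosm}.
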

\begin{proof}  \corref{lfm} and \theoref{tcv} show that $\CCV$, $\CCVN$, $\TCV$, $\TLFMM$,  $\LFMM$,  $\TVLFMM$ and  $\VLFMM$  are all equivalent under many-one $\AC^{0}$-reductions.

\theoref{mosm} shows that  $\MOSM$ and $\WOSM$ are
$\AC^{0}$-many-one-reducible to $\TCV$. \theoref{me} also shows that $\TLFMM$ is $\AC^0$-many-one-reducible to $\MOSM$, $\WOSM$, and $\SMP$. Hence,   $\MOSM$, $\WOSM$, and $\SMP$  is equivalent to the above problems under many-one $\AC^0$-reductions. 
\end{proof}

\section{Conclusion and future work}
Our correctness proof of the reduction from $\SMP$ to $\CCV$ is a nice
example showing the utility of three-valued logic for reasoning
about uncertainty.  Since an  instance of $\SMP$ might not have a unique
solution, the fact that the fixed point $I_{c(n)}=\M^{c(n)}(I_{0})$
is three-valued indicates that the construction cannot fully determine how
all the men and women can be matched. Thus, different Boolean fixed-point
extensions of   $I_{c(n)}$ give us different stable marriages. 

It is worth noting that Subramanian's method is not the ``textbook'' method  for solving $\SMP$. The most well-known is the Gale-Shapley algorithm \cite{GS62}. In fact, our original motivation was to formalize the correctness of the Gale-Shapley algorithm, but we do not know how to talk about the computation of the Gale-Shapley algorithm in $\VCC$ due to the fan-out restriction in comparator circuits. Thus, we leave open the question whether $\VCC$ proves the correctness of the Gale-Shapley algorithm.


We believe that $\CC$ deserves more attention, since on the one hand it
contains interesting complete problems, but on the other hand we have
no real evidence (for example based on relativized inclusions) concerning
whether $\CCV$ is complete for $\P$, and if not, whether it is
comparable to $\NC$.   The perfect matching problem  (for bipartite graphs or 
general undirected graphs) shares these same open questions with $\CCV$.  However
several randomized $\NC^2$ algorithms are known for perfect matching
\cite{KUW86,MVV87}, but no randomized $\NC$ algorithm is known for any
$\CC$-complete problem.

Another open question is whether the three $\CCV$ complexity
classes mentioned in (\ref{e:CCs}) coincide, which is equivalent
to asking whether $\CC$ (the closure of $\CCV$ under 
many-one $\AC^0$-reductions) is closed under oracle $\AC^0$-reductions,
or equivalently whether the function class $\FCC$ is closed under
composition.  A possible way to show this would be to show the existence
of universal comparator circuits, but we do not know whether such circuits exist.

The analogous question for standard complexity classes such as
$\TC^0$, $\L$, $\NL$, $\NC$, $\P$ has an affirmative answer.
That is, each class can be defined as the $\AC^0$ many-one closure
of a complete problem, and the result turns out to be also closed
under $\AC^0$ oracle reducibilities.  (A possible exception is the
function class $\#\L$, whose $\AC^0$ oracle closure is the $\#\L$
hierarchy \cite{AO96}. This contains the integer determinant
as a complete problem.)

\bibliographystyle{plain}
\bibliography{wphp}

\end{document}